\newcommand{\ts}{\textsuperscript} 
\newtheorem{theorem}{Theorem}
\newtheorem{definition}{Definition}
\newtheorem{fact}{Fact}
\newtheorem{lemma}[theorem]{Lemma}
\newtheorem{prop}[theorem]{Proposition}
\newtheorem{cor}{Corollary}[theorem]
\theoremstyle{remark}
\newtheorem{assumption}[theorem]{Assumption}
\newtheorem*{info-theorem}{Informal Statement of Theorem}
\newtheorem*{info-prop}{Informal Statement of Proposition}
\newcommand{\mc}{\mathcal}
\newcommand{\colspan}[1]{\texttt{ColSpan}(#1)}
\newcommand{\nullspace}[1]{\texttt{null}(#1)}
\newcommand{\rank}[1]{\texttt{rank}({#1})}
\newcommand{\transpose}{\mathsf{T}}
\renewcommand{\vec}[1]{\boldsymbol{\mathbf{#1}}} 
\renewcommand{\Re}{\mathbb{R}}
\newcommand{\vct}[1]{\boldsymbol{\mathbf{#1}}} 
\newcommand{\Rea}{\mathbb{R}}
\newcommand{\Q}{\mathbb{Q}}
\newcommand{\Ce}{\mathbb{C}}
\newcommand{\bigO}{\mathcal{O}}
\title{A Manifold of Polynomial Time Solvable Bimatrix Games}
\author{Joseph L. Heyman\thanks{The author is with the ECE department at OSU. The author was fully supported by the United
States Military Academy and the Army Advanced
Civil Schooling (ACS) program}}
\begin{document}
\begin{titlingpage}
\maketitle
\pagestyle{plain}
\begin{abstract}
This paper identifies a manifold in the space of bimatrix games which contains games that are strategically equivalent to rank-1 games through a positive affine transformation. It also presents an algorithm that can compute, in polynomial time, one such rank-1 game which is strategically equivalent to the original game. Through this approach, we substantially expand the class of games that are solvable in polynomial time. It is hoped that this approach can be further developed in conjunction with other notions of strategic equivalence to compute exact or approximate Nash equilibria in a wide variety of bimatrix games. 
\end{abstract}
\end{titlingpage}

\section{Introduction}\label{sec:Intro}
The study of game theory -- the model of strategic interaction between rational agents -- has wide ranging applicability within the fields of economics, engineering, and computer science. Computing a Nash equilibrium (NE) in a $k$-player finite game is one of the fundamental problems in game theory.\footnote{Nash equilibrium is an acceptable and a widely used solution concept for games; we define it later in the paper.} Due to the well known theorem by Nash in 1951, we know that every finite game has a solution, possibly in mixed strategies \cite{nash1951}. However, computation of a NE has been shown to be hard. For $k\geq3$, \cite{daskalakis2009complexity,daskalakis2005three} proved that computing a NE is Polynomial Parity Argument, Directed Version (PPAD) complete. Indeed, even the $2$-player case has been shown to be PPAD-complete \cite{chen2009}.

In this work we focus on finite, $2$-player bimatrix games in which the payoffs to the players can be represented as two matrices, $A$ and $B$. With the hardness of computing a NE in the bimatrix case well established, identifying classes of games whose solutions can be computed efficiently is an active area of research. As an example in bimatrix games, one can compute the solution of zero-sum games\footnote{A game is called a zero-sum game if the sum of payoffs of the players is zero for all actions of the players.} in polynomial time using the minimax theorem and a simple linear program \cite[p.~152]{von2007theory}. Another subclass of games that admits a polynomial time solution is the class of \textit{strategically zero-sum} games introduced by Moulin and Vial in 1978 \cite{moulin1978strategically}.

More recently, in a series of works \cite{adsul2011rank,adsul2019} the authors have developed polynomial time algorithms for solving another subclass of bimatrix games called rank-$1$ games. Rank-$1$ games are defined as games where the sum of the two payoff matrices is a rank-$1$ matrix. Here we note that this subclass lies in a space of $\Re^{m\times n+m+n}$.

 We say that two games, $G$ and $G^\prime$, are \textit{strategically equivalent} if the optimal strategies of every player in $G$ corresponds to the optimal strategies of every player in $G^\prime$.\footnote{See Section \ref{sec:prelim} for a formal definition and the specific form of strategic equivalence that we study in this work.} Unfortunately, many operations that preserve the strategic equivalence of bimatrix games modify the rank of the game.  For example, the well studied constant-sum game is strategically equivalent to a zero-sum game. 
 However, the zero-sum game has rank zero, while the constant-sum game is a rank-$1$ game. Since the rank of a game influences the most suitable solution technique one should be particularly interested in determining if a given game $G^\prime$ is strategically equivalent to a game $G$, where the rank of $G$ is less than the rank of $G^\prime$.
  
  In this contribution, we do just that. We show that, in polynomial time, it is possible to determine if the game $G^\prime$ lies in a manifold of size $\Re^{m\times n+2m+2n+4}$. If so, we show that the game $G^\prime$ is strategically equivalent to some rank-$1$ game $G$. Furthermore, we show that one can also calculate this rank-$1$ game in polynomial time. As we will show, our algorithm is surprisingly straightforward and significantly expands the space of games which can be solved in polynomial time.

\subsection{Prior Work}\label{subsec:priorWork}
Many papers in the literature have explored the concept of equivalent classes of games. One such concept is \textit{strategic equivalence}, those games that share exactly the same set of NE. 
Indeed, for a classical example, von Neumann and Morgenstern studied strategically equivalent $n$-person zero-sum games \cite[p.~245]{von2007theory} and constant-sum games \cite[p.~346]{von2007theory}.

Closely related to our work is the class of \textit{strategically zero-sum} games defined by Moulin and Vial in \cite{moulin1978strategically}.  
For the bimatrix case, they provide a complete characterization of strategically zero-sum games \cite[Theorem~2]{moulin1978strategically}. While the authors do not analyze the algorithmic implications of their characterization, Kontogiannis and Spirakis do analyze the approach in their work on mutually concave games.  They find that the characterization in \cite[Theorem~2]{moulin1978strategically} can determine whether a bimatrix game is strategically zero-sum, and, if so, calculate an equivalent zero-sum game in time $\bigO{(m^3n^3)}$.

In \cite{kontogiannis2012mutual}, Kontogiannis and Spirakis formulate a quadratic program where the optimal solutions of the quadratic program constitute a subset of the correlated equilibria of a bimatrix game. Furthermore, they then show that this subset of correlated equilibria are exactly the NE of the game. In order to show polynomial tractability of the quadratic program, they define a class of \textit{mutually concave} bimatrix games. Surprisingly, they find that their characterization of \textit{mutually concave} bimatrix games is equivalent to Moulin and Vial's characterization of \textit{strategically zero-sum} bimatrix games \cite[Corollary~2]{kontogiannis2012mutual}.  
They then propose a parameterized version of the Mangasarian and Stone quadratic program that is guaranteed to be convex for a mutually concave game and has time complexity $\bigO{(n^{3.5})}$ for $2\leq m\leq n$ \cite[Theorem~2]{kontogiannis2012mutual}. With the problem of solving a mutually concave game shown to be tractable, the authors then proceed to show that recognizing a mutually concave game can be done in time $\bigO{(m^2n)}$.

As far as we are aware, no other authors have specifically studied games that are strategically equivalent to rank-$1$ games.  With the recently developed fast algorithms \cite{adsul2011rank,adsul2019}  for solving rank-$1$ games, now seems like an appropriate time to do just that.

\subsection{Our Contribution}\label{subsec:ourContribution}
Given a nonzero-sum bimatrix game, $(m,n,\tilde{A},\tilde{B})\in\Q^{m\times n}\times\Q^{m\times n}$, we develop the \textsc{SER1} algorithm that determines whether or not the given game lies in a manifold of size $\Q^{m\times n+2m+2n+4}$. If so, then the game is \textit{strategically equivalent} to some rank-$1$ game $(m,n,A,-A+\vec{r}\vec{c}^\transpose)$. Our approach relies on the classical linear algebra result known as the Wedderburn Rank Reduction Theorem and the theory of matrix pencils. However, we show that it is possible to avoid many of the traditional solution concepts applied to matrix pencils and determine if there exists a solution such that the pencil is a rank-$1$ pencil via solving a single quadratic expression. Thus, our approach is surprisingly simple and leads to an amazingly fast algorithm.  In Section \ref{sec:algSER1} we show that both the determination of strategic equivalence and the computation of the strategically equivalent rank-$1$ game can be done in time $\bigO{(mn+M(\mc{L}))}$, where $\mc{L}$ is the bit-length of the largest absolute value of entries in $(\tilde{A},\tilde{B})$, and $M(\mc{L})$ is the complexity of multiplication. This equivalent rank-$1$ game can then be solved in polynomial time \cite{adsul2011rank,adsul2019}. Since the two games are strategically equivalent, the NE strategies of the equivalent rank-$1$ game are exactly the NE strategies of the original nonzero-sum bimatrix game $(m,n,\tilde{A},\tilde{B})$.

We thus show a significant expansion to the class of bimatrix games that can be solved in polynomial time.

An astute observer whom is familiar with the field would likely recognize that our problem can easily subsume the \textit{strategically zero-sum} case.  However, with multiple existing polynomial time algorithms for that case \cite{moulin1978strategically,kontogiannis2012mutual,heyman2018SERO}, we choose to specifically focus on strategically equivalent rank-$1$ games.

\subsection{Notation}\label{subsec:notation}
We use $\vct{1}_n$ and $\vct{0}_n$ to denote, respectively, the all ones and all zeros vectors of length $n$. All vectors are annotated by bold font, e.g $\vct{u}$, and all vectors are treated as column vectors. $\Delta_n$ is the set of probability distributions over $ \{1,\ldots,n\}$, where $\Delta_{n}=\big\{\mathbf{p} \mid p_i\geq0,\forall i\in \{1,\dots,n\},\sum_{i=1}^n p_i = 1  \big\}$.
Let $\vct{e}_j$, $j \in \{1,2,...,n\}$, denote the vector with $1$ in the $j$\ts{th} position and $0$'s elsewhere. 

Consider a matrix $C$. We use $\rank{C}$ to indicate the rank of the matrix $C$. $\colspan{C}$ indicates the subspace spanned by the columns of the matrix $C$, also known as the column space of the matrix $C$. We indicate the nullspace of the matrix $C$, the space containing all solutions to $Cx=\vct{0}_m$, as
$\nullspace{C}$. In addition, we use $C^{(j)}$ to denote the j\ts{th} column of $C$ and $C_{(i)}$ to denote the i\ts{th} row of $C$. 

\section{Preliminaries}\label{sec:prelim}
In this section, we recall some basic definitions in bimatrix games and the definition of strategic equivalence in bimatrix games.

We consider here a two player game, in which player 1 (the row player) has $m$ actions and player 2 (the column player) has $n$ actions. Player 1's set of pure strategies is denoted by $S_1=\{1,\dots,m\}$ and player 2's set of pure strategies is $S_2=\{1,\dots,n\}$. If the players play pure strategies $(i,j)\in S_1 \times S_2$, then player 1 receives a payoff of $a_{i,j}$ and player 2 receives $b_{i,j}$.

We let $A=[a_{i,j}]\in \Rea^{m\times n}$ represent the payoff matrix of player 1 and $B=[b_{i,j}]\in \Rea^{m\times n}$ represent the payoff matrix of player 2. As the two-player finite game can be represented by two matrices, this game is commonly referred to as a bimatrix game. The bimatrix game is then defined by the tuple $(m,n,A,B)$. Define the $m\times n$ matrix $C$ as the sum of the two payoff matrices, $C \coloneqq A+B$. We define the rank of a game as $\rank{C}$.\footnote{Some authors define the rank of the game to be the maximum of the rank of the two matrices $A$ and $B$, but this is not the case here.}  

Players may also play mixed strategies, which correspond to a probability distribution over the available set of pure strategies. Player 1 has mixed strategies $\vct{p}$ and player 2 has mixed strategies $\vct{q}$, where $\vct{p}\in \Delta_m$ and $\vct{q}\in \Delta_n$. Using the notation introduced above, player 1 has expected payoff $\vct{p}^\transpose A \vct{q}$ and player 2 has expected payoff $\vct{p}^\transpose B\vct{q}$. 

A Nash Equilibrium is defined as a tuple of strategies $(\vct{p}^*,\vct{q}^*)$ such that each player's strategy is an optimal response to the other player's strategy. In other words, neither player can benefit, in expectation, by unilaterally deviating from the Nash Equilibrium. This is made precise in the following definition.
\begin{definition}[Nash Equilibrium \cite{nash1951}]\label{def:NE}
	We refer to the pair of strategies $(\vct{p}^*,\vct{q}^*)$ as a Nash Equilibrium (NE) if and only if:
	\begin{equation*}
	\vct{p}^{*\transpose }A\vct{q^*}\geq \vct{p}^{\transpose}A\vct{q^*} \quad
	\forall \; \vct{p}\in\Delta_m;\quad
	\vct{p}^{*\transpose }B\vct{q^*}\geq \vct{p}^{*\transpose}B\vct{q} \ \; \quad 
	\forall \; \vct{q}\in\Delta_n.
	\end{equation*}
\end{definition}
It is a well known fact due to Nash \cite{nash1951} that every bimatrix game with finite set of pure strategies has at least one NE in mixed strategies. However, one can define games in which multiple NE exist in mixed strategies. Let $\Phi:\Rea^{m\times n}\times \Rea^{m\times n}\rightrightarrows \Delta_m\times\Delta_n$ be the Nash equilibrium correspondence\footnote{A correspondence is a set valued map \cite[p. 555]{ali2006}.}: Given the matrices $(A,B)$, $\Phi(A,B)\subset  \Delta_m\times\Delta_n$ denotes the set of all Nash equilibria of the game $(m,n,A,B)$. Note that due to the result in \cite{nash1951}, $\Phi(A,B)$ is nonempty for every $(A,B)\in \Rea^{m\times n}\times \Rea^{m\times n}$.

We say that two games are strategically equivalent if both games have the same set of players, the same set of strategies per player, and the same set of Nash equilibria. The following definition formalizes this concept.
\begin{definition} \label{def:stratEqNE}
	The 2-player finite games $(m,n,A,B)$ and $(m,n,\tilde{A},\tilde{B})$ are strategically equivalent iff $\Phi(A,B) = \Phi(\tilde A, \tilde B)$.
\end{definition}

We now have a well known Lemma on strategic equivalence in bimatrix games that is typically stated without proof.\footnote{See e.g \cite{moulin1978strategically,kontogiannis2012mutual}.} For completeness, the proof is in Appendix \ref{app:randomProofs}. 

\begin{lemma}\label{lem:stratEqVec}
	Let $A,B\in\Rea^{m\times n}$ be two matrices.
	Let $\tilde{A} = \alpha_1A+\beta_1\vct{1}_m \vct{u}^\transpose$ and $\tilde{B} = \alpha_2B+\beta_2\vct{v}\vct{1}_n^\transpose$ where $\beta_1,\beta_2 \in \Rea$, $\alpha_1,\alpha_2 \in \Rea_{>0}$, $\vct{u}\in\Rea^n$, and $\vct{v}\in\Rea^m$. Then the game $(m,n,A,B)$ is strategically equivalent to $(m,n,\tilde{A},\tilde{B})$.
\end{lemma}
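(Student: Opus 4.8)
The plan is to show directly from Definition~\ref{def:NE} that a strategy pair $(\vct{p}^*, \vct{q}^*)$ is a NE of $(m,n,A,B)$ if and only if it is a NE of $(m,n,\tilde A,\tilde B)$; since this establishes $\Phi(A,B) = \Phi(\tilde A,\tilde B)$, Definition~\ref{def:stratEqNE} then gives the claim. By symmetry between the two players it suffices to handle player~1: I will show that for any fixed $\vct{q}^*\in\Delta_n$, the set of $\vct{p}^*$ satisfying $\vct{p}^{*\transpose}A\vct{q}^* \geq \vct{p}^{\transpose}A\vct{q}^*$ for all $\vct{p}\in\Delta_m$ coincides with the set satisfying $\vct{p}^{*\transpose}\tilde A\vct{q}^* \geq \vct{p}^{\transpose}\tilde A\vct{q}^*$ for all $\vct{p}\in\Delta_m$, and then the mirror-image argument handles player~2 with $\tilde B$.

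The key computation is to expand $\vct{p}^{\transpose}\tilde A\vct{q}^*$ using $\tilde A = \alpha_1 A + \beta_1 \vct{1}_m\vct{u}^{\transpose}$. We get
\begin{equation*}
\vct{p}^{\transpose}\tilde A\vct{q}^* = \alpha_1\,\vct{p}^{\transpose} A\vct{q}^* + \beta_1\,(\vct{p}^{\transpose}\vct{1}_m)(\vct{u}^{\transpose}\vct{q}^*) = \alpha_1\,\vct{p}^{\transpose} A\vct{q}^* + \beta_1\,(\vct{u}^{\transpose}\vct{q}^*),
\end{equation*}
where the last equality uses $\vct{p}^{\transpose}\vct{1}_m = 1$ because $\vct{p}\in\Delta_m$. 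The crucial point is that the additive term $\beta_1(\vct{u}^{\transpose}\vct{q}^*)$ does not depend on $\vct{p}$, so it cancels when comparing $\vct{p}^*$ against a deviation $\vct{p}$: the inequality $\vct{p}^{*\transpose}\tilde A\vct{q}^* \geq \vct{p}^{\transpose}\tilde A\vct{q}^*$ is equivalent to $\alpha_1\,\vct{p}^{*\transpose}A\vct{q}^* \geq \alpha_1\,\vct{p}^{\transpose}A\vct{q}^*$, and since $\alpha_1 > 0$ we may divide through to recover exactly $\vct{p}^{*\transpose}A\vct{q}^* \geq \vct{p}^{\transpose}A\vct{q}^*$. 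The identical argument applied to $\tilde B = \alpha_2 B + \beta_2 \vct{v}\vct{1}_n^{\transpose}$, using $\vct{1}_n^{\transpose}\vct{q} = 1$ for $\vct{q}\in\Delta_n$ and $\alpha_2 > 0$, shows player~2's best-response condition is likewise preserved. Combining the two equivalences yields $(\vct{p}^*,\vct{q}^*)\in\Phi(A,B)\iff(\vct{p}^*,\vct{q}^*)\in\Phi(\tilde A,\tilde B)$.

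There is no real obstacle here — the result is elementary. The only things to be careful about are: making sure the rank-one perturbation is attached to the correct side for each player (the $\vct{1}_m\vct{u}^{\transpose}$ term is constant across player~1's deviations precisely because it multiplies $\vct{q}^*$ on the right and $\vct{p}$ contributes only through $\vct{p}^{\transpose}\vct{1}_m=1$, and symmetrically for $\vct{v}\vct{1}_n^{\transpose}$ and player~2); and using the strict positivity of $\alpha_1,\alpha_2$ to justify dividing without flipping the inequality. I would present this as a short two-paragraph proof: one paragraph reducing to per-player best-response sets and doing the player~1 expansion, one paragraph noting the symmetric case and concluding.
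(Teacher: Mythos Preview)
Your proposal is correct and follows essentially the same approach as the paper's proof: expand $\vct{p}^\transpose\tilde A\vct{q}$ and $\vct{p}^\transpose\tilde B\vct{q}$, use $\vct{p}^\transpose\vct{1}_m=1$ and $\vct{1}_n^\transpose\vct{q}=1$ to isolate an additive term independent of the deviating player's strategy, and divide by the positive scalars $\alpha_1,\alpha_2$ to obtain the equivalence of best-response conditions. The paper presents the two players' cases in parallel rather than invoking symmetry, but the argument is otherwise identical.
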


\section{Problem Formulation and Main Results}\label{sec:probForm}
From Lemma \ref{lem:stratEqVec}, we conclude that if there exists $\beta_1,\beta_2 \in \Re$, $\alpha_1,\alpha_2 \in \Re_{>0}$, $\vct{u}\in\Re^n$, and $\vct{v}\in\Re^m$ such that:
\begin{align}
    \tilde{A} &= \alpha_1A+\beta_1\vct{1}_m \vct{u}^\transpose, \label{eq:AtildeRank1}\\
    \tilde{B} &= -\alpha_2A+\alpha_2\vec{r}\vec{c}^\transpose+\beta_2\vct{v}\vct{1}_n^\transpose \label{eq:BtildeRank1}
\end{align}
then $(m,n,\tilde{A},\tilde{B})$ is strategically equivalent  to the rank-$1$ game $(m,n,A,-A+\vec{r}\vec{c}^\transpose)$ via a positive affine transformation (PAT). Here we note that there may be other transforms, even nonlinear transforms, which maintain the set of NE for a specific game. We don't consider those transforms and focus on the class of PATs.  Throughout this work, any mention of strategic equivalence refers to strategic equivalence via a PAT.

Combining \eqref{eq:AtildeRank1} and \eqref{eq:BtildeRank1}, we have:
\begin{align}
		\tilde{A} &= -\frac{\alpha_1}{\alpha_2} \tilde{B}+\alpha_1\vec{r}\vec{c}^\transpose+ \beta_1\vct{1}_m \vct{u}^\transpose+\frac{\alpha_1}{\alpha_2}\beta_2\vct{v}\vct{1}_n^\transpose. \label{eq:AtildeRank1Combined}
\end{align}
    
Defining $\gamma\coloneqq\frac{\alpha_1}{\alpha_2}$, $D\coloneqq\beta_1\vct{1}_m \vct{u}^\transpose+\gamma\beta_2\vct{v}\vct{1}_n^\transpose$, and $\vec{\hat{r}}\vec{\hat{c}}^\transpose\coloneqq\alpha_1\vec{r}\vec{c}^\transpose$, we rewrite \eqref{eq:AtildeRank1Combined} as:
\begin{equation}\label{eq:defineD}
    \tilde{A}+\gamma\tilde{B}=\beta_1\vct{1}_m \vct{u}^\transpose+\gamma\beta_2\vct{v}\vct{1}_n^\transpose+\alpha_1\vec{r}\vec{c}^\transpose=D+\vec{\hat{r}}\vec{\hat{c}}^\transpose.
\end{equation}

In addition, let us define $\tilde{C}(\gamma)\coloneqq \tilde{A}+\gamma\tilde{B}$ and the subspace $\mc{M}_{m\times n}(\Re)$, which we define as
\begin{equation*}
    \mc{M}_{m\times n}(\Re)=\Big\{M\in\Rea^{m\times n}\Big\vert \text{ there exists $\vct{u}\in\Re^n$, and $\vct{v}\in\Re^m$ such that } M=\vct{1}_m\vct{u}^\transpose+\vct{v}\vct{1}_n^\transpose\Big\}.
\end{equation*}
From \eqref{eq:defineD}, it is clear that if $(m,n,\tilde{A},\tilde{B})$ is strategically equivalent to the rank-$1$ game $(m,n,A,-A+\vec{r}\vec{c}^\transpose)$ via a PAT, then there exists a $\gamma\in \Re_{>0}$ such that $\rank{\tilde{C}(\gamma)}\leq3$. However, while $\rank{\tilde{C}(\gamma)}\leq3$ is a necessary condition, it is not sufficient.  For sufficiency, we also require the existence of $D\in\mc{M}_{m\times n}(\Re)$, $\vec{\hat{r}}\in\Re^m$, and $\vec{\hat{c}}\in\Re^n$ such that $\tilde{C}(\gamma)=D+\vec{\hat{r}}\vec{\hat{c}}^\transpose$.\footnote{By the definition of $D$, we have $0\leq\rank{D}\leq2$. This implies that $1\leq\rank{\tilde{C}(\gamma)}\leq3$.} 

Thus, what we have shown above is the following result:
\begin{prop}\label{prop:forward}
$(m,n,\tilde{A},\tilde{B})$ is strategically equivalent to $(m,n,A,-A+\vec{r}\vec{c}^\transpose)$ through a PAT if and only if $\tilde{C}(\gamma)=D+\vec{\hat{r}}\vec{\hat{c}}^\transpose$, where $D\in\mc{M}_{m\times n}(\Re)$.
\end{prop}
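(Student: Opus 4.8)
The statement is an ``if and only if'' whose forward (``only if'') direction is already the computation carried out in \eqref{eq:AtildeRank1}--\eqref{eq:defineD} above, so the plan is to record that direction cleanly and then supply the converse by explicitly inverting the construction. For the forward direction I would begin from the hypothesis that $(m,n,\tilde{A},\tilde{B})$ is strategically equivalent to $(m,n,A,-A+\vec{r}\vec{c}^\transpose)$ through a PAT, which by definition furnishes $\alpha_1,\alpha_2\in\Re_{>0}$, $\beta_1,\beta_2\in\Re$, $\vec{u}\in\Re^n$, $\vec{v}\in\Re^m$ satisfying \eqref{eq:AtildeRank1} and \eqref{eq:BtildeRank1}. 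Eliminating $A$ between those two equations gives \eqref{eq:AtildeRank1Combined}, and grouping the terms as in \eqref{eq:defineD} exhibits $\tilde{C}(\gamma)=D+\vec{\hat{r}}\vec{\hat{c}}^\transpose$ with $\gamma:=\alpha_1/\alpha_2\in\Re_{>0}$, $D:=\beta_1\vct{1}_m\vec{u}^\transpose+\gamma\beta_2\vec{v}\vct{1}_n^\transpose\in\mc{M}_{m\times n}(\Re)$, and $\vec{\hat{r}}\vec{\hat{c}}^\transpose:=\alpha_1\vec{r}\vec{c}^\transpose$. That is exactly the claimed decomposition.

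For the converse I would take as given a $\gamma\in\Re_{>0}$, a matrix $D=\vct{1}_m\vec{u}^\transpose+\vec{v}\vct{1}_n^\transpose\in\mc{M}_{m\times n}(\Re)$, and vectors $\vec{\hat{r}}\in\Re^m$, $\vec{\hat{c}}\in\Re^n$ with $\tilde{C}(\gamma)=\tilde{A}+\gamma\tilde{B}=D+\vec{\hat{r}}\vec{\hat{c}}^\transpose$, and then \emph{construct} a rank-$1$ game together with the PAT witnessing the equivalence. I would set $A:=\tilde{A}-\vct{1}_m\vec{u}^\transpose$, $\vec{r}:=\vec{\hat{r}}$, and $\vec{c}:=\vec{\hat{c}}$. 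Then \eqref{eq:AtildeRank1} holds with $(\alpha_1,\beta_1)=(1,1)$ and this $\vec{u}$; and rearranging the hypothesis as $\gamma\tilde{B}=-\tilde{A}+\vct{1}_m\vec{u}^\transpose+\vec{v}\vct{1}_n^\transpose+\vec{\hat{r}}\vec{\hat{c}}^\transpose=-A+\vec{v}\vct{1}_n^\transpose+\vec{\hat{r}}\vec{\hat{c}}^\transpose$ and dividing by $\gamma$ shows $\tilde{B}=-\tfrac{1}{\gamma}A+\tfrac{1}{\gamma}\vec{r}\vec{c}^\transpose+\tfrac{1}{\gamma}\vec{v}\vct{1}_n^\transpose$, which is \eqref{eq:BtildeRank1} with $(\alpha_2,\beta_2)=(1/\gamma,1/\gamma)$; here $\alpha_2=1/\gamma>0$ precisely because $\gamma\in\Re_{>0}$. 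Lemma \ref{lem:stratEqVec}, applied with $B=-A+\vec{r}\vec{c}^\transpose$, then yields that $(m,n,\tilde{A},\tilde{B})$ is strategically equivalent through a PAT to $(m,n,A,-A+\vec{r}\vec{c}^\transpose)$, and the latter is a rank-$1$ game since its combined matrix is $A+(-A+\vec{r}\vec{c}^\transpose)=\vec{\hat{r}}\vec{\hat{c}}^\transpose$, of rank at most one.

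I do not expect a genuine obstacle: the proposition is essentially a bookkeeping lemma that repackages the PAT constraints \eqref{eq:AtildeRank1}--\eqref{eq:BtildeRank1}. The one place that needs care is the converse's allocation of the two rank-one summands of $D$: the $\vct{1}_m\vec{u}^\transpose$ term must be absorbed into the affine correction attached to $\tilde{A}$ (hence folded into the definition of $A$), while the $\vec{v}\vct{1}_n^\transpose$ term must be absorbed into the correction attached to $\tilde{B}$; swapping their roles destroys the structure required by \eqref{eq:AtildeRank1}--\eqref{eq:BtildeRank1}. The other point worth stating explicitly is that one genuinely needs $\gamma>0$ rather than merely $\gamma\neq 0$, so that $\alpha_2=1/\gamma$ is an admissible positive scaling in the PAT — which is why $\gamma$ is quantified over $\Re_{>0}$ throughout this development.
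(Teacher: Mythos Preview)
Your proposal is correct, and the forward direction is exactly what the paper does: the paper's own ``proof'' is the single line ``The proof follows from the preceding discussions,'' meaning precisely the chain \eqref{eq:AtildeRank1}--\eqref{eq:defineD} that you reproduce.

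Where you differ from the paper is in the converse. The paper does \emph{not} prove the converse inside Proposition~\ref{prop:forward}; immediately after stating the proposition it says ``In what follows, we show the converse holds,'' and then spends Section~\ref{sec:proofMainResult} (Theorems~\ref{thm:rank1}, \ref{thm:rank1GammaUnknown}, \ref{thm:rank1Less3}) establishing it constructively via the Wedderburn rank-reduction formula. Your converse, by contrast, is a two-line algebraic inversion: set $A:=\tilde{A}-\vct{1}_m\vec{u}^\transpose$ and read off the PAT parameters $(\alpha_1,\beta_1,\alpha_2,\beta_2)=(1,1,1/\gamma,1/\gamma)$. This is entirely sufficient for the bare if-and-only-if statement and is more elementary than the paper's route. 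What the paper's heavier machinery buys is not the logical converse itself but the \emph{algorithmic} content: given only $(\tilde{A},\tilde{B})$, actually find $\gamma$, $D$, and $\hat{\vec{r}}\hat{\vec{c}}^\transpose$ in polynomial time. Your argument assumes the decomposition $\tilde{C}(\gamma)=D+\hat{\vec{r}}\hat{\vec{c}}^\transpose$ is handed to you, which is fine for the proposition but sidesteps the computational question that motivates the rest of the paper.
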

\begin{proof}
The proof follows from the preceding discussions.
\end{proof}

In what follows, we show the converse holds. Furthermore, in Section \ref{sec:proofMainResult} we devise methods for determining $\gamma$, $D$, and $\tilde{C}(\gamma)$. Finally, in Section \ref{sec:algSER1} we construct an algorithm that efficiently implements the results from Section \ref{sec:proofMainResult}.

For a matrix $M$, decompose it in a manner such that $ M = D+K$, where $D\in\mc{M}_{m\times n}$ and $K\not\in\mc{M}_{m\times n}$. Let $g(M)$ denote the matrix $K$. This function can be computed efficiently.
\begin{assumption}\label{assm:converse}
The game $(m,n,\tilde{A},\tilde{B})$ satisfies
\begin{enumerate}
    \item $D\in\mc{M}_{m\times n}$  \; $\forall\gamma\in\Re_{>0}$. 
    \item $g(\tilde C(\gamma^*))$ is a rank-1 matrix.
\end{enumerate}
\end{assumption}

\begin{theorem}
If Assumption \ref{assm:converse} holds, then there exists a matrix $\hat A\in\Rea^{m\times n}$ and vectors $\vec{\hat{r}}\in\Re^m$, $\vec{\hat{c}}\in\Re^n$ such that the bimatrix game $(m,n,\tilde{A},\tilde{B})$ is strategically equivalent to the rank-$1$ game $(m,n,\hat{A},-\hat{A}+\vec{\hat{r}}\vec{\hat{c}}^\transpose)$.
\end{theorem}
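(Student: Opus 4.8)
The strategy is to run the derivation leading to Proposition \ref{prop:forward} in reverse, using Assumption \ref{assm:converse} to manufacture the decomposition of $\tilde C(\gamma)$ that the equalities \eqref{eq:AtildeRank1}--\eqref{eq:defineD} require. First I would fix the value $\gamma^*\in\Re_{>0}$ furnished by the second part of Assumption \ref{assm:converse}, so that $K := g(\tilde C(\gamma^*))$ has rank $1$, and set $D := \tilde C(\gamma^*)-K$. By the first part of Assumption \ref{assm:converse}, $D\in\mc M_{m\times n}(\Re)$. Hence $\tilde C(\gamma^*)=D+K$ with $D\in\mc M_{m\times n}$ and $\rank{K}\le 1$, which is precisely the structure appearing on the right-hand side of \eqref{eq:defineD}.

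Next I would make this decomposition concrete and read off the transformation. Factor $K=\gamma^*\,\vec{\hat{r}}\vec{\hat{c}}^\transpose$ for suitable $\vec{\hat{r}}\in\Re^m$, $\vec{\hat{c}}\in\Re^n$ (possible since $\gamma^*\neq0$; take $\vec{\hat{r}}=\vct{0}_m$ if $K=0$), and write $D=\vct{1}_m\vct{u}^\transpose+\gamma^*\,\vct{v}\vct{1}_n^\transpose$ with $\vct{u}\in\Re^n$, $\vct{v}\in\Re^m$, which is possible because $D\in\mc M_{m\times n}$ and $\gamma^*\neq0$. Define $\hat A := \tfrac{1}{\gamma^*}\big(\tilde A-\vct{1}_m\vct{u}^\transpose\big)$. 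A one-line computation gives $\tilde A=\gamma^*\hat A+\vct{1}_m\vct{u}^\transpose$, and since $\tilde A+\gamma^*\tilde B=\tilde C(\gamma^*)=\vct{1}_m\vct{u}^\transpose+\gamma^*\vct{v}\vct{1}_n^\transpose+\gamma^*\vec{\hat{r}}\vec{\hat{c}}^\transpose$, it follows that $\tilde B=-\hat A+\vec{\hat{r}}\vec{\hat{c}}^\transpose+\vct{v}\vct{1}_n^\transpose$.

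These two identities are exactly of the form appearing in Lemma \ref{lem:stratEqVec}, applied with $A:=\hat A$, $B:=-\hat A+\vec{\hat{r}}\vec{\hat{c}}^\transpose$, $\alpha_1=\gamma^*\in\Re_{>0}$, $\beta_1=1$, $\vct{u}$, and $\alpha_2=1\in\Re_{>0}$, $\beta_2=1$, $\vct{v}$. Therefore $(m,n,\tilde A,\tilde B)$ is strategically equivalent to $(m,n,\hat A,-\hat A+\vec{\hat{r}}\vec{\hat{c}}^\transpose)$, and this is a rank-$1$ game since $\hat A+(-\hat A+\vec{\hat{r}}\vec{\hat{c}}^\transpose)=\vec{\hat{r}}\vec{\hat{c}}^\transpose$ has rank at most $1$. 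This would complete the proof.

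I expect the affine bookkeeping above to be the easy part and the substance to reside in Assumption \ref{assm:converse} itself. The delicate point -- postulated by the assumption rather than established in this argument -- is that for a general rank-$1$ term $\vec{\hat{r}}\vec{\hat{c}}^\transpose$ its projection onto $\mc M_{m\times n}$ need not vanish, so a priori $g(\tilde C(\gamma))$ could have rank $2$; the second part of Assumption \ref{assm:converse} is precisely the hypothesis that this does not occur at $\gamma^*$, while the first part guarantees the companion term $D$ genuinely lies in $\mc M_{m\times n}$, which is what makes Lemma \ref{lem:stratEqVec} applicable. The positivity $\gamma^*>0$ is likewise essential: it is where the word ``positive'' in ``positive affine transformation'' is consumed, since a negative multiplier on a payoff matrix would turn a best response into a worst response and destroy the Nash equilibrium correspondence.
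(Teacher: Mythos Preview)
Your argument is correct. You directly exploit Assumption \ref{assm:converse} to write $\tilde C(\gamma^*)=D+K$ with $D\in\mc{M}_{m\times n}$ and $\rank K=1$, then algebraically invert the positive affine transformation using Lemma \ref{lem:stratEqVec}. The bookkeeping (absorbing the factor $\gamma^*$ into $\vct v$ and into $\vec{\hat r}\vec{\hat c}^\transpose$, then reading off $\alpha_1=\gamma^*$, $\alpha_2=1$) is clean and valid.

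The paper takes a different route: it cites Theorems \ref{thm:rank1} and \ref{thm:rank1Less3}, which build the decomposition constructively via the Wedderburn rank reduction formula, splitting into cases according to $\rank D\in\{0,1,2\}$ and peeling off terms $\vct 1_m\hat{\vct u}^\transpose$ and $\hat{\vct v}\vct 1_n^\transpose$ one at a time. That machinery is heavier than what the bare theorem statement requires, but it is exactly what the algorithm in Section \ref{sec:algSER1} needs: explicit formulas for $\hat{\vct u}$ and $\hat{\vct v}$ in terms of rows and columns of $\tilde A,\tilde B$, computable without knowing $\gamma^*$ in advance. Your argument, by contrast, treats the function $g$ as a black box and simply uses its output, which is entirely sufficient for the existence claim but does not by itself yield the efficient procedure. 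Both approaches ultimately rest on Lemma \ref{lem:stratEqVec}; yours is the shorter path to the theorem, while the paper's is the one that feeds the algorithm.
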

\begin{proof}
This result is a direct consequence of Theorem \ref{thm:rank1} presented in Subsection \ref{subsec:proofMainResult:wedderburn} and Theorem \ref{thm:rank1Less3} presented in Appendix \ref{app:stratEqRank1}.
\end{proof}

We briefly note here that, even if one were to be given $A,\vec{r},\vec{c}$, and a game $(m,n,\tilde{A},\tilde{B})$ that is strategically equivalent to $(m,n,A,-A+\vec{r}\vec{c}^\transpose)$ it would be impossible to exactly recover $A,\vec{r},\vec{c}$ without also knowing $\beta_1,\beta_2,\alpha_1,\alpha_2,\vct{u}$, and $\vct{v}$. Thus, our technique calculates another game, $(m,n,\hat{A},-\hat{A}+\vec{\hat{r}}\vec{\hat{c}}^\transpose)$ which is rank-$1$ and strategically equivalent to both $(m,n,\tilde{A},\tilde{B})$ and $(m,n,A,-A+\vec{r}\vec{c}^\transpose)$. 

Let us now turn our attention to showing the necessary conditions under which Assumption \ref{assm:converse} holds true and devising methods to test those conditions.

\section{Proof of the Main Result}\label{sec:proofMainResult}

In this section, we prove the main theoretical results presented in Section \ref{sec:probForm}.
Let us now turn our attention to decomposing the game via the Wedderburn Rank Reduction Formula.

\subsection{Reducing the Rank of a Game via the Wedderburn Rank Reduction Formula}\label{subsec:proofMainResult:wedderburn}
In this subsection, we show that if there exists a $\gamma\in\Re_{>0}$ such that $\rank{\tilde{C}}=3$ where $\tilde{C}:=\tilde{C}(\gamma)=\tilde{A}+\gamma\tilde{B}$, then it may be possible, under certain conditions, to calculate a rank-1 game that is strategically equivalent to the rank-k game $(m,n,\tilde{A},\tilde{B})$.
In what follows, we first present the Wedderburn Rank Reduction formula upon which our technique is based, and then we proceed to state our result, leaving the proof to Appendix \ref{app:subsec:rankC3}.  

The Wedderburn Rank Reduction formula is a classical technique in linear algebra that allows one to reduce the rank of a matrix by subtracting a specifically formulated rank-1 matrix. By repeated applications of the formula, one can obtain a matrix decomposition as the sum of multiple rank-1 matrices.
\footnote{For further reading on the Wedderburn rank reduction formula, we refer the reader to Wedderburn's original book \cite[p.~69]{wedderburn1934lectures}, or to the excellent treatment of the topic by Chu et al. \cite{chu1995rank}.}

We now proceed to state Wedderburn's original theorem. Following that, we show how one can exploit the flexibility of the decomposition to extract specifically formulated rank-1 matrices that allow us, when certain conditions hold true, to reduce the rank of a bimatrix game.

\begin{theorem}[{\cite[p.~69]{wedderburn1934lectures} \cite{chu1995rank}}] \label{thm:wedderburn}
	Let $C\in\Re^{m\times n}$ be an arbitrary matrix, not identically zero. Then, there exists vectors $\vec{x}_1\in \Re^n$ and $\vec{y}_1\in \Re^m$ such that $w_1=\vec{y}_1^\transpose C \vec{x}_1\neq0$. Then, setting $C=C_1$ for convenience, the matrix
	\begin{equation} \label{eq:Wedderburn}
	C_2 \coloneqq C_1-w_1^{-1}C_1\vec{x}_1\vec{y}_1^\transpose C_1
	\end{equation}
	has rank exactly one less than the rank of $C$.
\end{theorem}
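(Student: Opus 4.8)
The plan is to treat the two assertions in turn. The existence of $\vec{x}_1,\vec{y}_1$ with $w_1\neq0$ is immediate: since $C$ is not identically zero it has a nonzero entry $c_{i,j}$, so taking $\vec{x}_1=\vec{e}_j$ and $\vec{y}_1=\vec{e}_i$ gives $w_1=\vec{y}_1^\transpose C\vec{x}_1=c_{i,j}\neq0$ (any pair with $\vec{y}_1^\transpose C\vec{x}_1\neq0$ works for the rest of the argument). For the rank claim, set $r\coloneqq\rank{C_1}$ and prove $\rank{C_2}=r-1$ by establishing $\rank{C_2}\le r-1$ and $\rank{C_2}\ge r-1$ separately. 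Two consequences of $w_1\neq0$ will be used repeatedly: $\vec{p}\coloneqq C_1\vec{x}_1$ is nonzero because $\vec{y}_1^\transpose\vec{p}=w_1$, and $\vec{q}\coloneqq C_1^\transpose\vec{y}_1$ is nonzero because $\vec{q}^\transpose\vec{x}_1=w_1$.

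For the upper bound I would argue with null spaces. Factoring \eqref{eq:Wedderburn} as $C_2=C_1\bigl(I_n-w_1^{-1}\vec{x}_1\vec{y}_1^\transpose C_1\bigr)$ shows $\nullspace{C_1}\subseteq\nullspace{C_2}$. Also $C_2\vec{x}_1=C_1\vec{x}_1-w_1^{-1}C_1\vec{x}_1\bigl(\vec{y}_1^\transpose C_1\vec{x}_1\bigr)=\vec{p}-w_1^{-1}w_1\vec{p}=\vec{0}_m$, so $\vec{x}_1\in\nullspace{C_2}$, while $\vec{x}_1\notin\nullspace{C_1}$ since $C_1\vec{x}_1=\vec{p}\neq\vec{0}_m$. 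Hence $\nullspace{C_2}$ contains $\nullspace{C_1}+\operatorname{span}\{\vec{x}_1\}$, a subspace of dimension $\dim\nullspace{C_1}+1$, and the rank-nullity theorem gives $\rank{C_2}=n-\dim\nullspace{C_2}\le n-\dim\nullspace{C_1}-1=r-1$.

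For the lower bound I would rearrange \eqref{eq:Wedderburn} as $C_1=C_2+w_1^{-1}\vec{p}\,\vec{q}^\transpose$. The correction term is a nonzero multiple of the outer product of the nonzero vectors $\vec{p}$ and $\vec{q}$, hence has rank exactly $1$; subadditivity of the rank then gives $r=\rank{C_1}\le\rank{C_2}+1$, i.e.\ $\rank{C_2}\ge r-1$. Combining the two bounds yields $\rank{C_2}=r-1$, as claimed.

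I do not anticipate a genuine obstacle here — this is a classical identity with a short proof — but two points need care, and they are precisely the ones driving the argument: showing that the subtracted matrix has rank exactly $1$ rather than $0$, which is exactly where $w_1\neq0$ forces both $\vec{p}\neq\vec{0}_m$ and $\vec{q}\neq\vec{0}_n$; and showing that $\vec{x}_1$ is a genuinely new element of $\nullspace{C_2}$, which is what makes the rank reduction strict rather than merely non-increasing.
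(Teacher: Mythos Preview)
Your proof is correct, and your upper-bound argument (null space containment plus $\vec{x}_1$ as a new null vector) is exactly the paper's approach. The paper's proof actually stops there and declares the rank is ``one less'', without explicitly arguing the lower bound; your use of rank subadditivity on $C_1=C_2+w_1^{-1}\vec{p}\,\vec{q}^\transpose$ fills that gap and makes the equality genuinely two-sided, so in that respect your write-up is more complete than the original.
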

\begin{proof}
The original proof of \eqref{eq:Wedderburn} is due to Wedderburn \cite[p.~69]{wedderburn1934lectures}. See Appendix \ref{app:linAlg} for our restated version.
\end{proof}

Let us now proceed to show how one can apply Theorem \ref{thm:wedderburn} in order to reduce the rank of a bimatrix game. For ease of exposition, we break the result into two theorems--Theorem \ref{thm:rank1} and Theorem \ref{thm:rank1Less3} (presented in Appendix \ref{app:stratEqRank1}). 

\begin{theorem}\label{thm:rank1}
	Consider the game  $(m,n,\tilde{A},\tilde{B})$. If there exists $\gamma\in \Re_{>0}$ such that $\tilde{C}:=\tilde{C}(\gamma)=\tilde{A}+\gamma\tilde{B}=D+\hat{\vec{r}}\hat{\vec{c}}^\transpose$ where $D\in\mc{M}_{m\times n}(\Re)$ with $\rank{D}=2$, $\rank{\tilde{C}}=3$, $\hat{\vec{r}}\in\Re^m$, $\hat{\vec{c}}\in\Re^n$, then there exists $\vec{x}_1,\vec{x}_2\in\Rea^n$ and $\vec{y}_1,\vec{y}_2\in\Rea^m$ such that:
	\begin{enumerate}
	    \item $\tilde{C}\vec{x}_1=\vec{1}_m$ and $w_1=\vct{y}_1^\transpose \tilde{C}\vct{x}_1\neq 0$. Let $\hat{\vec{u}}^\transpose=w_1^{-1}\vec{y}_1^\transpose\tilde{C}$ and compute $\tilde{C}_2$ using \eqref{eq:Wedderburn} as follows:
	\begin{equation*}
    	\tilde{C}_2 = \tilde{C}-w_1^{-1}\tilde{C}\vct{x}_1\vct{y}_1^\transpose \tilde{C}=\tilde{C}-\vct{1}_m\hat{\vct{u}}^\transpose. 
	\end{equation*}
	\item $\vec{y}_2\tilde{C}_2=\vec{1}_n^\transpose$ and $w_2=\vct{y}_2^\transpose \tilde{C}_2\vct{x}_2\neq 0$. Let $\hat{\vec{v}}=w_2^{-1}\tilde{C}_2\vec{x}_2$ and compute $\tilde{C}_3$ using \eqref{eq:Wedderburn} as follows:
	\begin{equation*}
    	\tilde{C}_3 = \tilde{C}_2-w_2^{-1}\tilde{C}_2\vct{x}_2\vct{y}_2^\transpose \tilde{C}_2=\tilde{C}-\hat{\vct{v}}\vct{1}_n^\transpose=\tilde{C}-\vct{1}_m\hat{\vct{u}}^\transpose-\hat{\vct{v}}\vct{1}_n^\transpose. 
	\end{equation*}
	\end{enumerate}
	
	Define $\hat A = \tilde{A} - \vec{1}_m\hat{\vec{u}}^\transpose$ and $\hat B = \gamma \tilde{B}-\hat{\vct{v}}\vct{1}_n^\transpose$. Then, $\hat A + \hat{B} = \tilde{C}_3=\hat{\vec{r}}\hat{\vec{c}}^\transpose$ and the bimatrix game $(m,n,\tilde{A},\tilde{B})$ is strategically equivalent to the rank-$1$ game $(m,n,\hat{A},\hat{B})$.
\end{theorem}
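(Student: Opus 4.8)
The plan is to choose the two pairs of Wedderburn vectors so that the two rank-reduction steps peel off \emph{exactly} the two rank-one summands of $D$, leaving precisely the rank-one matrix $\hat{\vec{r}}\hat{\vec{c}}^\transpose$ after the second step; the asserted strategic equivalence then drops out of Lemma~\ref{lem:stratEqVec}.

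First I would unpack the rank hypotheses. Fix a representation $D=\vec{1}_m\vec{u}^\transpose+\vec{v}\vec{1}_n^\transpose$. Since $\rank{D}=2$, the pairs $\{\vec{1}_m,\vec{v}\}$ and $\{\vec{u},\vec{1}_n\}$ are each linearly independent (if either were dependent, $D$ would collapse to rank $\le 1$). Feeding this into $\tilde{C}=\vec{1}_m\vec{u}^\transpose+\vec{v}\vec{1}_n^\transpose+\hat{\vec{r}}\hat{\vec{c}}^\transpose$ and using $\rank{\tilde{C}}=3$, the triples $\{\vec{1}_m,\vec{v},\hat{\vec{r}}\}$ and $\{\vec{u},\vec{1}_n,\hat{\vec{c}}\}$ are each linearly independent and span, respectively, $\colspan{\tilde{C}}$ and the row space of $\tilde{C}$; in particular $m,n\ge 3$ and $\hat{\vec{r}},\hat{\vec{c}}\ne\vec{0}$, so $\hat{\vec{r}}\hat{\vec{c}}^\transpose$ has rank exactly $1$.

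For step~1, since $\vec{1}_m\in\colspan{\tilde{C}}$ there is an $\vec{x}_1$ with $\tilde{C}\vec{x}_1=\vec{1}_m$. The crucial move is the choice of $\vec{y}_1$: take $\vec{y}_1$ orthogonal to both $\vec{v}$ and $\hat{\vec{r}}$ with $\vec{y}_1^\transpose\vec{1}_m\ne 0$, which is possible precisely because $\{\vec{1}_m,\vec{v},\hat{\vec{r}}\}$ is independent. Then $\vec{y}_1^\transpose\tilde{C}=(\vec{y}_1^\transpose\vec{1}_m)\vec{u}^\transpose$ and $w_1=\vec{y}_1^\transpose\tilde{C}\vec{x}_1=\vec{y}_1^\transpose\vec{1}_m\ne 0$, so $\hat{\vec{u}}^\transpose=w_1^{-1}\vec{y}_1^\transpose\tilde{C}=\vec{u}^\transpose$, and Theorem~\ref{thm:wedderburn} gives $\tilde{C}_2=\tilde{C}-\vec{1}_m\vec{u}^\transpose=\vec{v}\vec{1}_n^\transpose+\hat{\vec{r}}\hat{\vec{c}}^\transpose$, of rank $2$. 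For step~2, dually, pick $\vec{y}_2$ with $\vec{y}_2^\transpose\vec{v}=1$ and $\vec{y}_2^\transpose\hat{\vec{r}}=0$ (possible since $\{\vec{v},\hat{\vec{r}}\}$ is independent), so $\vec{y}_2^\transpose\tilde{C}_2=\vec{1}_n^\transpose$; and pick $\vec{x}_2$ with $\hat{\vec{c}}^\transpose\vec{x}_2=0$ and $\vec{1}_n^\transpose\vec{x}_2\ne 0$ (possible since $\{\vec{1}_n,\hat{\vec{c}}\}$ is independent), so $\tilde{C}_2\vec{x}_2=(\vec{1}_n^\transpose\vec{x}_2)\vec{v}$ and $w_2=\vec{1}_n^\transpose\vec{x}_2\ne 0$, whence $\hat{\vec{v}}=w_2^{-1}\tilde{C}_2\vec{x}_2=\vec{v}$. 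A second application of Theorem~\ref{thm:wedderburn} then yields $\tilde{C}_3=\tilde{C}_2-\vec{v}\vec{1}_n^\transpose=\hat{\vec{r}}\hat{\vec{c}}^\transpose$, of rank $1$.

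To finish, with $\hat{A}=\tilde{A}-\vec{1}_m\hat{\vec{u}}^\transpose$ and $\hat{B}=\gamma\tilde{B}-\hat{\vec{v}}\vec{1}_n^\transpose$ I would compute $\hat{A}+\hat{B}=\tilde{C}-\vec{1}_m\hat{\vec{u}}^\transpose-\hat{\vec{v}}\vec{1}_n^\transpose=\tilde{C}_3=\hat{\vec{r}}\hat{\vec{c}}^\transpose$, so $(m,n,\hat{A},\hat{B})$ is a rank-$1$ game; and since $\tilde{A}=\hat{A}+\vec{1}_m\hat{\vec{u}}^\transpose$ and $\tilde{B}=\tfrac{1}{\gamma}\hat{B}+\tfrac{1}{\gamma}\hat{\vec{v}}\vec{1}_n^\transpose$ with $\tfrac{1}{\gamma}>0$ (this is where $\gamma>0$ is used), Lemma~\ref{lem:stratEqVec} with $\alpha_1=1$, $\alpha_2=1/\gamma$ delivers the strategic equivalence. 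The main obstacle is simply getting the Wedderburn choices aligned: the theorem asserts only the \emph{existence} of suitable $\vec{x}_i,\vec{y}_i$, and the real content is that the linear independence of $\{\vec{1}_m,\vec{v},\hat{\vec{r}}\}$ and $\{\vec{u},\vec{1}_n,\hat{\vec{c}}\}$ — forced by the rank conditions — lets one choose $\vec{y}_1$ and $\vec{x}_2$ so that the matrices stripped off are exactly $\vec{1}_m\vec{u}^\transpose$ and $\vec{v}\vec{1}_n^\transpose$; everything else is bookkeeping.
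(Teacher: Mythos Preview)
Your proof is correct, but it follows a genuinely different route from the paper's. You make \emph{specific} choices of the Wedderburn vectors---$\vec{y}_1\perp\{\vec{v},\hat{\vec{r}}\}$ and $\vec{x}_2\perp\hat{\vec{c}}$---so that the two reduction steps peel off \emph{exactly} $\vec{1}_m\vec{u}^\transpose$ and $\vec{v}\vec{1}_n^\transpose$, giving $\tilde{C}_3=\hat{\vec{r}}\hat{\vec{c}}^\transpose$ on the nose. The paper instead allows essentially arbitrary $\vec{y}_1$ (any with $\vec{1}_m^\transpose\vec{y}_1\neq 0$) and arbitrary $\vec{x}_2$ (any with $\vec{1}_n^\transpose\vec{x}_2\neq 0$); with such generic choices $\hat{\vec{u}}$ need not equal $\vec{u}$, so the paper must invoke extra machinery---Lemma~\ref{lem:Mplusrc} to get $\vec{1}_n^\transpose\vec{x}_1=0$, Proposition~\ref{prop:colSpanC2} to push $\vec{1}_n$ into the row space of $\tilde{C}_2$, and Corollary~\ref{cor:thm:rankReduce} to certify that $\vec{1}_m\notin\colspan{\tilde{C}_3}$---before concluding that $\tilde{C}_3$ is rank~$1$. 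Your argument is cleaner and more self-contained for the existence claim; the paper's argument, however, is what justifies the later algorithmic choices $\vec{y}_1=\vec{e}_i$, $\vec{x}_2=\vec{e}_j$ in Section~\ref{sec:algSER1}, where one cannot arrange the orthogonality conditions you impose.
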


\begin{proof}
    See Appendix \ref{app:stratEqRank1}.
\end{proof}

In Theorem \ref{thm:rank1} we assumed the existence of $\gamma$ such that $\tilde{C}$ is in our desired form.  In the next subsection, we show an efficient decomposition of $\tilde{C}$ that we can conduct without knowing $\gamma$. Following that, we then show a technique for solving for such a $\gamma$ (if it exists).

\subsection{An Efficient Decomposition of \texorpdfstring{$\tilde{C}$}{}} 
First-off, let us assume that there exists a $\gamma$ such that $\tilde{C}(\gamma)=M+D$, where  $D\in\mc{M}_{m\times n}(\Re)$ with $\rank{D}=2$. See Appendix \ref{app:decompC} for the cases of $\rank{D}<2$. Therefore, we know that there exists a vector $\vec{x}_1$ such that $\tilde{C}\vct{x}_1=\vct{1}_m$ and a $\vec{y}_2$ such that $\vec{y}_2\tilde{C}_2=\vec{1}_n^\transpose$. Examining Theorem \ref{thm:rank1}, we see that the term $\vec{x}_1$ always appears as a term in the matrix-vector product $\tilde{C}\vct{x}_1$. Therefore, we can replace all appearances of the term $\tilde{C}\vct{x}_1$ with $\vct{1}_m$ and there is no need to explicitly calculate $\vec{x}_1$.
 
An algorithmic efficient manner for selecting $\vec{y}_1$ is to let $\vec{y}_1=\vec{e}_i$ for some $i$. Then we have that:
\begin{equation*}
    w_1^{-1}=\frac{1}{\vct{y}_1^\transpose \tilde{C}\vct{x}_1}=\frac{1}{\vct{e}_i^\transpose \vec{1}_m}=1.
\end{equation*}
Then, we also have:
\begin{align*}
    \vct{1}_m\hat{\vct{u}}^\transpose&=w_1^{-1}\tilde{C}\vct{x}_1\vct{y}_1^\transpose \tilde{C}=\vec{1}_m\vec{e}_i^\transpose \tilde{C}=\vec{1}_m\vec{e}_i^\transpose (\tilde{A}+\gamma\tilde{B})\\
    &=(\vec{1}_m\tilde{A}_{(i)}+\gamma\vec{1}_m\tilde{B}_{(i)}).
\end{align*}
Then we have that $\tilde{C}_2=\tilde{A}+\gamma\tilde{B}-(\vec{1}_m\tilde{A}_{(i)}+\gamma\vec{1}_m\tilde{B}_{(i)})$. Then, letting $\vec{y}_2\tilde{C}_2=\vec{1}_n^\transpose$ and $\vec{x}_2=\vec{e}_j$, we have $w_2^{-1}=1$ and
\begin{align*}
    \hat{\vct{v}}\vec{1}_n^\transpose&=\tilde{C}_2\vec{e}_j\vec{1}_n^\transpose
    =[\tilde{A}+\gamma\tilde{B}-(\vec{1}_m\tilde{A}_{(i)}+\gamma\vec{1}_m\tilde{B}_{(i)})]\vec{e}_j\vec{1}_n^\transpose\\
    &=(\tilde{A}^{(j)}-\vec{1}_m\tilde{a}_{i,j})\vec{1}_n^\transpose+\gamma(\tilde{B}^{(j)}-\vec{1}_m\tilde{b}_{i,j})\vec{1}_n^\transpose.
\end{align*}
Therefore, we have 
\begin{equation}\label{eq:C3LongForm}
   \tilde{C}_3=\tilde{A}+\gamma\tilde{B}-(\vec{1}_m\tilde{A}_{(i)}+\gamma\vec{1}_m\tilde{B}_{(i)})-\big((\tilde{A}^{(j)}-\vec{1}_m\tilde{a}_{i,j})\vec{1}_n^\transpose+\gamma(\tilde{B}^{(j)}-\vec{1}_m\tilde{b}_{i,j})\vec{1}_n^\transpose\big). 
\end{equation}
For notational simplicity, let $D(\gamma)\coloneqq(\vec{1}_m\tilde{A}_{(i)}+\gamma\vec{1}_m\tilde{B}_{(i)})-\big((\tilde{A}^{(j)}-\vec{1}_m\tilde{a}_{i,j})\vec{1}_n^\transpose+\gamma(\tilde{B}^{(j)}-\vec{1}_m\tilde{b}_{i,j})\vec{1}_n^\transpose\big)$ and $\bar{A}+\gamma\bar{B}\coloneqq\tilde{C}_3$. Clearly, we have $D(\gamma)\in\mc{M}_{m\times n}(\Re)$ for all $\gamma\in\Re$.

\begin{theorem}\label{thm:rank1GammaUnknown}
    	Consider the game $(m,n,\tilde{A},\tilde{B})$, select any $(i,j)\in\{1\dots m\} \times \{1\dots n\}$, and formulate $\bar{A}+\gamma\bar{B}\coloneqq\tilde{C}_3$ as in \eqref{eq:C3LongForm}. Let $\hat{A}(\gamma)=\tilde{A}-(\vec{1}_m\tilde{A}_{(i)}+\gamma\vec{1}_m\tilde{B}_{(i)})$ and $\hat{B}(\gamma)=\gamma\tilde{B}-\big((\tilde{A}^{(j)}-\vec{1}_m\tilde{a}_{i,j})\vec{1}_n^\transpose+\gamma(\tilde{B}^{(j)}-\vec{1}_m\tilde{b}_{i,j})\vec{1}_n^\transpose\big)$. If there exists $\gamma^*\in\Re_{>0}$ such that $\rank{\bar{A}+\gamma^*\bar{B}}=1$, then the game $(m,n,\tilde{A},\tilde{B})$ is strategically equivalent to the rank-$1$ game $(m,n,\hat{A}(\gamma^*),\hat{B}(\gamma^*))$.   
\end{theorem}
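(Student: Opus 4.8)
\textbf{Proof plan for Theorem \ref{thm:rank1GammaUnknown}.}

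The plan is to show that the construction in \eqref{eq:C3LongForm} is nothing more than the explicit, $\gamma$-parameterized version of the two Wedderburn reductions in Theorem \ref{thm:rank1}, with the specific choices $\vct{y}_1=\vct{e}_i$ and $\vct{x}_2=\vct{e}_j$ made in the preceding subsection. So the first step is to verify that, for any fixed $\gamma\in\Re_{>0}$, the matrix $\hat{A}(\gamma)+\hat{B}(\gamma)$ equals $\tilde{C}_3$ as defined in \eqref{eq:C3LongForm}; this is a direct term matching against the formulas $\hat A = \tilde{A} - \vct{1}_m\hat{\vct{u}}^\transpose$ and $\hat B = \gamma\tilde{B}-\hat{\vct{v}}\vct{1}_n^\transpose$ from Theorem \ref{thm:rank1}, using the computations $\vct{1}_m\hat{\vct{u}}^\transpose=\vec{1}_m\tilde{A}_{(i)}+\gamma\vec{1}_m\tilde{B}_{(i)}$ and $\hat{\vct{v}}\vct{1}_n^\transpose=(\tilde{A}^{(j)}-\vec{1}_m\tilde{a}_{i,j})\vec{1}_n^\transpose+\gamma(\tilde{B}^{(j)}-\vec{1}_m\tilde{b}_{i,j})\vec{1}_n^\transpose$ that were derived just above the statement.

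Second, I would invoke Lemma \ref{lem:stratEqVec} directly on the pair $(\hat{A}(\gamma^*),\hat{B}(\gamma^*))$. Observe that $\hat A(\gamma^*)=1\cdot\tilde{A}+\vct{1}_m\vct{u}^\transpose$ with $\vct{u}^\transpose=-(\tilde{A}_{(i)}+\gamma^*\tilde{B}_{(i)})$, which is exactly of the form $\alpha_1\tilde A+\beta_1\vct{1}_m\vct{u}^\transpose$ with $\alpha_1=1>0$, $\beta_1=1$. Similarly $\hat B(\gamma^*)=\gamma^*\tilde{B}+\vct{v}\vct{1}_n^\transpose$ with $\vct{v}=-(\tilde{A}^{(j)}-\vct{1}_m\tilde{a}_{i,j})/\gamma^*-(\tilde{B}^{(j)}-\vct{1}_m\tilde{b}_{i,j})$ (absorbing constants appropriately), which is of the form $\alpha_2\tilde B+\beta_2\vct{v}\vct{1}_n^\transpose$ with $\alpha_2=\gamma^*>0$. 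Hence by Lemma \ref{lem:stratEqVec} the game $(m,n,\tilde A,\tilde B)$ is strategically equivalent to $(m,n,\hat A(\gamma^*),\hat B(\gamma^*))$.

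Third, it remains to check that $(m,n,\hat A(\gamma^*),\hat B(\gamma^*))$ is genuinely a rank-$1$ game in the sense of this paper, i.e. that $\rank{\hat A(\gamma^*)+\hat B(\gamma^*)}=1$. By the first step this sum is $\tilde{C}_3=\bar A+\gamma^*\bar B$, and the hypothesis of the theorem is precisely $\rank{\bar A+\gamma^*\bar B}=1$, so we are done. I would also remark briefly that $\hat B(\gamma^*)=-\hat A(\gamma^*)+\vec{\hat r}\vec{\hat c}^\transpose$ for suitable $\vec{\hat r},\vec{\hat c}$: since $\hat A(\gamma^*)+\hat B(\gamma^*)$ has rank $1$ it factors as $\vec{\hat r}\vec{\hat c}^\transpose$, which puts the game in the advertised normal form $(m,n,\hat A(\gamma^*),-\hat A(\gamma^*)+\vec{\hat r}\vec{\hat c}^\transpose)$.

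The only subtlety — and the step I expect to require the most care — is that Theorem \ref{thm:rank1} as literally stated assumes $\rank{\tilde C}=3$ and $\rank{D}=2$, whereas here we only assume $\rank{\tilde C_3}=1$ and the $(i,j)$ are chosen arbitrarily rather than to guarantee $w_1,w_2\neq 0$. The point to make is that the Wedderburn machinery is not actually needed for the \emph{statement}: once the algebraic identity $\hat A(\gamma^*)+\hat B(\gamma^*)=\bar A+\gamma^*\bar B$ is established by direct expansion, strategic equivalence follows from Lemma \ref{lem:stratEqVec} alone (the construction subtracts only a row-broadcast matrix from $\tilde A$ and a column-broadcast matrix from $\gamma^*\tilde B$), independently of whether the intermediate pivots were nonzero or what the intermediate ranks were. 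The cases $\rank D<2$ and the existence question for $\gamma^*$ are handled separately (Appendix \ref{app:decompC} and the following subsection), so for this theorem it suffices to present the two-line identity check plus the application of Lemma \ref{lem:stratEqVec}.
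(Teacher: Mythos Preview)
Your proposal is correct and follows essentially the same route as the paper, whose entire proof reads ``The proof follows from the preceding discussion and Theorem \ref{thm:rank1}.'' Your write-up is in fact slightly cleaner: you observe that once the algebraic identity $\hat A(\gamma^*)+\hat B(\gamma^*)=\bar A+\gamma^*\bar B$ is verified from the preceding derivation, strategic equivalence follows directly from Lemma \ref{lem:stratEqVec} (since $\hat A(\gamma^*)$ differs from $\tilde A$ by a $\vct 1_m\vct u^\transpose$ term and $\hat B(\gamma^*)$ from $\gamma^*\tilde B$ by a $\vct v\vct 1_n^\transpose$ term), so the rank hypotheses $\rank{\tilde C}=3$, $\rank{D}=2$ baked into Theorem \ref{thm:rank1} are not actually needed at this point---which is exactly the subtlety you flag in your final paragraph.
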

\begin{proof}
    The proof follows from the preceding discussion and Theorem \ref{thm:rank1}.
\end{proof}
Thus, what we've shown is that our problem is equivalent to forming a rather simple matrix decomposition to calculate $D(\gamma)$, and then determining whether there exists a scalar parameter $\gamma^*$ such that $\rank{\bar{A}+\gamma^*\bar{B}}=1$. Let us now turn our attention to the theory of \textit{matrix pencils} and develop an efficient procedure for calculating $\gamma^*$.   

\subsection{Rank-1 Matrix Pencils}\label{subsec:rank1Pencils} 
We would be remiss if we did not mention that fact that problems in the form $A+\lambda B$, with $A,B\in\Re^{m\times n}$ matrices of known values and $\lambda\in\Ce$ an unknown scalar parameter, are typically referred to as \textit{linear matrix pencils} (or just  \textit{pencils})\cite[p.~24]{gantmakher1959theory},\cite{ikramov1993matrix}. Therefore, our problem of finding a $\gamma^*\in\Re$ such that $\rank{\tilde{C}(\gamma^*)}=1$ is intimately connected to the theory of matrix pencils and the \textit{generalized eigenvalues} of matrix pencils.  However, we eschew the traditional techniques for solving matrix pencils in order to avoid the higher than necessary computational complexity and possible numerical difficulties.  Instead, in this subsection, we develop a series of results that allow us to determine whether or not there exists a $\lambda^*\in\Ce$ such that $\rank{A+\lambda^*B}=1$ by solving for the roots of a single polynomial equation (at worse a quadratic) and then conducting at most two matrix comparisons.

Let us begin by stating some facts about rank-1 matrices that will allow us to easily ascertain when a given matrix is rank-$1$ and to solve for values of $\lambda^*$, when they exist, such that the matrix pencil, $A+\lambda^* B$, is a rank-$1$ pencil.

\begin{fact}\label{fact:rank1facts}
The matrix $M$ in $\Re^{m\times n}$ is rank-$1$ if and only if $M\neq\vec{0}_{m\times n}$ and the following hold true:
\begin{enumerate}
    \item Every row (column) of $M$ is a scalar multiple of every other row (column) of $M$.
    \item Choose any element $m_{i,j}$ of $M$ such that $m_{ij}\neq0$ and form $\vec{r}_j=M^{(j)}$, $\vec{c}_i^\transpose=m_{i,j}^{-1}M_{(i)}$. Then, $M=\vec{r}_j\vec{c}_i^\transpose$.
\end{enumerate}
\end{fact}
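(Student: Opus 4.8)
The plan is to reduce the whole statement to the standard fact that a matrix has rank one exactly when it is a nonzero outer product $\vec{a}\vec{b}^\transpose$ with $\vec{a}\neq\vec{0}_m$ and $\vec{b}\neq\vec{0}_n$, and then verify each listed condition against that form. I would prove the biconditional in two directions.

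For the forward direction, assume $\rank{M}=1$. Since the zero matrix has rank $0$, $M\neq\vec{0}_{m\times n}$, so $M$ has a nonzero entry and the selection required in item~2 is possible. As $\rank{M}=1$, $\colspan{M}$ is one-dimensional; fixing a nonzero $\vec{a}\in\Re^m$ that spans it, every column satisfies $M^{(k)}=b_k\vec{a}$ for a scalar $b_k$, i.e. $M=\vec{a}\vec{b}^\transpose$ with $\vec{b}=(b_1,\dots,b_n)^\transpose\neq\vec{0}_n$. Item~1 for columns is then immediate, since $M^{(k)}=b_k\vec{a}$ and $M^{(l)}=b_l\vec{a}$ are scalar multiples of each other; the row statement follows symmetrically because the rows of $\vec{a}\vec{b}^\transpose$ are $a_k\vec{b}^\transpose$. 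For item~2, pick $m_{i,j}\neq 0$; since $m_{i,j}=a_ib_j$, both $a_i\neq 0$ and $b_j\neq 0$, so $\vec{r}_j=M^{(j)}=b_j\vec{a}$ and $\vec{c}_i^\transpose=m_{i,j}^{-1}M_{(i)}=(a_ib_j)^{-1}a_i\vec{b}^\transpose=b_j^{-1}\vec{b}^\transpose$, whence $\vec{r}_j\vec{c}_i^\transpose=(b_j\vec{a})(b_j^{-1}\vec{b}^\transpose)=\vec{a}\vec{b}^\transpose=M$.

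For the converse, assume $M\neq\vec{0}_{m\times n}$ and items~1--2 hold. Item~2 already exhibits $M=\vec{r}_j\vec{c}_i^\transpose$. The $i$-th entry of $\vec{r}_j=M^{(j)}$ is $m_{i,j}\neq 0$, so $\vec{r}_j\neq\vec{0}_m$, and the $j$-th entry of $\vec{c}_i^\transpose=m_{i,j}^{-1}M_{(i)}$ equals $m_{i,j}^{-1}m_{i,j}=1\neq 0$, so $\vec{c}_i\neq\vec{0}_n$. Thus $M$ is a nonzero outer product: every column is a scalar multiple of $\vec{r}_j$, so $\colspan{M}=\text{span}\{\vec{r}_j\}$ is one-dimensional and $\rank{M}=1$. (Item~1 is not actually needed for this implication; it is recorded because it is the form in which the hypothesis will be checked in the sequel.)

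The only real subtlety — more a bookkeeping point than an obstacle — is the degenerate reading of item~1 when $M$ has a zero row or column: the phrase "scalar multiple of every other row (column)" must be understood to permit the scalar $0$, since a zero row is $0$ times any other row. I would state this convention once at the outset. Beyond that, the argument is a direct computation with outer products and I anticipate no substantive difficulty.
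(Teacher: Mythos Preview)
Your proof is correct. The paper does not actually prove this statement; it is recorded as a standard linear-algebra fact without justification, so there is no paper proof to compare against, but your outer-product argument is sound and your remark that item~1 is redundant for the converse direction is accurate.
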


\begin{theorem}\label{thm:rank1Pencil}
Given the matrix pencil, $A+\lambda B$, with $A,B\in\Re^{m \times n}$, assume that $A,B\neq\vec{0}_{m\times n}$ and $\rank{A+B}>1$. Construct $\lambda$-vectors $\vec{r}_j(\lambda)$ and $\vec{c}_i(\lambda)$ as follows:
\begin{enumerate}
    \item Choose any $(i,j)\in\{1\dots n\}\times \{1\dots m\}$ such that $a_{i,j}\neq0$. Such an $a_{i,j}$ is guaranteed to exist since $A\neq\vec{0}_{m\times n}$.
    \item Let $\vec{r}_j(\lambda)=A^{(j)}+\lambda B^{(j)}$.
    \item Let $\vec{c}_i(\lambda)^\transpose=\frac{1}{a_{i,j}+\lambda b_{i,j}}(A_{(i)}+\lambda B_{(i)})$.
\end{enumerate}
Then there exists $\lambda^*\in\Ce$ such that $\rank{A+\lambda^* B}=1$ if and only if either:
\begin{enumerate}
    \item $b_{i,j}\neq0$ and $\rank{A+\frac{-a_{i,j}}{b_{i,j}} B}=1$; or \label{cond:thm:rank1Pencil:bij}
    \item $A+\lambda^*B=\vec{r}_j(\lambda^*)\vec{c}_i(\lambda^*)^\transpose$. \label{cond:thm:rank1Pencil:decomp}
\end{enumerate}
\end{theorem}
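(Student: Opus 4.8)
The plan is to prove both directions of the equivalence, using Fact \ref{fact:rank1facts} as the workhorse. The key structural observation is that, under the standing hypotheses ($A,B\neq\vec{0}_{m\times n}$ and $\rank{A+B}>1$), any scalar $\lambda^*$ achieving $\rank{A+\lambda^*B}=1$ must fall into exactly one of two cases depending on whether the distinguished entry $a_{i,j}+\lambda^* b_{i,j}$ vanishes.

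\textbf{The easy direction ($\Leftarrow$).} Suppose condition \ref{cond:thm:rank1Pencil:bij} holds. Then with $\lambda^*=-a_{i,j}/b_{i,j}$ we have $\rank{A+\lambda^*B}=1$ by hypothesis, and we must only check $\lambda^*\neq 0$ — but $\lambda^*=0$ would force $a_{i,j}=0$, contradicting the choice of $(i,j)$ in step 1 — wait, actually $\lambda^*=0$ is not excluded by the statement; if $a_{i,j}=0$ were allowed we'd have a problem, but step 1 guarantees $a_{i,j}\neq 0$, so in fact $\lambda^*\neq 0$ is automatic, though even $\lambda^*=0$ would be harmless since $\rank{A}=\rank{A+0\cdot B}$ could legitimately be $1$. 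So condition \ref{cond:thm:rank1Pencil:bij} directly exhibits the desired $\lambda^*$. Now suppose condition \ref{cond:thm:rank1Pencil:decomp} holds, i.e. there is some $\lambda^*$ with $A+\lambda^*B=\vec{r}_j(\lambda^*)\vec{c}_i(\lambda^*)^\transpose$. The right-hand side is an outer product of two vectors, hence has rank at most $1$; and it is not the zero matrix, because its $(i,j)$ entry equals $(a_{i,j}+\lambda^* b_{i,j})\cdot\frac{1}{a_{i,j}+\lambda^* b_{i,j}}(a_{i,j}+\lambda^* b_{i,j}) = a_{i,j}+\lambda^* b_{i,j}$, and this is nonzero since $\vec{c}_i(\lambda^*)$ is only defined when $a_{i,j}+\lambda^* b_{i,j}\neq 0$. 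Hence $\rank{A+\lambda^*B}=1$.

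\textbf{The harder direction ($\Rightarrow$).} Assume $\lambda^*\in\Ce$ satisfies $\rank{A+\lambda^*B}=1$. First, $\lambda^*=0$ would give $\rank{A}=1$, but then $\rank{A+B}\le\rank{A}+\rank{B}$ gives nothing useful directly, so I instead observe that $\lambda^*\ne 0$ actually need not hold — but the hypothesis $\rank{A+B}>1$ ensures $\lambda^*\neq 1$, which is what matters for the decomposition formulas to be applied at a point distinct from the "base point" of the pencil. Now split on the value of $a_{i,j}+\lambda^* b_{i,j}$. \emph{Case A:} $a_{i,j}+\lambda^* b_{i,j}=0$. Since $a_{i,j}\neq 0$, this forces $b_{i,j}\neq 0$ and $\lambda^*=-a_{i,j}/b_{i,j}$; substituting gives precisely condition \ref{cond:thm:rank1Pencil:bij}. \emph{Case B:} $a_{i,j}+\lambda^* b_{i,j}\neq 0$. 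Then the $(i,j)$ entry of the rank-$1$ matrix $A+\lambda^*B$ is nonzero, so by part 2 of Fact \ref{fact:rank1facts} applied to $M=A+\lambda^*B$ with the pivot entry $m_{i,j}=a_{i,j}+\lambda^* b_{i,j}$, we get $M=\vec{r}\,\vec{c}^\transpose$ where $\vec{r}=M^{(j)}=A^{(j)}+\lambda^* B^{(j)}=\vec{r}_j(\lambda^*)$ and $\vec{c}^\transpose=m_{i,j}^{-1}M_{(i)}=\frac{1}{a_{i,j}+\lambda^* b_{i,j}}(A_{(i)}+\lambda^* B_{(i)})=\vec{c}_i(\lambda^*)^\transpose$. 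This is exactly condition \ref{cond:thm:rank1Pencil:decomp}. The two cases are exhaustive, completing the direction.

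\textbf{Main obstacle.} The genuine content — and the step I expect to require the most care — is verifying that Fact \ref{fact:rank1facts}(2) applies cleanly in Case B, i.e. that the vectors $\vec{r}_j(\lambda^*)$ and $\vec{c}_i(\lambda^*)$ obtained by "plugging $\lambda^*$ into the column/row formulas" coincide with the vectors that Fact \ref{fact:rank1facts}(2) produces from the already-known rank-$1$ matrix $M=A+\lambda^*B$. This is really just linearity of the maps $\lambda\mapsto M^{(j)}$ and $\lambda\mapsto M_{(i)}$, but it is the hinge of the argument and the reason the construction in steps 2--3 of the statement is set up the way it is. A secondary subtlety worth a remark: the hypothesis $\rank{A+B}>1$ is what rules out the degenerate possibility that the pencil is \emph{identically} rank $\le 1$, which would make the "two matrix comparisons" in the algorithm ill-posed; I would note where this is (implicitly) used. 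I would also remark that although $\lambda^*$ ranges over $\Ce$ in the statement, in the game-theoretic application one ultimately needs $\gamma^*=\lambda^*\in\Re_{>0}$, and the polynomial whose roots give $\lambda^*$ (arising from forcing all $2\times2$ minors of $\vec{r}_j(\lambda)\vec{c}_i(\lambda)^\transpose$ against $A+\lambda B$ to vanish) has degree at most $2$, so this reduces to inspecting at most two candidate values — but that quantitative refinement belongs to the subsequent development, not to this theorem's proof.
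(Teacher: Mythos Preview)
Your proof is correct and takes essentially the same approach as the paper's: both split the forward direction on whether the pivot entry $a_{i,j}+\lambda^* b_{i,j}$ vanishes (equivalently, whether $\lambda^*=-a_{i,j}/b_{i,j}$) and invoke Fact~\ref{fact:rank1facts}(2) in the nonvanishing case, and both handle the reverse direction by observing that an outer product has rank at most one and is nonzero at the pivot. Your digressions about $\lambda^*\in\{0,1\}$ are unnecessary but harmless, and you are right that the hypothesis $\rank{A+B}>1$ is not actually used in this theorem's proof---it enters later, in Lemma~\ref{lem:existNotZeroPoly}.
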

\begin{proof}

    We first prove the forward direction. Suppose there exists $\lambda^*\in\Ce$ such that $\rank{A+\lambda^* B}=1$. We split the proof of the forward direction into two cases:
    
    {\bf Case $\lambda^*=\frac{-a_{i,j}}{b_{i,j}}$:} Suppose that $\lambda^*=\frac{-a_{i,j}}{b_{i,j}}$. Since $\lambda^*\in\Ce$, this implies $b_{i,j}\neq0$. Furthermore, $\rank{A+\lambda^* B}=1$ and $\lambda^*=\frac{-a_{i,j}}{b_{i,j}}$ implies $\rank{A+\frac{-a_{i,j}}{b_{i,j}} B}=1$. In addition, note that $\lambda^*=\frac{-a_{i,j}}{b_{i,j}}$ implies that $\vec{c}_i(\lambda^*)$ is undefined; therefore, the expression $A+\lambda^*B=\vec{r}_j(\lambda^*)\vec{c}_i(\lambda^*)^\transpose$ is undefined and cannot hold true.
    
    {\bf Case $\lambda^*\neq \frac{-a_{i,j}}{b_{i,j}}$:} Now, suppose $\lambda^*\neq\frac{-a_{i,j}}{b_{i,j}}$. Then $\rank{A+\frac{-a_{i,j}}{b_{i,j}} B}\neq1$. Also, $a_{i,j}+\lambda^*b_{i,j}\neq0$, so $\vec{c}_i(\lambda^*)$ is well-defined. Then $A+\lambda^*B=\vec{r}_j(\lambda^*)\vec{c}_i(\lambda^*)^\transpose$ follows from Fact \ref{fact:rank1facts}.
    
    We next prove the reverse direction. Suppose $b_{i,j}\neq0$ and $\rank{A+\frac{-a_{i,j}}{b_{i,j}} B}=1$. Then $\lambda^*=\frac{-a_{i,j}}{b_{i,j}}\in\Re\subset\Ce$ and $\rank{A+\lambda^* B}=1$. Of course, as in above, since $\vec{c}_i(\lambda^*)^\transpose$ is undefined at $\lambda^*=\frac{-a_{i,j}}{b_{i,j}}$ by definition, we conclude that $A+\lambda^*B\neq \vec{r}_j(\lambda^*)\vec{c}_i(\lambda^*)^\transpose$.
    
    Now, suppose $A+\lambda^*B=\vec{r}_j(\lambda^*)\vec{c}_i(\lambda^*)^\transpose$, which implies that $\vec{c}_i(\lambda^*)$ is well-defined. This implies that $b_{i,j}=0$ and/or $\rank{A+\frac{-a_{i,j}}{b_{i,j}} B}\neq1$. Furthermore, since $\lambda^*$ is the solution to a system of $m\times n$ linear or quadratic equations with real coefficients, we have $\lambda^*\in\Ce$. Then,  $\rank{A+\lambda^* B}=1$ follows from Fact \ref{fact:rank1facts}.
\end{proof}

Since it is rather trivial to determine whether or not $b_{i,j}\neq0$ and $\rank{A+\frac{-a_{i,j}}{b_{i,j}} B}=1$, we'll assume throughout the rest of this subsection that $\rank{A+\frac{-a_{i,j}}{b_{i,j}} B}\neq1$. 
Let us now examine the matrix equality $A+\lambda B=\vec{r}_j(\lambda)\vec{c}_i(\lambda)^\transpose$, introduce some additional notation, and state some lemmas that allow us to determine whether or not there exists a finite $\lambda^*$ such that $\rank{A+\lambda^* B}=1$.

With $\vec{r}_j(\lambda)$ and $\vec{c}_i(\lambda)$ as defined in Theorem \ref{thm:rank1Pencil}, let us write the following system of equations:
\begin{align}\label{eq:rank1Pencil:system}
    A+\lambda B&=\vec{r}_j(\lambda)\vec{c}_i(\lambda)^\transpose\\
    \begin{bmatrix}
    a_{1,1}+\lambda b_{1,1} &  \dots & a_{1,n}+\lambda b_{1,n} \\
    \vdots                 & \ddots  & \vdots                          \\
    a_{m,1}+\lambda b_{m,1}  & \dots & a_{m,n}+\lambda b_{m,n}
    \end{bmatrix}
    &=
    \begin{bmatrix}\label{eq:rank1Pencil:decompMatrix}
    \frac{(a_{1,j}+\lambda b_{1,j})(a_{i,1}+\lambda b_{i,1})}{a_{i,j}+\lambda b_{i,j}} & \dots & \frac{(a_{1,j}+\lambda b_{1,j})(a_{i,n}+\lambda b_{i,n})}{a_{i,j}+\lambda b_{i,j}} \\
    \vdots                           & \ddots& \vdots                          \\
    \frac{(a_{m,j}+\lambda b_{m,j})(a_{i,1}+\lambda b_{i,1})}{a_{i,j}+\lambda b_{i,j}} &  \dots & \frac{(a_{m,j}+\lambda b_{m,j})(a_{i,n}+\lambda b_{i,n})}{a_{i,j}+\lambda b_{i,j}}
    \end{bmatrix}
\end{align}
Since $\lambda$ is a scalar variable, it is clear from \eqref{eq:rank1Pencil:decompMatrix} that \eqref{eq:rank1Pencil:system} only has a solution (or possibly multiple solutions), $\lambda^*$, if $\lambda^*$ simultaneously satisfies $m\times n$ single-variable polynomials, where each polynomial is of degree at most $2$. Thus, one could solve all $m\times n$ single-variable polynomials and then check whether or not every solution has a common value. While this procedure is somewhat efficient, we'll show below that at most $(m-1)\times(n-1)$ of the polynomials have finite solutions and therefore contribute any meaningful information.  In addition, we'll show that it is sufficient to identify one polynomial that is not the zero polynomial and then conduct a matrix checking problem for the solution(s) of that polynomial.

Let us now introduce notation for the polynomial represented by row $s$ and column $t$ in \eqref{eq:rank1Pencil:decompMatrix}. For $(s,t)\in\{1\dots n\}\times \{1\dots m\}$, let 
\begin{equation}\label{eq:fst}
f_{s,t}(i,j;\lambda)=a_{s,t}+\lambda b_{s,t}-\frac{(a_{s,j}+\lambda b_{s,j})(a_{i,t}+\lambda b_{i,t})}{a_{i,j}+\lambda b_{i,j}}
\end{equation}

From \eqref{eq:fst}, it is clear that when $s=i$ or $t=j$, then $f_{i,t}(i,j;\lambda)$ and $f_{s,j}(i,j;\lambda)$ are the zero polynomial.  In other words, $f_{s,t}(i,j;\hat{\lambda})=0$ trivially holds true for all $\hat{\lambda}\in\Ce$ for one entire column and one entire row of \eqref{eq:rank1Pencil:decompMatrix}. Since these $m+n-1$ expressions hold true for all $\hat{\lambda}\in\Ce$, they lend no information for determining whether or not there exists $\lambda^*$ such that $\rank{A+\lambda^* B}=1$. Thus, we can disregard these $m+n-1$ polynomials and only consider the remaining $(m-1)\times(n-1)$ polynomials. Let us now show that at least one of the remaining $(m-1)\times(n-1)$ polynomials is not the zero polynomial. The proof of the Lemma is deferred to Appendix \ref{app:randomProofs}.    

\begin{lemma}\label{lem:existNotZeroPoly}
Let  $A,B\in\Re^{m \times n}$, and assume that $A,B\neq\vec{0}_{m\times n}$. Given the matrix pencil, $A+\lambda B$, construct $\lambda$-vectors $\vec{r}_j(\lambda)$ and $\vec{c}_i(\lambda)$ as in Theorem \ref{thm:rank1Pencil} and $f_{s,t}(i,j;\lambda)$ as in \eqref{eq:fst}. 
If $\rank{A+B}>1$, then there exists at least one pair $(s,t)\in\{1\dots n\}\times \{1\dots m\}$ such that $f_{s,t}(i,j;\lambda)$ is not the zero polynomial.    
\end{lemma}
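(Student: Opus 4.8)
The plan is to prove the contrapositive: supposing that $f_{s,t}(i,j;\lambda)$ is the zero polynomial for every admissible pair $(s,t)$ with $s\neq i$ and $t\neq j$, I will deduce that $\rank{A+B}\leq 1$.

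First I would clear the denominator in \eqref{eq:fst}. Because $a_{i,j}\neq 0$ by the choice made in Theorem \ref{thm:rank1Pencil}, the polynomial $a_{i,j}+\lambda b_{i,j}$ is not identically zero, so $f_{s,t}(i,j;\lambda)\equiv 0$ is equivalent to the vanishing of
\[
g_{s,t}(\lambda):=(a_{i,j}+\lambda b_{i,j})(a_{s,t}+\lambda b_{s,t})-(a_{i,t}+\lambda b_{i,t})(a_{s,j}+\lambda b_{s,j}).
\]
Writing $M(\lambda):=A+\lambda B$ with entries $m_{s,t}(\lambda)=a_{s,t}+\lambda b_{s,t}$, one recognizes $g_{s,t}(\lambda)$ as the $2\times 2$ minor of $M(\lambda)$ on rows $\{i,s\}$ and columns $\{j,t\}$. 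Since the identity $m_{i,j}m_{s,t}=m_{i,t}m_{s,j}$ also holds trivially when $s=i$ or $t=j$, the hypothesis $g_{s,t}\equiv 0$ yields the family of polynomial identities
\[
m_{i,j}(\lambda)\,m_{s,t}(\lambda)=m_{i,t}(\lambda)\,m_{s,j}(\lambda)\qquad\text{for all } s,t.
\]

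Next I would use these identities to show that \emph{every} $2\times 2$ minor of the pencil $M(\lambda)$ is the zero polynomial, not merely those meeting position $(i,j)$. For arbitrary rows $s_1,s_2$ and columns $t_1,t_2$, multiplying the minor $m_{s_1,t_1}m_{s_2,t_2}-m_{s_1,t_2}m_{s_2,t_1}$ by $m_{i,j}(\lambda)^2$ and substituting the identity into each of the four resulting factors collapses the product to $0$; since $\Re[\lambda]$ is an integral domain and $m_{i,j}(\lambda)\not\equiv 0$, the minor itself must be the zero polynomial. (Equivalently, over the rational function field $\Re(\lambda)$ the matrix $M(\lambda)$ has the nonzero entry $m_{i,j}$ with all $2\times 2$ minors through it vanishing, so the reasoning of Fact \ref{fact:rank1facts}, applied over this field, gives $M(\lambda)=\vec{r}_j(\lambda)\vec{c}_i(\lambda)^\transpose$, i.e. $M(\lambda)$ has rank at most one as a matrix over $\Re(\lambda)$.) Having all $2\times 2$ minors of $M(\lambda)$ vanish identically, I would then specialize $\lambda=1$: every $2\times 2$ minor of $A+B=M(1)$ vanishes, hence $\rank{A+B}\leq 1$, which is precisely the contrapositive.

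The step I expect to be the main obstacle is exactly this transition from polynomial identities in $\lambda$ to a rank statement about $A+B$: one cannot simply evaluate the rank-one factorization $M(\lambda)=\vec{r}_j(\lambda)\vec{c}_i(\lambda)^\transpose$ at $\lambda=1$, because the denominator $a_{i,j}+\lambda b_{i,j}$ may vanish precisely there. Clearing denominators first, so as to work with genuine polynomials and the full set of $2\times 2$ minors before specializing, is what makes the argument go through; equivalently, one may invoke the fact that the rank of a polynomial matrix over $\Re(\lambda)$ dominates the rank of each of its scalar specializations, so rank at most one over $\Re(\lambda)$ forces $\rank{A+\lambda_0 B}\le 1$ for every $\lambda_0\in\Re$, $\lambda_0=1$ included.
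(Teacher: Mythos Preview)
Your proposal is correct and follows the same contrapositive strategy as the paper: assume every $f_{s,t}$ is the zero polynomial and deduce $\rank{A+B}\leq 1$. The paper's argument is shorter---it simply notes that $f_{s,t}(i,j;\hat\lambda)=0$ for all $(s,t)$ and all $\hat\lambda$ forces $A+\hat\lambda B=\vec{r}_j(\hat\lambda)\vec{c}_i(\hat\lambda)^\transpose$ for every $\hat\lambda$, then specializes to $\hat\lambda=1$.

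You take the same route but are more careful about a point the paper glosses over: the denominator $a_{i,j}+\lambda b_{i,j}$ might vanish precisely at $\lambda=1$, so the factorization $M(\lambda)=\vec{r}_j(\lambda)\vec{c}_i(\lambda)^\transpose$ cannot be evaluated there directly. By clearing denominators and recasting the hypothesis as the identical vanishing of all $2\times 2$ minors of $A+\lambda B$ (either those through $(i,j)$ and then all of them, or equivalently by invoking that generic rank over $\Re(\lambda)$ dominates every scalar specialization), you can safely set $\lambda=1$ regardless of whether $a_{i,j}+b_{i,j}=0$. This buys you a fully rigorous argument at the cost of a few extra lines; the paper's version is morally identical but tacitly assumes the specialization at $\lambda=1$ is licit.
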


\begin{theorem}\label{thm:rank1PencilSolveLambda}
    Let  $A,B\in\Re^{m \times n}$, and assume that $A,B\neq\vec{0}_{m\times n}$ with $\rank{A+B}>1$. Given the matrix pencil, $A+\lambda B$, construct $\lambda$-vectors $\vec{r}_j(\lambda)$ and $\vec{c}_i(\lambda)$ as in Theorem \ref{thm:rank1Pencil}. Pick any $(l,k)\in\{1\dots n\}\times \{1\dots m\}$ such that $f_{l,k}(i,j;\lambda)$ is not the zero polynomial. Let $\hat{\lambda}_1,\hat{\lambda}_2$ be solutions to $f_{l,k}(i,j;\lambda)=0$. Then, there exists $\lambda^*\in\Ce$ such that $\rank{A+\lambda^*B}=1$ if and only if $A+\hat{\lambda}_1 B=\vec{r}_j(\hat{\lambda}_1)\vec{c}_i(\hat{\lambda}_1)^\transpose$ or/and $A+\hat{\lambda}_2 B=\vec{r}_j(\hat{\lambda}_2)\vec{c}_i(\hat{\lambda}_2)^\transpose$. 
\end{theorem}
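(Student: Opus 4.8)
The plan is to obtain Theorem~\ref{thm:rank1PencilSolveLambda} as a computational sharpening of Theorem~\ref{thm:rank1Pencil}. By Theorem~\ref{thm:rank1Pencil}, under the standing assumption of this subsection that condition~\ref{cond:thm:rank1Pencil:bij} fails (either $b_{i,j}=0$, or $b_{i,j}\ne0$ and $\rank{A+\tfrac{-a_{i,j}}{b_{i,j}}B}\ne1$), the pencil $A+\lambda B$ attains rank $1$ at some $\lambda^*\in\Ce$ if and only if the matrix identity $A+\lambda^*B=\vec{r}_j(\lambda^*)\vec{c}_i(\lambda^*)^\transpose$ holds (condition~\ref{cond:thm:rank1Pencil:decomp}). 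The only new ingredient I need is the observation that this identity need not be checked symbolically over all $\lambda$: any $\lambda^*$ satisfying it must in particular annihilate the single scalar expression $f_{l,k}(i,j;\cdot)$, and by Lemma~\ref{lem:existNotZeroPoly} we may choose $(l,k)$ so that $f_{l,k}(i,j;\cdot)$ is not the zero polynomial; since (after clearing the denominator $a_{i,j}+\lambda b_{i,j}$) it is a polynomial of degree at most $2$, it has at most two roots $\hat{\lambda}_1,\hat{\lambda}_2$, among which $\lambda^*$ must lie. So checking the identity at $\hat{\lambda}_1$ and $\hat{\lambda}_2$ suffices.

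\textbf{Reverse direction.} Suppose $A+\hat{\lambda}_p B=\vec{r}_j(\hat{\lambda}_p)\vec{c}_i(\hat{\lambda}_p)^\transpose$ for some $p\in\{1,2\}$. Well-definedness of $\vec{c}_i(\hat{\lambda}_p)$ forces $a_{i,j}+\hat{\lambda}_p b_{i,j}\ne0$, so the $(i,j)$ entry of $A+\hat{\lambda}_p B$ is nonzero and the matrix is a nonzero outer product; moreover $\hat{\lambda}_p\in\Ce$ because it is a root of a real polynomial of degree $\le2$. Theorem~\ref{thm:rank1Pencil} (its reverse direction, via Fact~\ref{fact:rank1facts}) then gives $\rank{A+\hat{\lambda}_p B}=1$, so $\lambda^*:=\hat{\lambda}_p$ witnesses the claim.

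\textbf{Forward direction.} Suppose $\lambda^*\in\Ce$ satisfies $\rank{A+\lambda^*B}=1$. The case $\lambda^*=-a_{i,j}/b_{i,j}$ is excluded: it would force $b_{i,j}\ne0$ and $\rank{A+\tfrac{-a_{i,j}}{b_{i,j}}B}=1$, i.e.\ condition~\ref{cond:thm:rank1Pencil:bij}, contrary to the standing assumption. Hence $a_{i,j}+\lambda^*b_{i,j}\ne0$, $\vec{c}_i(\lambda^*)$ is well-defined, and Theorem~\ref{thm:rank1Pencil} yields $A+\lambda^*B=\vec{r}_j(\lambda^*)\vec{c}_i(\lambda^*)^\transpose$. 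Reading off the $(l,k)$ entry of this identity gives $f_{l,k}(i,j;\lambda^*)=0$ (equivalently, the $2\times2$ minor of $A+\lambda^*B$ on rows $\{i,l\}$ and columns $\{j,k\}$ vanishes). Since $f_{l,k}(i,j;\cdot)$ is, after clearing denominators, a nonzero polynomial of degree at most $2$, its roots are contained in $\{\hat{\lambda}_1,\hat{\lambda}_2\}$; thus $\lambda^*\in\{\hat{\lambda}_1,\hat{\lambda}_2\}$ and the corresponding identity $A+\hat{\lambda}_p B=\vec{r}_j(\hat{\lambda}_p)\vec{c}_i(\hat{\lambda}_p)^\transpose$ holds.

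\textbf{Anticipated obstacle.} The conceptual content is light once Theorem~\ref{thm:rank1Pencil} and Lemma~\ref{lem:existNotZeroPoly} are in hand; the care is all in the denominator bookkeeping around \eqref{eq:fst}. Strictly, $f_{l,k}(i,j;\cdot)$ is a rational function, so ``roots of $f_{l,k}(i,j;\lambda)=0$'' must be read as the roots of its numerator --- the relevant $2\times2$ minor of $A+\lambda B$, a real polynomial of degree $\le2$. One must check that nothing is lost when a numerator root coincides with the pole $\lambda=-a_{i,j}/b_{i,j}$: there $\vec{c}_i$ is undefined, so that disjunct of the theorem is simply false, and in the forward direction such a value cannot equal $\lambda^*$ since $a_{i,j}+\lambda^*b_{i,j}\ne0$. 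I would also make explicit the (routine) degenerate sub-cases where $f_{l,k}(i,j;\cdot)$ has degree $0$ or $1$ --- a unique root or no finite root --- which fit the statement under the convention that a non-existent $\hat{\lambda}_p$ makes its disjunct false.
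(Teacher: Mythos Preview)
Your proposal is correct and follows essentially the same approach as the paper: both directions reduce to Theorem~\ref{thm:rank1Pencil}, with the forward direction observing that any $\lambda^*$ satisfying the rank-$1$ identity must annihilate $f_{l,k}(i,j;\cdot)$, hence lie among its at most two roots. Your treatment is in fact slightly more careful than the paper's --- you make explicit the exclusion of $\lambda^*=-a_{i,j}/b_{i,j}$ via the standing assumption and flag the rational-function versus polynomial distinction for $f_{l,k}$ --- but the underlying argument is the same.
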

\begin{proof}
    Let us first note that, by Lemma \ref{lem:existNotZeroPoly}, there exists at least one pair $(l,k)\in\{1\dots n\}\times \{1\dots m\}$ such that $f_{l,k}(i,j;\lambda)$ is not the zero polynomial.
    
    Now, suppose there exists $\lambda^*\in\Ce$ such that $\rank{A+\lambda^* B}=1$. Then, by Theorem \ref{thm:rank1Pencil}, $A+\lambda^*B=\vec{r}_j(\lambda^*)\vec{c}_i(\lambda^*)^\transpose$. Thus, $f_{s,t}(i,j;\lambda^*)=0$ for all $(s,t)\in\{1\dots n\}\times \{1\dots m\}$. In particular, $f_{l,k}(i,j;\lambda^*)=0$. Therefore, either $\hat{\lambda}_1=\lambda^*$ or/and $\hat{\lambda}_2=\lambda^*$. It then follows that $A+\hat{\lambda}_1B=\vec{r}_j(\hat{\lambda}_1)\vec{c}_i(\hat{\lambda}_1)^\transpose$ or/and $A+\hat{\lambda}_2 B=\vec{r}_j(\hat{\lambda}_2)\vec{c}_i(\hat{\lambda}_2)^\transpose$.
    
    In the other direction, let $\hat{\lambda}_1,\hat{\lambda}_2$ be solutions to $f_{l,k}(i,j;\lambda)=0$. Note that $f_{l,k}(i,j;\lambda)=0$ may be a linear equation.  In that case, for simplicity, let $\hat{\lambda}_1=\hat{\lambda}_2$.    
    Now, suppose $A+\hat{\lambda}_1B=\vec{r}_j(\hat{\lambda}_1)\vec{c}_i(\hat{\lambda}_1)^\transpose$ and let $\lambda^*=\lambda_1$. Then $A+\lambda^*B=\vec{r}_j(\lambda^*)\vec{c}_i(\lambda^*)^\transpose$ and $\rank{A+\lambda^*B}=1$ by Theorem \ref{thm:rank1Pencil}.  The case of $A+\hat{\lambda}_2 B=\vec{r}_j(\hat{\lambda}_2)\vec{c}_i(\hat{\lambda}_2)^\transpose$ is similar and therefore omitted.  
\end{proof}
Let us use $\Lambda$ to represent the solution set obtained from Theorem \ref{thm:rank1PencilSolveLambda}. Note, $\Lambda$ has a maximum cardinality of $2$ and may, of course, be empty.

Throughout the presentation in this subsection, we've assumed $\rank{A+B}>1$. For completeness, we present the next trivial lemma without proof.
\begin{lemma}\label{lem:trivialLambda}
If $\rank{A+B}=1$, then $\lambda^*=1$ and $\rank{A+\lambda^*B}=1$. 
\end{lemma}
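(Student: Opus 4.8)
Lemma \ref{lem:trivialLambda} states that if $\rank{A+B}=1$, then $\lambda^*=1$ is a witness to the existence of some $\lambda^*$ with $\rank{A+\lambda^*B}=1$. This is essentially immediate: the plan is simply to observe that the pencil $A+\lambda B$ evaluated at $\lambda = 1$ yields $A + 1\cdot B = A+B$, which is assumed to have rank exactly $1$. Hence taking $\lambda^* = 1$ exhibits a (real, hence complex) scalar for which the pencil attains rank $1$.

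The only point worth spelling out is why one bothers to state this separately: every result in the subsection preceding it carried the standing hypothesis $\rank{A+B}>1$ (needed so that $\lambda^*=1$ is not a trivial solution and so that the row/column constructions in Theorem \ref{thm:rank1Pencil} and the non-zero-polynomial argument of Lemma \ref{lem:existNotZeroPoly} are nonvacuous). This lemma closes the remaining boundary case $\rank{A+B}=1$, ensuring the dichotomy is complete: either $A+B$ already has rank $1$, in which case $\gamma^*=1$ (equivalently $\lambda^*=1$) works directly and no pencil machinery is needed, or $\rank{A+B}>1$ and one proceeds via Theorem \ref{thm:rank1PencilSolveLambda}.

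There is no real obstacle here; the statement is true essentially by definition of the pencil evaluated at $1$. I would present it in one line: since $A+\lambda^* B\big|_{\lambda^*=1}=A+B$ and $\rank{A+B}=1$ by hypothesis, the claim follows. If one wanted to connect back to the game-theoretic setting of Theorem \ref{thm:rank1GammaUnknown}, one would additionally note that $\gamma^*=1\in\Re_{>0}$, so this boundary case still produces a genuine positive affine transformation and hence a bona fide strategically equivalent rank-$1$ game.
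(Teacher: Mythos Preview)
Your proposal is correct and matches the paper's treatment: the paper explicitly presents this lemma without proof, calling it trivial, and your one-line argument that $A+\lambda^*B\big|_{\lambda^*=1}=A+B$ has rank $1$ by hypothesis is exactly the intended justification. Your additional remarks on why the case is isolated (the standing hypothesis $\rank{A+B}>1$ elsewhere) are accurate context but go beyond what the paper bothers to say.
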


\subsection{Solving for \texorpdfstring{$\gamma^*$}{}} 
Let us now present our main result for solving for $\gamma^*$ and calculating the equivalent rank-$1$ game.
\begin{theorem}
	Consider the game $(m,n,\tilde{A},\tilde{B})$ and formulate $\bar{A}+\lambda\bar{B}$, $\hat{A}(\lambda)$, and $\hat{B}(\lambda)$ as in Theorem \ref{thm:rank1GammaUnknown}. Construct the set $\Lambda$ as follows: If $\rank{\bar{A}+\bar{B}}=1$, let $\Lambda=\{1\}$.  Otherwise, construct $\lambda$-vectors $\vec{r}_j(\lambda)$ and $\vec{c}_i(\lambda)$ as in Theorem \ref{thm:rank1Pencil}. If $b_{i,j}\neq0$ and $\rank{A+\frac{-a_{i,j}}{b_{i,j}} B}=1$, let $\Lambda=\{\frac{-a_{i,j}}{b_{i,j}}\}$. Otherwise, apply Theorem \ref{thm:rank1PencilSolveLambda} to calculate the solution set $\Lambda$. 
	
	If there exists $\lambda\in\Lambda$ such that $\lambda\in\Re_{>0}$, then let $\gamma^*$ equal such a $\lambda$. Then the game $(m,n,\tilde{A},\tilde{B})$ is strategically equivalent to the rank-$1$ game $(m,n,\hat{A}(\gamma^*),\hat{B}(\gamma^*))$. 
\end{theorem}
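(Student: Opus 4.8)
The plan is to reduce the whole statement to Theorem~\ref{thm:rank1GammaUnknown}, which already delivers the conclusion once we know there is a $\gamma^*\in\Re_{>0}$ with $\rank{\bar{A}+\gamma^*\bar{B}}=1$. So the only real work is to check that the recipe used to build $\Lambda$ never outputs a $\lambda$ with $\rank{\bar{A}+\lambda\bar{B}}\neq1$; granting that, the hypothesis ``there is $\lambda\in\Lambda$ with $\lambda\in\Re_{>0}$'' together with the choice of $\gamma^*$ as such a $\lambda$ hands Theorem~\ref{thm:rank1GammaUnknown} exactly what it needs, with the pencil $A+\lambda B$ there instantiated as $\bar{A}+\lambda\bar{B}=\tilde{C}_3$. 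I would therefore first record this reduction, and then run through the construction of $\Lambda$ branch by branch.

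If $\rank{\bar{A}+\bar{B}}=1$, then $\Lambda=\{1\}$, and Lemma~\ref{lem:trivialLambda} applied to the pencil $\bar{A}+\lambda\bar{B}$ gives $\rank{\bar{A}+1\cdot\bar{B}}=1$, so the single element of $\Lambda$ is admissible. Otherwise $\rank{\bar{A}+\bar{B}}>1$, which is precisely the standing hypothesis under which Theorem~\ref{thm:rank1Pencil} and Theorem~\ref{thm:rank1PencilSolveLambda} apply to $\bar{A}+\lambda\bar{B}$. If the short-circuit test fires --- $\bar{b}_{i,j}\neq0$ and $\rank{\bar{A}+\tfrac{-\bar{a}_{i,j}}{\bar{b}_{i,j}}\bar{B}}=1$ --- then $\Lambda=\{-\bar{a}_{i,j}/\bar{b}_{i,j}\}$ and its one element is admissible by definition. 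Otherwise $\Lambda$ is the set returned by Theorem~\ref{thm:rank1PencilSolveLambda}: by the construction there, every $\lambda\in\Lambda$ is a root $\hat{\lambda}$ of a polynomial $f_{l,k}(i,j;\cdot)$ that is not identically zero (such a polynomial exists by Lemma~\ref{lem:existNotZeroPoly}) for which $\bar{A}+\hat{\lambda}\bar{B}=\vec{r}_j(\hat{\lambda})\vec{c}_i(\hat{\lambda})^\transpose$; then condition~(2) of Theorem~\ref{thm:rank1Pencil} yields $\rank{\bar{A}+\hat{\lambda}\bar{B}}=1$, the outer product being nonzero because well-definedness of $\vec{c}_i(\hat{\lambda})$ forces its $(i,j)$ entry $\bar{a}_{i,j}+\hat{\lambda}\bar{b}_{i,j}$ to be nonzero, so Fact~\ref{fact:rank1facts} applies. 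In every branch, then, each element of $\Lambda$ satisfies $\rank{\bar{A}+\lambda\bar{B}}=1$; taking $\gamma^*\in\Lambda\cap\Re_{>0}$ and invoking Theorem~\ref{thm:rank1GammaUnknown} gives that $(m,n,\tilde{A},\tilde{B})$ is strategically equivalent to $(m,n,\hat{A}(\gamma^*),\hat{B}(\gamma^*))$.

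The part I expect to need the most care --- and the main obstacle --- is the bookkeeping of the index pair and of degenerate pencils. The matrices $\bar{A},\bar{B}$ coming out of the two Wedderburn steps in Theorem~\ref{thm:rank1GammaUnknown} have their $i$-th row and $j$-th column identically zero for the pair $(i,j)$ chosen there, so the index pair used to build $\vec{r}_j(\lambda)$, $\vec{c}_i(\lambda)$ and to run the short-circuit test must be a \emph{fresh} pair with a nonzero entry of $\bar{A}$, and this relabelling has to be reconciled with the hypotheses $A,B\neq\vec{0}_{m\times n}$ of Theorems~\ref{thm:rank1Pencil} and \ref{thm:rank1PencilSolveLambda}. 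If no nonzero entry of $\bar{A}$ exists, then $\bar{A}=\vec{0}_{m\times n}$ and $\rank{\bar{A}+\lambda\bar{B}}=\rank{\bar{B}}$ for every $\lambda\neq0$: either $\rank{\bar{B}}=1$, which is already the first branch, or $\rank{\bar{B}}\neq1$, in which case no admissible $\gamma^*$ exists and the statement is vacuously true; a symmetric remark handles $\bar{B}=\vec{0}_{m\times n}$. Once these corner cases are set aside and the notation aligned, nothing new is computed --- the proof is just the chaining of Lemma~\ref{lem:trivialLambda}, Theorem~\ref{thm:rank1Pencil}, Theorem~\ref{thm:rank1PencilSolveLambda}, and Theorem~\ref{thm:rank1GammaUnknown} laid out above.
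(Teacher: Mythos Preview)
Your proposal is correct and follows essentially the same approach as the paper: the paper's own proof is the single sentence ``The proof follows from Theorems~\ref{thm:rank1GammaUnknown}, \ref{thm:rank1Pencil}, \ref{thm:rank1PencilSolveLambda}, and Lemma~\ref{lem:trivialLambda},'' and your argument is exactly the branch-by-branch unpacking of that chain, with some additional care about degenerate pencils that the paper leaves implicit.
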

\begin{proof}
    The proof follows from Theorems \ref{thm:rank1GammaUnknown}, \ref{thm:rank1Pencil}, \ref{thm:rank1PencilSolveLambda}, and Lemma \ref{lem:trivialLambda}.
\end{proof}

\section{Algorithm for Strategically Equivalent Rank-1 Games (SER1)}\label{sec:algSER1}
We have shown that given the game $(m,n,\tilde{A},\tilde{B})$, one can determine if the game is strategically equivalent to the game $(m,n,A,-A+\vec{r}\vec{c}^\transpose)$ through a PAT. If so, then it is possible to construct a rank-$1$ game which is strategically equivalent to the original game. One can then efficiently solve the strategically equivalent rank-$1$ game via existing polynomial time algorithms \cite{adsul2011rank,adsul2019}.  We state the key steps in Algorithm \ref{alg:solveRank1Short} and show that both the determination of strategic equivalence and the computation of the strategically equivalent rank-$1$ game can be done in time $\bigO{(mn+M(\mc{L}))}$, where $\mc{L}$ is the bit-length of the largest absolute value of entries in $(\tilde{A},\tilde{B})$, and $M(\mc{L})$ is the complexity of multiplication.

The analytical results and discussions throughout this paper apply to real bimatrix games, with $(m,n,\tilde{A},\tilde{B})\in\Re^{m\times n} \times \Re^{m\times n}$. However, for computational reasons, when discussing the algorithmic implementations we focus on rational bimatrix games, with  $(m,n,\tilde{A},\tilde{B})\in\Q^{m\times n} \times \Q^{m\times n}$.

We present here a shortened version of the SER1 algorithm that applies when $\rank{D}=2$. The cases of $\rank{D}<2$ are similar, but add complexity to the presentation.  See Appendix \ref{app:algAnalysis} for a discussion of extending \textsc{ShortSER1} to handle these cases.

\begin{theorem}\label{thm:compTimeSER1}
The \textsc{ShortSER1} algorithm determines if a game $(m,n,\tilde{A},\tilde{B})$ is strategically equivalent to a rank-$1$ game and returns the strategically equivalent rank-$1$ game in time $\bigO{(mn+M(\mc{L}))}$.
\end{theorem}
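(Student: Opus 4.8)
The plan is to split the statement into a correctness part and a running-time part. Correctness is essentially inherited from Section~\ref{sec:proofMainResult}: for the fixed pair $(i,j)$ chosen by \textsc{ShortSER1}, Theorem~\ref{thm:rank1GammaUnknown} guarantees that whenever the algorithm produces $\gamma^{*}\in\Re_{>0}$ with $\rank{\bar A+\gamma^{*}\bar B}=1$, the returned pair $(\hat A(\gamma^{*}),\hat B(\gamma^{*}))$ is a rank-$1$ game strategically equivalent to $(m,n,\tilde A,\tilde B)$; conversely, if $(m,n,\tilde A,\tilde B)$ is strategically equivalent through a PAT to some rank-$1$ game, then Proposition~\ref{prop:forward} forces the existence of a suitable $\gamma\in\Re_{>0}$, which in the $\rank{D}=2$ regime handled by \textsc{ShortSER1} makes $\rank{\bar A+\gamma\bar B}=1$, and Theorems~\ref{thm:rank1Pencil} and~\ref{thm:rank1PencilSolveLambda} together with Lemma~\ref{lem:trivialLambda} ensure that such a $\gamma$ lies in the set $\Lambda$ the algorithm actually computes. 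Hence \textsc{ShortSER1} answers ``yes'' exactly when it should, and in that case returns a valid witness game.

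For the running time I would walk through the algorithm step by step, charging each step both a count of arithmetic operations and a bound on the bit-length of the operands.
\begin{enumerate}
    \item Forming $\bar A,\bar B$ from \eqref{eq:C3LongForm}: each entry equals $\bar a_{s,t}=\tilde a_{s,t}-\tilde a_{i,t}-\tilde a_{s,j}+\tilde a_{i,j}$ (and similarly for $\bar B$), a fixed number of additions of input entries, for a total of $\bigO{(mn)}$ operations on $\bigO{(\mc{L})}$-bit numbers.
    \item Testing $\rank{\bar A+\bar B}=1$, and if needed $\rank{\bar A+\frac{-a_{i,j}}{b_{i,j}}\bar B}=1$: by Fact~\ref{fact:rank1facts} each test reduces to picking a nonzero entry and checking the cross-multiplication identities $m_{i,j}m_{s,t}=m_{s,j}m_{i,t}$ over all $(s,t)$, i.e.\ $\bigO{(mn)}$ operations.
    \item Finding $(l,k)$ with $f_{l,k}(i,j;\lambda)$ not identically zero: such a pair exists by Lemma~\ref{lem:existNotZeroPoly}, and each polynomial has at most three coefficients, each a fixed arithmetic expression in a constant number of input entries, so the search costs $\bigO{(mn)}$.
    \item Solving the (at most quadratic) equation $f_{l,k}(i,j;\lambda)=0$: $\bigO{(1)}$ arithmetic operations, but this is the one place where genuinely large multiplications and divisions (and a square root) on $\bigO{(\mc{L})}$-bit numbers occur, contributing $\bigO{(M(\mc{L}))}$.
    \item For each of the at most two candidates $\hat\lambda\in\Lambda$, checking $\bar A+\hat\lambda\bar B=\vec{r}_j(\hat\lambda)\vec{c}_i(\hat\lambda)^{\transpose}$ and $\hat\lambda\in\Re_{>0}$, then assembling $(\hat A(\gamma^{*}),\hat B(\gamma^{*}))$: $\bigO{(mn)}$ operations.
\end{enumerate}
Summing, \textsc{ShortSER1} performs $\bigO{(mn)}$ elementary operations plus $\bigO{(1)}$ multiplications/divisions of $\bigO{(\mc{L})}$-bit quantities, which is $\bigO{(mn+M(\mc{L}))}$ provided no intermediate number grows beyond $\bigO{(\mc{L})}$ bits.

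I expect the main obstacle to be exactly that last proviso: the careful bit-complexity bookkeeping required to keep the bound additive rather than multiplicative. Concretely, I would need to verify that constructing $\bar A,\bar B$, carrying out the Wedderburn-type reductions implicitly, and running the cross-multiplication rank tests never inflate operands past $\bigO{(\mc{L})}$ bits, so that each of the $\bigO{(mn)}$ elementary steps stays cheap and the full $M(\mc{L})$ cost is paid only $\bigO{(1)}$ times. A secondary subtlety is the nature of $\gamma^{*}$: since it solves a quadratic with rational coefficients it may be irrational, so I must argue that deciding $\gamma^{*}\in\Re_{>0}$ and propagating $\gamma^{*}$ into $\hat A(\gamma^{*}),\hat B(\gamma^{*})$ before the game is handed to the rank-$1$ solver of \cite{adsul2011rank,adsul2019} can both be done within the stated budget --- for instance by keeping $\gamma^{*}$ in an exact form $(p+q\sqrt d)/r$ and observing that only its sign and a constant number of comparisons are needed before the hand-off.
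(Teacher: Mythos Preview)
Your proposal is correct and follows essentially the same approach as the paper: a step-by-step walk through \textsc{ShortSER1}, bounding each phase (forming $\bar A,\bar B$; the rank-$1$ tests via Fact~\ref{fact:rank1facts}; the $\bigO{(mn)}$ search for a nontrivial $f_{s,t}$; solving one quadratic at cost $M(\mc{L})$; the final matrix check and assembly of $\hat A,\hat B$). If anything, you are more careful than the paper, which treats the theorem purely as a running-time claim and does not separately argue correctness, does not track operand bit-lengths beyond noting that the quadratic's coefficients have length $\bigO{(\mc{L})}$, and does not mention the irrational-$\gamma^{*}$ issue at all. Your added scrutiny of the additive-versus-multiplicative cost and the symbolic handling of $\gamma^{*}$ goes beyond what the paper's own proof supplies.
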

\begin{proof}
Calculating $\bar{A},\bar{B}$ is simply a series of vector outer products and matrix subtractions. Therefore, it takes time $\bigO{(mn)}$. By Fact \ref{fact:rank1facts}, determining whether a matrix is rank-$1$ or not is equivalent to $mn$ divisions and $mn$ comparisons, and therefore $\bigO{(mn)}$. As mentioned in Subsection \ref{subsec:rank1Pencils}, finding an $(s,t)\in\{1\dots m\} \times \{1\dots n\}$ such that $f_{s,t}(i,j;\lambda)$ is not the zero polynomial requires a search over a space of $(m-1)\times(n-1)$, thus time $\bigO{(mn)}$. Calculating the coefficients of $f_{s,t}(i,j;\lambda)$ requires at most $4$ scalar multiplications and $3$ scalar additions/subtractions, which requires time $\bigO{(1)}$. Here, we note that the bit length of the coefficients is thus $2\mc{L}+1$, which is $\bigO{(\mc{L})}$. Now, at worst solving $f_{s,t}(i,j;\lambda)=0$ is equivalent to computing a square root, which has time complexity $M(\mc{L})$, where $M(\mc{L})$ is the complexity of the chosen algorithm for multiplying two $\mc{L}$ bit numbers \cite{alt1979square}. Furthermore, we note that only one such $f_{s,t}(i,j;\lambda)=0$ must be solved. With candidate values of $\hat{\lambda}$ thus determined, checking whether $\bar{A}+\hat{\lambda}\bar{B}=\vec{r}_j(\hat{\lambda})\vec{c}_i(\hat{\lambda})^\transpose$ requires one vector outer product and one matrix comparison, thus takes time $\bigO{(mn)}$. Finally, calculating $\hat{A}$ and $\hat{B}$ requires a series of vector outer products, scalar-matrix multiplications and matrix additions/subtractions, which can be done in time $\bigO{(mn)}$. Thus, overall, the algorithm takes time $\bigO{(mn+M(\mc{L}))}$.
\end{proof}

Of course, after calculating the equivalent rank-$1$ game, one would likely wish to calculate an NE of that game.  The currently known algorithms for calculating an NE of a rank-$1$ game require at least one call to a linear program solver \cite{adsul2011rank,adsul2019}. At the time of this writing, solving a linear program takes more time than $\bigO{(mn+M(\mc{L}))}$. Thus, the overall running time of solving an NE for a strategically equivalent rank-$1$ game is not dominated by our algorithm.

\section{Conclusion} \label{sec:conclusion}
This paper identifies a class of games that can be reduced to a rank-$1$ game in polynomial time. This yields a polynomial time algorithm to determine a Nash equilibrium of a large class of nonzero-sum games, which may include full rank games. Consequently, we have identified a manifold in the space of games that are solvable in polynomial time.

For the future, we intend to analyze nonzero-sum games for which one can derive a rank-$1$ game that approximates the original game. This can pave way for new methods to compute approximate Nash equilibria in bimatrix games.

\begin{algorithm}
	\caption{Condensed algorithm for identifying a strategically equivalent rank-1 game}
	\label{alg:solveRank1Short}
	\begin{algorithmic}[1] 
		\Procedure{ShortSER1}{$\tilde{A},\tilde{B}$}
		\State $\Lambda\gets\emptyset$
		\State choose $(l,k)\in\{1\dots m\} \times \{1\dots n\}$
		\State $\bar{A}\gets\tilde{A}-\vec{1}_m\tilde{A}_{(l)}-(\tilde{A}^{(k)}-\vec{1}_m\tilde{a}_{l,k})\vec{1}_n^\transpose$
		\State $\bar{B}\gets\tilde{B}-\vec{1}_m\tilde{B}_{(l)}-(\tilde{B}^{(k)}-\vec{1}_m\tilde{b}_{l,k})\vec{1}_n^\transpose$
		\If{$\rank{\bar{A}+\bar{B}}=1$} 
		\State $\Lambda\cup\{1\}$
		\Else
		    \State choose $(i,j)\in\{1\dots m\} \times \{1\dots n\}$ s.t. $\bar{a}_{i,j}\neq0$
		    \If{$\bar{b}_{i,j}\neq0$}
		        \If{$\rank{\bar{A}+\frac{-\bar{a}_{i,j}}{\bar{b}_{i,j}}\bar{B}}=1$}
		            \State $\Lambda\cup\{\frac{-\bar{a}_{i,j}}{\bar{b}_{i,j}}\}$
		        \EndIf
		    \EndIf
		    \If{$\Lambda=\emptyset$}
		        \State $\vec{r}_j(\lambda)\gets\bar{A}^{(j)}+\lambda\bar{B}^{(j)}$ ; $\vec{c}_i(\lambda)\gets\frac{1}{\bar{a}_{i,j}+\lambda\bar{b}_{i,j}}(\bar{A}_{(i)}+\lambda\bar{B}_{(i)})$
		        \State choose $(s,t)\in\{1\dots m\} \times \{1\dots n\}$ s.t. $f_{s,t}(i,j;\lambda)$ is not the zero polynomial
		        \State $\Gamma\gets\text{solve}(f_{s,t}(i,j;\lambda)=0)$
		        \For{$\hat{\lambda}\in\Gamma$}
		            \If{$\bar{A}+\hat{\lambda}\bar{B}=\vec{r}_j(\hat{\lambda})\vec{c}_i(\hat{\lambda})^\transpose$}
		            \State $\Lambda\cup\{\hat{\lambda}\}$
		            \EndIf
		        \EndFor
		    \EndIf
		    \If{$(\Lambda=\emptyset)\lor(\nexists\lambda\in\Lambda$ s.t $\lambda\in\Q_{>0}$)}
		        \State Not strategically equivalent via PAT. \textbf{exit} 
		    \Else
		        \State $\gamma^*\gets\lambda\in\Q_{>0}$
		      \State $\hat{A}=\tilde{A}-(\vec{1}_m\tilde{A}_{(l)}+\gamma^*\vec{1}_m\tilde{B}_{(l)})$
		      \State $\hat{B}=\tilde{B}-\big((\tilde{A}^{(k)}-\vec{1}_m\tilde{a}_{l,k})\vec{1}_n^\transpose+\gamma^*(\tilde{B}^{(k)}-\vec{1}_m\tilde{b}_{l,k})\vec{1}_n^\transpose\big)$
		        \EndIf
		      \EndIf  
		\EndProcedure
	\end{algorithmic}
\end{algorithm}

\newpage
\appendix
\section{Proofs of Auxiliary Lemmas} \label{app:randomProofs}
\begin{proof}[Proof of Lemma \ref{lem:stratEqVec}]
	Since $\vct{p}\in\Delta_m$ and $\vct{q}\in\Delta_m$, we have $\vct{p}^\transpose \vct{1}_m=1$ and $\vct{1}_n^\transpose \vct{q}=1$. Then, $\forall (\vct{p},\vct{q})\in(\Delta_m\times\Delta_n)$ we have:
	\begin{align*}
	\vct{p}^\transpose \tilde{A}\vct{q}&=\vct{p}^\transpose(\alpha_1A+\beta_1\vct{1}_m \vct{u}^\transpose)\vct{q}=\alpha_1\vct{p}^\transpose A\vct{q}+\beta_1\vct{u}^\transpose \vct{q},\\
	\vct{p}^\transpose \tilde{B}\vct{q}&=\vct{p}^\transpose(\alpha_2B+\beta_2\vct{v}\vct{1}_n^\transpose)\vct{q}=\alpha_2\vct{p}^\transpose B\vct{q}+\beta_2\vct{p}^\transpose \vct{v}.
	\end{align*}
	Now, assume that $(\vct{p}^*,\vct{q}^*)$ is an NE of $(m,n,\tilde{A},\tilde{B})$. Then, for player 1, 
	\begin{align*}
	\vct{p}^{*\transpose} \tilde{A}\vct{q}^*&=\alpha_1\vct{p}^{*\transpose} A\vct{q}^*+\beta_1\vct{u}^\transpose \vct{q}^* \\ 
	&\geq \alpha_1\vct{p}^{\transpose} A\vct{q}^*+\beta_1\vct{u}^\transpose \vct{q}^*=\vct{p}^{\transpose} \tilde{A}\vct{q}^* \; \,
	\forall \vct{p}\in\Delta_m\\
	&\iff \vct{p}^{*\transpose }A\vct{q^*}\geq \vct{p}^{\transpose}A\vct{q^*} \quad
	\forall \vct{p}\in\Delta_m.
	\end{align*}
	Similarly, for player 2, 
	\begin{align*}
	\vct{p}^{*\transpose} \tilde{B}\vct{q}^*&=\alpha_2\vct{p}^{*\transpose} B\vct{q}^*+\beta_2\vct{p}^{*\transpose} \vct{v} \\
	&\geq \alpha_2\vct{p}^{*\transpose}B\vct{q}+\beta_2\vct{p}^{*\transpose} \vct{v} = \vct{p}^{*\transpose} \tilde{B}\vct{q} \; \, \forall \vct{q}\in\Delta_n\\
	&\iff \vct{p}^{*\transpose }B\vct{q^*}\geq \vct{p}^{*\transpose}B\vct{q} \quad
	\forall \vct{q}\in\Delta_n.
	\end{align*}
	Then Definition \ref{def:stratEqNE} is satisfied, and $(\vct{p}^*,\vct{q}^*)$ is an NE of $(m,n,A,B)$ if and only if $(\vct{p}^*,\vct{q}^*)$ is an NE of $(m,n,\tilde{A},\tilde{B})$.
\end{proof}

\begin{proof}[Proof of Lemma \ref{lem:existNotZeroPoly}]
Suppose, by way of contradiction, that $f_{s,t}(i,j;\lambda)$ is the zero polynomial for all $(s,t)\in\{1\dots n\}\times \{1\dots m\}$. This implies that for any $\hat{\lambda}\in\Ce$ and for all $(s,t)\in\{1\dots n\}\times \{1\dots m\}$, $f_{s,t}(i,j;\hat{\lambda})=0$. Furthermore, this implies that for any $\hat{\lambda}\in\Ce$,  $A+\hat{\lambda} B=\vec{r}_j(\hat{\lambda})\vec{c}_i(\hat{\lambda})^\transpose$ and $\rank{A+\hat{\lambda} B}=1$.  In particular, $\rank{A+B}=1$, which is a contradiction.     
\end{proof}

\section{Results From Linear Algebra}\label{app:linAlg}
In this section, we collect some preliminary results from linear algebra that are used throughout the rest of the paper. In the first subsection, we present the proof of the Wedderburn rank reduction formula and some related results. The second subsection discusses certain properties of the vector space $\mc{M}_{m\times n}$ that are used in the proofs of results in the subsequent sections.

\subsection{Wedderburn Rank Reduction Formula}

\begin{proof}[Proof of Theorem \ref{thm:wedderburn}]
The original proof of \eqref{eq:Wedderburn} is due to Wedderburn \cite[p.~69]{wedderburn1934lectures}. We restate it here for completeness. 
	
	We first show that the null space of $C_2$ contains the null space of $C_1$. Pick $\vct z$ such that $C_1\vct{z}=\vct{0}$. Then,
	\begin{equation*}
	C_2\vct{z}=C_1\vct{z}-w_1^{-1}C_1\vct{x}_1\vct{y}_1^\transpose C_1\vct{z}=\vct{0}.
	\end{equation*} 
	Thus, $\vct z$ is in the null space of $C_2$, which implies that the null space of $C_2$ contains the null space of $C_1$. 
	
	Next, we show that $\vct x_1$ is in the null space of $C_2$, thereby showing that the dimension of the null space of $C_2$ is one more than dimension of the null space of $C_1$ (since $C_1 \vct x_1 \neq \vct{0}$). Consider
	\begin{equation*}
	C_2\vct{x}_1=C_1\vct{x}_1-w_1^{-1}C_1\vct{x}_1\vct{y}_1^\transpose C_1\vct{x}_1=\vct{0}.
	\end{equation*}
	Thus, the rank of $C_2$ is one less than the rank of $C_1$.
	
\end{proof}	
We now have the following theorem that applies the Wedderburn rank reduction formula to compute a decomposition of the matrix.

\begin{theorem}[{Rank-Reducing Process \cite[p.~69]{wedderburn1934lectures} \cite{chu1995rank}}] \label{thm:rankReduce}
	Let $C \in \Re^{m\times n}$. If $\rank{C}=\gamma$, then there exists $\vec{x}_k\in\Re^n$, $\vec{y}_k\in\Re^m$, $k = 1,\ldots,\gamma$ such that $w_k=\vec{y}_k^\transpose C_k \vec{x}_k\neq0$ and the following holds:
	\begin{equation}\label{eq:thm:rankReduce:recursive}
	C_{k+1}=C_k-w_k^{-1}C_k\vec{x}_k\vec{y}_k^\transpose C_k \quad k=(1,2,\dots,\gamma)
	\end{equation}
	where $C_1=C$, $C_{\gamma+1}=0$, and $\rank{C_{k+1}}=\rank{C_k}-1$. In addition, let $W_k=w_k^{-1}C_k\vec{x}_k\vec{y}_k^\transpose C_k$, where $\rank{W_k}=1$. Then, we have
	\begin{equation}\label{eq:thm:rankReduce:W}
		C=\sum_{k=1}^{\gamma} W_k.
	\end{equation}
\end{theorem}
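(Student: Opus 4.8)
The plan is to derive this statement as an iterated application of the single-step Wedderburn formula (Theorem \ref{thm:wedderburn}), organized as a finite induction on the step index $k$, followed by a telescoping argument for the decomposition \eqref{eq:thm:rankReduce:W}.

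First I would set up the induction with the following hypothesis: for each $k\in\{1,\dots,\gamma\}$, the matrix $C_k$ is well defined by the recursion \eqref{eq:thm:rankReduce:recursive} and satisfies $\rank{C_k}=\gamma-(k-1)$. The base case $k=1$ is immediate, since $C_1=C$ and $\rank{C}=\gamma$. For the inductive step, the hypothesis gives $\rank{C_k}=\gamma-(k-1)\ge 1$, so $C_k$ is not identically zero and Theorem \ref{thm:wedderburn} applies: it furnishes vectors $\vec{x}_k\in\Re^n$, $\vec{y}_k\in\Re^m$ with $w_k=\vec{y}_k^\transpose C_k\vec{x}_k\neq 0$, and the matrix $C_{k+1}$ defined by \eqref{eq:thm:rankReduce:recursive} has $\rank{C_{k+1}}=\rank{C_k}-1=\gamma-k$, which closes the induction. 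Iterating through $k=\gamma$ then yields $\rank{C_{\gamma+1}}=0$, i.e. $C_{\gamma+1}=\vec{0}_{m\times n}$.

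Next I would treat the rank-one factors $W_k=w_k^{-1}(C_k\vec{x}_k)(\vec{y}_k^\transpose C_k)$. This is a scalar multiple of the outer product of the column vector $C_k\vec{x}_k$ with the row vector $\vec{y}_k^\transpose C_k$; both factors are nonzero, since $C_k\vec{x}_k=\vec{0}_m$ or $\vec{y}_k^\transpose C_k=\vec{0}_n^\transpose$ would force $w_k=\vec{y}_k^\transpose C_k\vec{x}_k=0$, contradicting $w_k\neq 0$. Hence each $W_k$ is a nonzero outer product, so $\rank{W_k}=1$. For the identity \eqref{eq:thm:rankReduce:W}, I would rewrite \eqref{eq:thm:rankReduce:recursive} as $W_k=C_k-C_{k+1}$ and telescope: $\sum_{k=1}^{\gamma}W_k=\sum_{k=1}^{\gamma}(C_k-C_{k+1})=C_1-C_{\gamma+1}=C-\vec{0}_{m\times n}=C$.

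I do not expect a genuine obstacle here: the only point requiring care is ensuring that the hypothesis of Theorem \ref{thm:wedderburn} — that the matrix be not identically zero — holds at every invocation, which is exactly what the running rank bound $\rank{C_k}=\gamma-(k-1)\ge 1$ for $k\le\gamma$ supplies. The one mild subtlety is bookkeeping: the conclusion asserts the existence of suitable $\vec{x}_k,\vec{y}_k$ for \emph{all} $k$ simultaneously, so the induction must produce these vectors together with the $C_k$ rather than treating the $C_k$ as given in advance, with the understanding that the admissible choices at step $k$ depend on the outcomes of steps $1,\dots,k-1$.
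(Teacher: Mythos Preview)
Your proposal is correct and follows essentially the same approach as the paper: the paper's proof is the single line ``Apply Theorem \ref{thm:wedderburn} to $C$ for $\gamma+1$ iterations,'' and your induction on $k$ together with the telescoping sum is precisely a careful unpacking of that sentence. Your additional remarks---that $\rank{C_k}\ge 1$ is what licenses each invocation of Theorem \ref{thm:wedderburn}, that $W_k$ is a nonzero outer product hence rank~$1$, and that $\sum_k W_k = C_1 - C_{\gamma+1}=C$---are exactly the details the paper leaves implicit.
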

\begin{proof}
	Apply Theorem \ref{thm:wedderburn} to $C$ for $\gamma+1$ iterations.
\end{proof}

\begin{cor}\label{cor:thm:rankReduce}
For a matrix $M\in\Re^{m\times n}$ with $\rank{M}=r$, let $\{W_k\}$ for $k=(1,\dots,r)$ be a set of rank-1 matrices derived from the rank-reducing process in Theorem \ref{thm:rankReduce}. Let $v_ku_k^\transpose=W_k=w_k^{-1}M_k\vec{x}_k\vec{y}_k^\transpose M_k$. Then $\vec{v}_k\notin\colspan{M_{j}}$ and $\vec{u}_k\notin\colspan{M_{j}^\transpose}$ for all $j>k$. 
\end{cor}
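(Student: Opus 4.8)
The plan is to extract two algebraic identities satisfied by the iterates of the rank-reducing process and combine them with the nesting of column and row spaces of those iterates. First I would record the two one-sided factorizations of the update: from \eqref{eq:thm:rankReduce:recursive} we have
$M_{k+1}=M_k\big(I_n-w_k^{-1}\vec{x}_k\vec{y}_k^\transpose M_k\big)=\big(I_m-w_k^{-1}M_k\vec{x}_k\vec{y}_k^\transpose\big)M_k$.
The left factorization shows every column of $M_{k+1}$ lies in $\colspan{M_k}$, and the right one shows every row of $M_{k+1}$ lies in the row space of $M_k$; iterating yields the chains $\colspan{M_1}\supseteq\colspan{M_2}\supseteq\cdots$ and $\colspan{M_1^\transpose}\supseteq\colspan{M_2^\transpose}\supseteq\cdots$. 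Consequently it suffices to prove the two non-containments for the single index $j=k+1$, since all later column (resp. row) spaces are subspaces of $\colspan{M_{k+1}}$ (resp. $\colspan{M_{k+1}^\transpose}$).

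Next I would compute, using $\vec{y}_k^\transpose M_k\vec{x}_k=w_k$, that $\vec{y}_k^\transpose M_{k+1}=\vec{y}_k^\transpose M_k-w_k^{-1}(\vec{y}_k^\transpose M_k\vec{x}_k)\,\vec{y}_k^\transpose M_k=\vec{0}_n^\transpose$, and symmetrically $M_{k+1}\vec{x}_k=\vec{0}_m$. The first identity says $\vec{y}_k$ annihilates every column of $M_{k+1}$, hence $\vec{y}_k$ is orthogonal to every vector in $\colspan{M_{k+1}}$; the second says $\vec{x}_k$ annihilates every row of $M_{k+1}$, hence $\vec{x}_k$ is orthogonal to every vector in $\colspan{M_{k+1}^\transpose}$.

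To conclude, recall $W_k=\vec{v}_k\vec{u}_k^\transpose=w_k^{-1}(M_k\vec{x}_k)(\vec{y}_k^\transpose M_k)$, so $\vec{v}_k$ is a nonzero scalar multiple of $M_k\vec{x}_k$ and $\vec{u}_k$ is a nonzero scalar multiple of $M_k^\transpose\vec{y}_k$ (both nonzero because $w_k\neq0$). Then $\vec{y}_k^\transpose\vec{v}_k$ is a nonzero multiple of $\vec{y}_k^\transpose M_k\vec{x}_k=w_k\neq0$, so $\vec{v}_k$ is not orthogonal to $\vec{y}_k$ while every element of $\colspan{M_{k+1}}$ is; therefore $\vec{v}_k\notin\colspan{M_{k+1}}$, and by the nesting of column spaces $\vec{v}_k\notin\colspan{M_j}$ for all $j>k$. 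The argument for $\vec{u}_k$ is identical with rows and columns interchanged, using $\vec{x}_k^\transpose\vec{u}_k\neq0$ and $\vec{x}_k\perp\colspan{M_{k+1}^\transpose}$. I do not expect a real obstacle here; the only point requiring care is that $\colspan{M_{k+1}}\subseteq\colspan{M_k}$ relies on the explicit rank-one form of $W_k$ through the factorization above, not merely on $M_{k+1}$ and $M_k$ differing by a rank-one matrix.
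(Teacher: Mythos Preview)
Your proof is correct, but it takes a genuinely different route from the paper's. The paper argues \emph{globally}: it writes $M_{k+1}=\sum_{i=k+1}^{r}W_i=\sum_{i=k+1}^{r}\vec{v}_i\vec{u}_i^\transpose$ and invokes Fact~\ref{fact:rankFactorization} to conclude that $\{\vec{v}_{k+1},\ldots,\vec{v}_r\}$ is a basis for $\colspan{M_{k+1}}$ and that the full set $\{\vec{v}_1,\ldots,\vec{v}_r\}$ is linearly independent (being a basis for $\colspan{M}$); hence $\vec{v}_k$ cannot lie in the span of the later $\vec{v}_i$'s. Your argument is \emph{local}: you produce an explicit linear functional $\vec{y}_k^\transpose$ that vanishes on all of $\colspan{M_{k+1}}$ but not on $\vec{v}_k$, and then propagate via the nesting $\colspan{M_j}\subseteq\colspan{M_{k+1}}$ for $j>k$. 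Your approach has the advantage of being entirely self-contained at step $k$---it does not need the process to be run to completion, nor does it lean on the rank-counting content of Fact~\ref{fact:rankFactorization}---and it makes transparent exactly which vector witnesses the non-containment. The paper's approach, on the other hand, yields slightly more: it identifies an explicit basis of each $\colspan{M_j}$ in terms of the later $\vec{v}_i$'s, which is a stronger structural statement than the bare non-containment the corollary asserts.
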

\begin{proof}
By recursively applying \eqref{eq:thm:rankReduce:recursive} from Theorem \ref{thm:rankReduce}, we can write $M_{k+1}$ as 
\begin{equation*}
    M_{k+1}=M-\sum_{i=1}^{k}W_i.
\end{equation*}
Then, combined with \eqref{eq:thm:rankReduce:W} we have
\begin{align}
    M_{k+1}&=\sum_{i=1}^{r}W_i-\sum_{i=1}^{k}W_i\label{eq:cor:thm:rankReduce}\\
    M_{k+1}&=\sum_{i=k+1}^{r}W_i
\end{align}
From \eqref{eq:cor:thm:rankReduce} and Fact \ref{fact:rankFactorization}, it is apparent that the basis for $\colspan{M_{k+1}}$ is formed by removing $[\vec{v}_1,\vec{v}_2,\dots,\vec{v}_k]$ from the basis of $\colspan{M}$. Then, since by definition the basis vectors are linearly independent, there is no linear combination of $[\vec{v}_j,\vec{v}_{j+1},\dots,\vec{v}_r]$, such that the linear combination equals $\vec{v}_k$ for $j>k$. Therefore, $\vec{v}_k\notin\colspan{M_{j}}$ and for all $j>k$. The same argument holds for $\vec{u}_k\notin\colspan{M_{j}^\transpose}$ for all $j>k$.   
\end{proof}

Motivated by \cite{chu1995rank}, we now have the following proposition.

\begin{prop}\label{prop:colSpanC2}
We use here the same notation as in Theorem \ref{thm:wedderburn}. Let $\rank{C}=k$, with $k \geq 2 $. Let $\{\vec{x}_1,\vec{y}_1\}$ be vectors associated with a rank-reducing process (so that $\vec{y}_1^\transpose C \vec{x}_1\neq0$). We have $\vec{z}\in \colspan{C_2^\transpose}$ if and only if $\vec{z}\in \colspan{C^\transpose}$ and $\vec{z}\perp\vec{x}_1$.
\end{prop}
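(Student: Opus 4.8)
The plan is to prove the two inclusions directly from the defining formula \eqref{eq:Wedderburn}, read in transposed form, together with a single auxiliary fact taken from the proof of Theorem \ref{thm:wedderburn}, namely that $\vec{x}_1 \in \nullspace{C_2}$, i.e.\ $C_2\vec{x}_1 = \vec{0}$. Writing $C = C_1$ and transposing \eqref{eq:Wedderburn} gives
\[
C_2^\transpose = C^\transpose - w_1^{-1}\,C^\transpose \vec{y}_1\,\vec{x}_1^\transpose C^\transpose ,
\]
and essentially everything follows by substituting this expression and tracking which scalar factors vanish.

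For the forward direction I would take an arbitrary $\vec{z} \in \colspan{C_2^\transpose}$, write $\vec{z} = C_2^\transpose \vec{w}$ for some $\vec{w} \in \Re^m$, and substitute the displayed formula. Regrouping the two terms exhibits $\vec{z} = C^\transpose\bigl(\vec{w} - w_1^{-1}(\vec{x}_1^\transpose C^\transpose \vec{w})\,\vec{y}_1\bigr)$, which is manifestly an element of $\colspan{C^\transpose}$. For the orthogonality claim I compute $\vec{x}_1^\transpose \vec{z} = \vec{x}_1^\transpose C_2^\transpose \vec{w} = (C_2\vec{x}_1)^\transpose \vec{w} = 0$, invoking $C_2\vec{x}_1 = \vec{0}$; hence $\vec{z} \perp \vec{x}_1$.

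For the reverse direction I would start from $\vec{z} \in \colspan{C^\transpose}$ with $\vec{x}_1^\transpose \vec{z} = 0$, fix $\vec{w}$ with $\vec{z} = C^\transpose \vec{w}$, and evaluate $C_2^\transpose \vec{w}$ using the displayed formula. The correction term is $w_1^{-1}(\vec{x}_1^\transpose C^\transpose \vec{w})\,C^\transpose \vec{y}_1 = w_1^{-1}(\vec{x}_1^\transpose \vec{z})\,C^\transpose \vec{y}_1 = \vec{0}$ by the orthogonality hypothesis, so $C_2^\transpose \vec{w} = C^\transpose \vec{w} = \vec{z}$, which places $\vec{z}$ in $\colspan{C_2^\transpose}$.

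I do not expect a genuine obstacle: the argument is mechanical once the transposed formula is in hand. The only point that needs care is transpose bookkeeping — in particular that the scalar $\vec{x}_1^\transpose C^\transpose \vec{w}$ equals $\vec{w}^\transpose C \vec{x}_1$ and may be read off as $\vec{x}_1^\transpose \vec{z}$ when $\vec{z} = C^\transpose \vec{w}$, which is exactly what makes the correction term disappear — and the reliance on $C_2\vec{x}_1 = \vec{0}$ from Theorem \ref{thm:wedderburn}. The hypothesis $k \ge 2$ is not used in establishing the set equality; it only guarantees $C_2 \ne \vec{0}_{m \times n}$, and one may optionally add the consistency remark that $\rank{C_2} = k-1 = \dim\bigl(\colspan{C^\transpose} \cap \vec{x}_1^\perp\bigr)$, the last equality holding because $C\vec{x}_1 \ne \vec{0}$ forces $\colspan{C^\transpose} \not\subseteq \vec{x}_1^\perp$.
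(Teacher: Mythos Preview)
Your proof is correct and follows essentially the same approach as the paper's: both directions are obtained by expanding the transposed Wedderburn identity and invoking $C_2\vec{x}_1=\vec{0}$, with the reverse direction hinging on the correction term vanishing because $\vec{x}_1^\transpose C^\transpose\vec{w}=\vec{x}_1^\transpose\vec{z}=0$. Your presentation is in fact slightly more direct than the paper's, which packages the same computation via an auxiliary vector $\vec{v}_2=\vec{y}_2-w_1^{-1}(\vec{y}_2^\transpose C\vec{x}_1)\vec{y}_1$ and appeals to $k\ge 2$ in the reverse direction; your observation that the rank hypothesis is not actually needed for the set equality is correct.
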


\begin{proof}

Suppose that $\vec{z}\in \colspan{C_2^\transpose}$. Then there exists a $\vec{y}_2\in \Re^m$ such that $C_2^\transpose \vec{y}_2=\vec{z}$. Choose such a $\vec{y}_2$ and define $\vec{v}_2$ as
\begin{equation}\label{eq:v2Def}
    \vec{v}_2\coloneqq \vec{y}_2-
	\frac{\vec{y}_2^\transpose C \vec{x}_1}{\vec{y}_1^\transpose C \vec{x}_1}\vec{y}_1. 
\end{equation}

Directly from Theorem \ref{thm:wedderburn}, we have
	\begin{align}
	\vec{y}_2^\transpose C_2&= \vec{y}_2^\transpose C-w_1^{-1}\vec{y}_2^\transpose C\vec{x}_1\vec{y}_1^\transpose C \nonumber \\
    \vec{y}_2^\transpose C_2&=\big(\vec{y}_2^\transpose- \frac{\vec{y}_2^\transpose C \vec{x}_1}{\vec{y}_1^\transpose C \vec{x}_1}\vec{y}_1^\transpose \big )C \nonumber \\
    \vec{y}_2^\transpose C_2&= \vec{v}_2^\transpose C \label{eq:prop:colSpanC2:v2C}
	\end{align}

Then by \eqref{eq:prop:colSpanC2:v2C}, $\vec{z}=C^\transpose\vec{v}_2$ which implies that $\vec{z}\in \colspan{C^\transpose}$. From the proof of Theorem \ref{thm:wedderburn}, it is clear that $\vec{x}_1\in \nullspace{C_2}$. Then
\begin{equation}\label{eq:prop:colSpanC2:v2Perpx1}
    \vec{y}_2^\transpose C_2\vec{x}_1=\vec{v}_2^\transpose C \vec{x}_1=0.
\end{equation}

Finally, with  $\vec{z}^\transpose=\vec{v}_2^\transpose C$ we have that $\vec{z}\perp\vec{x}_1$ by \eqref{eq:prop:colSpanC2:v2Perpx1}.

Now, suppose that $\vec{z}\in \colspan{C^\transpose}$ and $\vec{z}\perp\vec{x}_1$. Since $\rank{C}\geq2$, the $\colspan{C^\transpose}$ is at least a two dimensional subspace, and thus there exists a $\vec{y}_1\in \Re^m$ such that $C^\transpose \vec{y}_1$ and $\vec{z}$ are linearly independent. Choose such a $\vec{y}_1$. Choose such a $\vec{y}_1$. Then, $\vec{y}_1^\transpose C\vec{x}_1\neq0$ and $\vec{v}_2$, as given in \eqref{eq:v2Def}, is well-defined. Select $\vec{y}_2$ such that $C^\transpose \vec{y}_2=\vec{z}$. Then,
\begin{equation}\label{eq:prop:colSpanC2:v2Equaly2}
    \vec{v}_2=\vec{y}_2-
	\frac{\vec{z}^\transpose\vec{x}_1}{\vec{y}_1^\transpose C \vec{x}_1}\vec{y}_1
	=\vec{y}_2-\frac{0}{\vec{y}_1^\transpose C \vec{x}_1}\vec{y}_1=\vec{y}_2.
\end{equation}

    By \eqref{eq:prop:colSpanC2:v2C} and \eqref{eq:prop:colSpanC2:v2Equaly2} , we then have:
\begin{equation*}
    \vec{y}_2^\transpose C_2=\vec{v}_2^\transpose C=\vec{y}_2^\transpose C=\vec{z}.
\end{equation*}
Therefore, $\vec{z}\in \colspan{C_2^\transpose}$. 
\end{proof}

\subsection{The Subspace \texorpdfstring{$\mc{M}_{m\times n}(\Re)$}{}}\label{subsec:subspaceM}
In this subsection, we recall an essential fact from linear algebra and introduce the the subspace $\mc{M}_{m\times n}(\Re)$, which we define as 
\[\mc{M}_{m\times n}(\Re)=\Big\{M\in\Rea^{m\times n}\vert M=\vct{1}_m\vct{u}^\transpose+\vct{v}\vct{1}_n^\transpose\Big\}.\]
The properties of this subspace are essential to formulating the algorithmic solution to our problem as the matrix $D$ that we are searching for must lie in this subspace.

Let us begin by stating an essential fact on the rank-1 decomposition of a matrix.
\begin{fact}\label{fact:rankFactorization}
For any matrix $M\in\Re^{m\times n}$ with $\rank{M}=r$, one can write $M$ as a summation of $r$ rank-1 matrices, $M=\sum_{1}^{r}\vec{v}_i\vec{u}_i^\transpose$. Furthermore, $[\vec{v}_1,\vec{v}_2,\dots,\vec{v}_r]$ is a basis for $\colspan{M}$ and $[\vec{u}_1,\vec{u}_2,\dots,\vec{u}_r]$ is a basis for $\colspan{M^\transpose}$.
\end{fact}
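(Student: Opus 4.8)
The plan is to prove Fact~\ref{fact:rankFactorization} directly from the definition of rank as the dimension of the column space, via the classical full-rank factorization $M=VW$, rather than re-deriving it from the Wedderburn rank-reducing process of Theorem~\ref{thm:rankReduce}; the latter route would be awkward here because Corollary~\ref{cor:thm:rankReduce} already invokes Fact~\ref{fact:rankFactorization}, so leaning on the Wedderburn decomposition risks circularity.

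First I would fix a basis of the column space. Since $\rank{M}=r$, the subspace $\colspan{M}$ has dimension $r$; pick a basis $\vec{v}_1,\dots,\vec{v}_r$ of $\colspan{M}$ and assemble these as the columns of a matrix $V=[\vec{v}_1\ \cdots\ \vec{v}_r]\in\Re^{m\times r}$, so that $\colspan{V}=\colspan{M}$. Each column $M^{(k)}$ lies in $\colspan{V}$, hence $M^{(k)}=V\vec{w}_k$ for some $\vec{w}_k\in\Re^r$; collecting the $\vec{w}_k$ as columns of $W\in\Re^{r\times n}$ gives $M=VW$. Expanding this product over the shared index and writing $\vec{u}_i^\transpose\coloneqq W_{(i)}$ for the $i$\ts{th} row of $W$,
\begin{equation*}
M=VW=\sum_{i=1}^{r}V^{(i)}W_{(i)}=\sum_{i=1}^{r}\vec{v}_i\vec{u}_i^\transpose,
\end{equation*}
which already displays $M$ as a sum of $r$ rank-$1$ matrices; and $[\vec{v}_1,\dots,\vec{v}_r]$ is a basis of $\colspan{M}$ by construction.

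It remains to check that $[\vec{u}_1,\dots,\vec{u}_r]$ is a basis of $\colspan{M^\transpose}$. Transposing $M=VW$ gives $M^\transpose=W^\transpose V^\transpose$, so every column of $M^\transpose$ is a linear combination of the columns of $W^\transpose$, which are precisely $\vec{u}_1,\dots,\vec{u}_r$; hence $\colspan{M^\transpose}\subseteq\colspan{W^\transpose}$, and the latter is spanned by $r$ vectors, so $\dim\colspan{W^\transpose}\leq r$. On the other hand $\dim\colspan{M^\transpose}=\rank{M^\transpose}=\rank{M}=r$. The inclusion of a subspace of dimension $r$ into a space of dimension at most $r$ forces equality, and in particular forces the $r$ spanning vectors $\vec{u}_1,\dots,\vec{u}_r$ to be linearly independent, so they form a basis of $\colspan{M^\transpose}$.

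The only step that is not pure bookkeeping is the equality of row rank and column rank, $\rank{M^\transpose}=\rank{M}$; I would either cite it as a standard linear-algebra fact or observe that it is itself a byproduct of the factorization (applying the same construction to $M^\transpose$ produces a factorization through $\Re^{r}$ with the same inner dimension). The genuine ``obstacle'' is thus nothing deep — it is only to keep the argument from becoming circular with Theorem~\ref{thm:rankReduce} and Corollary~\ref{cor:thm:rankReduce}, and the direct construction above sidesteps that entirely.
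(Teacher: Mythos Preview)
Your proof is correct. Note, however, that the paper does not actually prove Fact~\ref{fact:rankFactorization} at all: it is stated as a standard linear-algebra fact and left unproved, then freely invoked elsewhere (e.g.\ in Corollary~\ref{cor:thm:rankReduce} and Lemma~\ref{lem:Mplusrc}). So there is no ``paper's own proof'' to compare against; you have supplied what the paper omits.

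Your choice to argue via the full-rank factorization $M=VW$ rather than via the Wedderburn process is the right one, and your remark about circularity is on point: since Corollary~\ref{cor:thm:rankReduce} appeals to Fact~\ref{fact:rankFactorization}, deriving the fact from Theorem~\ref{thm:rankReduce} would indeed risk a loop. The only mildly delicate step --- that $\rank{M^\transpose}=\rank{M}$ --- you handle cleanly, and your observation that the factorization itself yields this equality is a nice touch.
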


We now proceed to prove that $\mc{M}_{m\times n}(\Rea)$ is indeed a subspace of the vector space of all real matrices.  Following that, we state some essential properties of the subspace that will be used throughout the presentation.
\begin{lemma} \label{lem:MsubSpace}
Let $\mc{V}_{m\times n}(\Re)$ be the vector space of real $m\times n$ matrices and $\mc{M}_{m\times n}(\Re)$ be the space of real matrices such that for all $M\in\mc{M}_{m\times n}(\Re)$, $M=\vct{1}_m\vct{u}^\transpose+\vct{v}\vct{1}_n^\transpose$. Then $\mc{M}_{m\times n}(\Re)$ is a subspace of $\mc{V}_{m\times n}(\Re)$.
\end{lemma}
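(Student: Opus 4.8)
The plan is to verify the standard subspace criterion: that $\mc{M}_{m\times n}(\Re)$ is a nonempty subset of $\mc{V}_{m\times n}(\Re)$ that is closed under vector addition and under scalar multiplication. Since every element of $\mc{M}_{m\times n}(\Re)$ is by construction a real $m\times n$ matrix, containment in $\mc{V}_{m\times n}(\Re)$ is immediate, so the work is entirely in the three closure-type conditions.

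First I would check nonemptiness by exhibiting the zero matrix: taking $\vct{u}=\vct{0}_n$ and $\vct{v}=\vct{0}_m$ gives $\vct{1}_m\vct{0}_n^\transpose+\vct{0}_m\vct{1}_n^\transpose=\vec{0}_{m\times n}\in\mc{M}_{m\times n}(\Re)$. Next, for closure under addition, I would take arbitrary $M_1,M_2\in\mc{M}_{m\times n}(\Re)$, write $M_1=\vct{1}_m\vct{u}_1^\transpose+\vct{v}_1\vct{1}_n^\transpose$ and $M_2=\vct{1}_m\vct{u}_2^\transpose+\vct{v}_2\vct{1}_n^\transpose$ for suitable $\vct{u}_1,\vct{u}_2\in\Re^n$ and $\vct{v}_1,\vct{v}_2\in\Re^m$, and then group terms:
\[
M_1+M_2=\vct{1}_m(\vct{u}_1+\vct{u}_2)^\transpose+(\vct{v}_1+\vct{v}_2)\vct{1}_n^\transpose,
\]
which is again of the required form since $\vct{u}_1+\vct{u}_2\in\Re^n$ and $\vct{v}_1+\vct{v}_2\in\Re^m$. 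Finally, for closure under scalar multiplication, I would take $M=\vct{1}_m\vct{u}^\transpose+\vct{v}\vct{1}_n^\transpose\in\mc{M}_{m\times n}(\Re)$ and $c\in\Re$, and observe $cM=\vct{1}_m(c\vct{u})^\transpose+(c\vct{v})\vct{1}_n^\transpose\in\mc{M}_{m\times n}(\Re)$. By the subspace criterion, $\mc{M}_{m\times n}(\Re)$ is a subspace of $\mc{V}_{m\times n}(\Re)$.

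I do not expect any real obstacle here — the proof is a routine application of the subspace test, and the only thing to be mildly careful about is the distributivity/bilinearity of the outer products $\vct{1}_m\vct{u}^\transpose$ and $\vct{v}\vct{1}_n^\transpose$ used when regrouping sums and scalar multiples, which is standard.
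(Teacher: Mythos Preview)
Your proposal is correct and follows essentially the same approach as the paper: verify the subspace criterion by exhibiting the zero matrix via $\vct{u}=\vct{0}_n$, $\vct{v}=\vct{0}_m$, then check closure under addition by regrouping to $\vct{1}_m(\vct{u}_1+\vct{u}_2)^\transpose+(\vct{v}_1+\vct{v}_2)\vct{1}_n^\transpose$, and closure under scalar multiplication by absorbing $c$ into $\vct{u}$ and $\vct{v}$.
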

\begin{proof}
We show that $\mc{M}_{m\times n}(\Re)$ meets the three properties of a subspace of a vector space.  First, consider $\vec{u}=\vec{0}_n$,$\vec{v}=\vec{0}_m$. Then, $M=\vec{0}_{m\times n}$ and the zero vector is in $\mc{M}_{m\times n}(\Re)$.  Secondly, for all $M_1,M_2\in\mc{M}_{m\times n}(\Re)$, 
\begin{align*}
    M_1+M_2&=\vct{1}_m\vct{u}^\transpose+\vct{v}\vct{1}_n^\transpose+\vct{1}_m\vct{u}_2^\transpose+\vct{v}_2\vct{1}_n^\transpose,\\
    &=\vct{1}_m(\vct{u}^\transpose+\vct{u}_2^\transpose)+(\vct{v}+\vct{v}_2)\vct{1}_n^\transpose.
\end{align*}
Therefore, $M_1+M_2\in\mc{M}_{m\times n}(\Re)$. Finally, for all $c\in\Re$ and $M\in\mc{M}_{m\times n}(\Re)$, $cM=\vct{1}_m c\vct{u}^\transpose+c\vct{v}\vct{1}_n^\transpose\in\mc{M}_{m\times n}(\Re)$.  
\end{proof}
\begin{lemma} \label{lem:existXY}
For any matrix $M\in\Re^{m\times n}$ with $M\neq\vec{0}_{m\times n}$ there is at least one nonzero column and one nonzero row. Let $M^{(j)}$ and $M_{(i)}$ denote such a column and row. Then, with $\vec{x}=\vec{e}_j$ and $\vec{y}=\vec{e}_i$ we have that $\vec{1}_n^\transpose \vec{x}\neq0$,$\vec{1}_m^\transpose \vec{y}\neq0$, $M\vec{x}=M^{(j)}\neq\vec{0}_m$, and $\vec{y}^\transpose M=M_{(i)}\neq\vec{0}_n^\transpose$. 
\end{lemma}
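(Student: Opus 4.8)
The plan is to exploit the elementary fact that a nonzero matrix must contain a nonzero entry, and that a single such entry simultaneously witnesses a nonzero row and a nonzero column. First I would observe that since $M\neq\vec{0}_{m\times n}$, there is an index pair $(i,j)\in\{1\dots m\}\times\{1\dots n\}$ with $m_{i,j}\neq0$. The column $M^{(j)}$ then has a nonzero entry in its $i$\ts{th} position, so $M^{(j)}\neq\vec{0}_m$; likewise the row $M_{(i)}$ has a nonzero entry in its $j$\ts{th} position, so $M_{(i)}\neq\vec{0}_n^\transpose$. This establishes the existence part of the claim, and fixes the indices $i$ and $j$ to be used for the rest of the argument.

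Next I would set $\vec{x}=\vec{e}_j$ and $\vec{y}=\vec{e}_i$ for these specific indices. The identities $\vec{1}_n^\transpose\vec{e}_j=1$ and $\vec{1}_m^\transpose\vec{e}_i=1$ are immediate from the definitions of $\vec{1}_n$, $\vec{1}_m$, and the standard basis vectors $\vec{e}_j$, $\vec{e}_i$, so both quantities are nonzero. Similarly, $M\vec{e}_j$ picks out the $j$\ts{th} column of $M$, i.e.\ $M\vec{e}_j=M^{(j)}$, and $\vec{e}_i^\transpose M$ picks out the $i$\ts{th} row, i.e.\ $\vec{e}_i^\transpose M=M_{(i)}$. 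Combining these with the previous paragraph yields $M\vec{x}=M^{(j)}\neq\vec{0}_m$ and $\vec{y}^\transpose M=M_{(i)}\neq\vec{0}_n^\transpose$, which is exactly the asserted conclusion.

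There is essentially no mathematical obstacle here; the statement is a bookkeeping lemma whose role is to justify the concrete choices $\vec{y}_1=\vec{e}_i$, $\vec{x}_2=\vec{e}_j$ made in Subsection \ref{subsec:proofMainResult:wedderburn}. The only point requiring a little care is to take the nonzero row and the nonzero column to share the common pivot entry $m_{i,j}$, rather than choosing them independently: using the single index pair $(i,j)$ throughout makes all four conclusions fall out simultaneously and keeps the notational conventions ($M^{(j)}$ a column, $M_{(i)}$ a row) consistent with the Notation subsection.
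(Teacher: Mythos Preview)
Your proof is correct; the paper itself omits the proof as ``straightforward,'' and your argument is precisely the elementary verification one would expect. One minor remark: the lemma as stated allows the nonzero row index $i$ and nonzero column index $j$ to be chosen independently, and all four conclusions still follow immediately in that generality---your common-pivot choice is a convenient way to exhibit existence simultaneously, but it is not actually required for the remaining claims to hold.
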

\begin{proof}
The proof is straightforward and therefore omitted. 
\end{proof}

\begin{lemma}\label{lem:Mplusrc}
For any matrices $C\in\Re^{(m\times n)}$ and $M\in\mc{M}_{m\times n}(\Re)$ with $\rank{C}=c$, $\rank{M}=m$ such that $C=M+\sum_{1}^{c-m}\vec{r}_i\vec{c}_i^\transpose$:
\begin{enumerate}
    \item If $\rank{M}=2$, then $\vec{1}_m\in\colspan{C}$ and $\vec{1}_n\in\colspan{C^\transpose}$. In addition, for all $\vec{x},\vec{y}$ such that $C\vec{x}=\vec{1}_m$,$C^\transpose \vec{y}=\vec{1}_n$, we have $\vec{1}_n^\transpose \vec{x}=0$,$\vec{1}_m^\transpose \vec{y}=0$.\label{itm:lem:MplusrcRank2}
    \item If $\rank{M}=1$, then either $\vec{1}_m\in\colspan{C}$, or $\vec{1}_n\in\colspan{C^\transpose}$ or both $\vec{1}_m\in\colspan{C}$ and $\vec{1}_n\in\colspan{C^\transpose}$.\label{itm:lem:MplusrcRank1}
\end{enumerate}
\end{lemma}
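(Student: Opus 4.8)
The plan is to derive both items from a single structural fact about the decomposition: under the hypotheses, $\colspan{M}\subseteq\colspan{C}$ and $\colspan{M^\transpose}\subseteq\colspan{C^\transpose}$, and moreover a basis of $\colspan{C}$ is obtained by adjoining $\vec r_1,\dots,\vec r_{c-\rank{M}}$ to a basis of $\colspan{M}$ (with the dual statement on the transpose side). So I would establish this fact first. Put $R=\sum_{i=1}^{c-\rank{M}}\vec r_i\vec c_i^\transpose$, so that $\rank{R}\le c-\rank{M}$ and $C=M+R$; then $\colspan{C}\subseteq\colspan{M}+\colspan{R}$ and
\[
c=\rank{C}\le\dim\big(\colspan{M}+\colspan{R}\big)\le\rank{M}+\rank{R}\le\rank{M}+\big(c-\rank{M}\big)=c .
\]
Hence every inequality is an equality: $\colspan{C}=\colspan{M}+\colspan{R}$, $\rank{R}=c-\rank{M}$ (so $\vec r_1,\dots,\vec r_{c-\rank{M}}$ are linearly independent), and $\colspan{M}\cap\colspan{R}=\{\vec 0_m\}$. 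In particular $\colspan{M}\subseteq\colspan{C}$, and any basis of $\colspan{M}$ together with $\vec r_1,\dots,\vec r_{c-\rank{M}}$ is a set of $c$ vectors whose span contains $\colspan{C}$, hence is a basis of $\colspan{C}$. Applying the identical argument to $C^\transpose=M^\transpose+R^\transpose$ gives the transpose version.

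For item 1, fix a representation $M=\vct 1_m\vct u^\transpose+\vct v\vct 1_n^\transpose$. When $\rank{M}=2$, the vectors $\vct 1_m,\vct v$ are linearly independent (else $M=\vct 1_m\vct w^\transpose$ for some $\vct w$ and $\rank{M}\le 1$), so $\{\vct 1_m,\vct v\}$ is a basis of $\colspan{M}$ and, dually, $\{\vct 1_n,\vct u\}$ is a basis of $\colspan{M^\transpose}$; combined with the inclusions above this gives $\vct 1_m\in\colspan{C}$ and $\vct 1_n\in\colspan{C^\transpose}$. For the orthogonality claims, let $\vec x$ satisfy $C\vec x=\vct 1_m$ and expand
\[
C\vec x=(\vct u^\transpose\vec x)\,\vct 1_m+(\vct 1_n^\transpose\vec x)\,\vct v+\sum_{i=1}^{c-2}(\vec c_i^\transpose\vec x)\,\vec r_i .
\]
Since $\{\vct 1_m,\vct v,\vec r_1,\dots,\vec r_{c-2}\}$ is a basis of $\colspan{C}$ and $\vct 1_m$ has coordinates $(1,0,\dots,0)$ in it, uniqueness of coordinates forces $\vct 1_n^\transpose\vec x=0$ (as well as $\vct u^\transpose\vec x=1$ and $\vec c_i^\transpose\vec x=0$). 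The statement $\vct 1_m^\transpose\vec y=0$ for every $\vec y$ with $C^\transpose\vec y=\vct 1_n$ is the transpose of this argument word for word.

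For item 2, fix $M=\vct 1_m\vct u^\transpose+\vct v\vct 1_n^\transpose$ with $\rank{M}=1$ and note that column $j$ of $M$ equals $u_j\vct 1_m+\vct v$. If the scalars $u_1,\dots,u_n$ are not all equal, the difference of two distinct columns is a nonzero multiple of $\vct 1_m$, forcing $\colspan{M}=\mathrm{span}\{\vct 1_m\}$; otherwise $\vct u$ is a multiple of $\vct 1_n$ and $M=(u_1\vct 1_m+\vct v)\vct 1_n^\transpose$, so $\colspan{M^\transpose}=\mathrm{span}\{\vct 1_n\}$. In either case $\vct 1_m\in\colspan{M}$ or $\vct 1_n\in\colspan{M^\transpose}$, and the inclusions $\colspan{M}\subseteq\colspan{C}$, $\colspan{M^\transpose}\subseteq\colspan{C^\transpose}$ finish the proof.

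The step I expect to carry the weight is the opening dimension count. It is precisely the prescription of exactly $c-\rank{M}$ rank-one summands, together with $\rank{C}=c$, that both bounds $\rank{R}\le c-\rank{M}$ and makes the chain of inequalities collapse; this single collapse delivers the inclusions $\colspan{M}\subseteq\colspan{C}$, $\colspan{M^\transpose}\subseteq\colspan{C^\transpose}$ and the \emph{genericity} of the $\vec r_i$ relative to $\colspan{M}$ on which the coordinate argument in item 1 relies. The remaining ingredients --- independence of $\{\vct 1_m,\vct v\}$ when $\rank{M}=2$ and the two-case split when $\rank{M}=1$ --- are elementary.
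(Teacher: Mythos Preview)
Your proof is correct and follows essentially the same line as the paper's: both expand $C\vec x$ in the rank-one summands and use the linear independence of $\vct 1_m,\vct v,\vec r_1,\dots,\vec r_{c-2}$ (which the paper attributes to Fact~\ref{fact:rankFactorization}, and which you justify via the explicit dimension count) to read off $\vct 1_n^\transpose\vec x=0$. Your treatment of item~2 is in fact more careful than the paper's one-line appeal to Fact~\ref{fact:rankFactorization}; the two-case split on whether $\vct u$ is a multiple of $\vct 1_n$ is exactly what is needed and the paper leaves it implicit.
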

\begin{proof}
For claim \ref{itm:lem:MplusrcRank2}, $\vec{1}_m\in\colspan{C}$ and $\vec{1}_n\in\colspan{C^\transpose}$ follows directly from $\rank{M}=2$ and Fact \ref{fact:rankFactorization}. Then, for all $\vec{x}$ such that $C\vec{x}=\vec{1}_m$ we have that:
\begin{align*}
    C\vec{x}&=\vec{1}_m\vec{u}^\transpose \vec{x}+\vec{v}\vec{1}_n^\transpose\vec{x}+\sum_{1}^{c-m}\vec{r}_i\vec{c}_i^\transpose \vec{x}=\vec{1}_m,\\
    &=(\vec{u}^\transpose \vec{x})\vec{1}_m+(\vec{1}_n^\transpose\vec{x})\vec{v}+\sum_{1}^{c-m}(\vec{c}_i^\transpose \vec{x})\vec{r}_i=\vec{1}_m,\\
    &\implies(\vec{1}_n^\transpose\vec{x})\vec{v}=(1-\vec{u}^\transpose \vec{x})\vec{1}_m+\sum_{1}^{c-m}(\vec{c}_i^\transpose \vec{x})\vec{r}_i.
\end{align*}
From Fact \ref{fact:rankFactorization} we have that $\vec{v},\vec{r}_i,\vec{1}_m$ must be linearly independent for all $i$. This implies that the equation above is satisfied if and only if $\vec{1}_n^\transpose\vec{x}=0$, $\vec{u}^\transpose \vec{x}=1$, and $\vec{c}_i^\transpose \vec{x}=0$ for all $i$. To prove that for all $\vec{y}$ such that $C^\transpose \vec{y}=\vec{1}_n$, $\vec{1}_m^\transpose \vec{y}=0$ apply the same technique to $C^\transpose$.

Claim \ref{itm:lem:MplusrcRank1} follows directly from $\rank{M}=1$ and Fact \ref{fact:rankFactorization}.
\end{proof}

\section{Strategic Equivalence between Bimatrix Games and Rank-1 Games}\label{app:stratEqRank1}
In this section, we present the proofs of our main results. 
 For ease of exposition, we break the result into two theorems--Theorem \ref{thm:rank1} and Theorem \ref{thm:rank1Less3}--proved in the next two subsections. 

\subsection{The case of \texorpdfstring{$\rank{\tilde{C}}$}{} equals 3}\label{app:subsec:rankC3}

\begin{proof}[Proof of Theorem \ref{thm:rank1}]
Lemma \ref{lem:Mplusrc}, $D\in\mc{M}_{m\times n}(\Re)$, and $\rank{D}=2$ implies that $\vct{1}_m\in \colspan{\tilde{C}}$, $\vct{1}_n\in\colspan{\tilde{C}^\transpose}$. Therefore, there exists $\vct{x}_1$ such that $\tilde{C}\vct{x}_1=\vct{1}_m$.  Lemma \ref{lem:existXY} implies that there exists $\vct y_1\in\Rea^m$ such that $\vct{y}_1^\transpose \tilde{C}\neq \vct 0_n$ and $\vec{1}_m^\transpose \vct{y}_1\neq 0$. Therefore, we have $w_1=\vct{y}_1^\transpose \tilde{C}\vct{x}_1\neq0$. Let $w_1^{-1}\vct{y}_1^\transpose \tilde{C} \coloneqq \hat{\vct{u}}^\transpose$, then $w_1^{-1}\tilde{C}\vct{x}_1\vct{y}_1^\transpose \tilde{C}=\vct{1}_m\hat{\vct{u}}^\transpose$. After applying the Wedderburn rank reduction formula, we have
	\begin{equation*}
    	\tilde{C}_2= \tilde{C}-w_1^{-1}\tilde{C}\vct{x}_1\vct{y}_1^\transpose \tilde{C}=\tilde{C}-\vct{1}_m\hat{\vct{u}}^\transpose.
	\end{equation*}
	
We now proceed to show that there exists $\vec{y}_2$ such that $\vec{y}_2\tilde{C}_2=\vec{1}_n^\transpose$. We've already shown that $\vct{1}_n\in\colspan{\tilde{C}^\transpose}$. By Lemma \ref{lem:Mplusrc}, we have that $\vec{1}_n^\transpose \vec{x}_1=0$. Then, by Proposition \ref{prop:colSpanC2}, $\vct{1}_n\in\colspan{\tilde{C}^\transpose}$ and $\vec{1}_n^\transpose \vec{x}_1=0$ implies that $\vct{1}_n\in\colspan{\tilde{C}_2^\transpose}$. Therefore, there exists $\vec{y}_2$ such that $\vec{y}_2^\transpose\tilde{C}_2=\vec{1}_n^\transpose$.   

Let us now show the existence of $\vec{x}_2$ such that $w_2\neq0$. By Theorem \ref{thm:wedderburn} and the assumption that $\rank{\tilde{C}}=3$, we have that $\rank{\tilde{C}_2}=2$, which implies that $\tilde{C}_2\neq\vec{0}_{m\times n}$. Therefore, by Lemma \ref{lem:existXY} there exists $\vec{x}_2$ such that $\tilde{C}_2\vec{x}_2\neq\vec{0}_m$ and $\vec{1}_n^\transpose\vec{x}_2\neq0$. Then, $w_2=\vec{y}_2^\transpose\tilde{C}_2\vec{x}_2\neq0$. Let $\hat{\vec{v}}=w_2^{-1}\tilde{C}_2\vec{x}_2$, then $w_2^{-1}\tilde{C}_2\vct{x}_2\vct{y}_2^\transpose \tilde{C}_2=\hat{\vct{v}}\vct{1}_n^\transpose$. Again, we apply the Wedderburn rank reduction formula to obtain 
	\begin{equation*}
    	\tilde{C}_3 = \tilde{C}_2-w_2^{-1}\tilde{C}_2\vct{x}_2\vct{y}_2^\transpose \tilde{C}_2=\tilde{C}-\hat{\vct{v}}\vct{1}_n^\transpose=\tilde{C}-\vct{1}_m\hat{\vct{u}}^\transpose-\hat{\vct{v}}\vct{1}_n^\transpose. 
	\end{equation*}

The assumption that $\rank{\tilde{C}}=3$ and Theorem \ref{thm:rankReduce} implies that $\rank{\tilde{C}_3}=1$. Furthermore, the above construction and Corollary \ref{cor:thm:rankReduce} implies that $\vec{1}_m\notin\colspan{\tilde{C}_3}$ and $\vec{1}_n\notin\colspan{\tilde{C}_3^\transpose}$.  Therefore, $\tilde{C}_3=\hat{\vec{r}}\hat{\vec{c}}^\transpose$.  

Finally, defining $\hat A \coloneqq \tilde{A} - \vec{1}_m\hat{\vec{u}}^\transpose$ and $\hat B \coloneqq \gamma \tilde{B}-\hat{\vct{v}}\vct{1}_n^\transpose$ we have that
\begin{align*}
	\hat{A}+\hat{B}&=\tilde{A}+\gamma\tilde{B}-\vct{1}_m\hat{\vct{u}}^\transpose-\hat{\vct{v}}\vct{1}_n^\transpose,\\
	&=\tilde{C}-\vct{1}_m\hat{\vct{u}}^\transpose-\hat{\vct{v}}\vct{1}_n^\transpose,\\
	&=\tilde{C}_3,\\
	&=\hat{\vec{r}}\hat{\vec{c}}^\transpose.
\end{align*}

Therefore, $(m,n,\tilde{A},\tilde{B})$ is strategically equivalent to $(m,n,\hat{A},\hat{B})$ by Lemma \ref{lem:stratEqVec} and $\hat{A}+\hat{B}=\hat{\vec{r}}\hat{\vec{c}}^\transpose$, which is clearly a rank-$1$ game.
\end{proof}

While Theorem \ref{thm:rank1} handles the most general case of $\rank{D}=2$, for completeness we still need to consider the cases of $\rank{D}<2$. The following theorem addresses those cases.

\subsection{The case of \texorpdfstring{$\rank{\tilde{C}}$}{} less than 3}\label{app:subsec:rankCless3}

\begin{theorem}\label{thm:rank1Less3}
Consider the game  $(m,n,\tilde{A},\tilde{B})$. If there exists $\gamma\in \Re_{>0}$ such that $\tilde{C}:=\tilde{C}(\gamma)=\tilde{A}+\gamma\tilde{B}=D+\hat{\vec{r}}\hat{\vec{c}}^\transpose$ where $D\in\mc{M}_{m\times n}(\Re)$, $\rank{D}<2$, $\rank{\tilde{C}}<3$, $\tilde{C}\notin\mc{M}_{m\times n}(\Re)$, $\hat{\vec{r}}\in\Re^m$, and  $\hat{\vec{c}}\in\Re^n$ then:
\begin{enumerate}
    \item If $\rank{D}=0$, $\hat{A}= \tilde{A}$ and $\hat{B}=\gamma\tilde{B}$.\label{itm:thm:rank1Less3:rank1}
    \item If $\rank{D}=1$ and $\vct{1}_m\in \colspan{\tilde{C}}$ then $\hat{A}=\tilde{A}-\vec{1}_m\hat{\vec{u}}^\transpose$ and $\hat{B}=\gamma\tilde{B}$.\label{itm:thm:rank1Less3:rank2OneM}
    \item If $\rank{D}=1$ and $\vct{1}_n\in \colspan{\tilde{C}^\transpose}$ then $\hat{A}= \tilde{A}$ and $\hat{B}=\gamma\tilde{B}-\hat{\vec{v}}\vec{1}_n^\transpose$.\label{itm:thm:rank1Less3:rank2OneN}
\end{enumerate}
Then the bimatrix game $(m,n,\tilde{A},\tilde{B})$ is strategically equivalent to the rank-$1$ game $(m,n,\hat{A},\hat{B})$.
\end{theorem}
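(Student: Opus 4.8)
The plan is to re-use, for each of the three cases, only the portion of the proof of Theorem \ref{thm:rank1} that is still relevant, and to stop as soon as $\hat A+\hat B$ has been exhibited as a rank-$1$ matrix; strategic equivalence then follows from Lemma \ref{lem:stratEqVec}. Since $\rank{D}\le 1$ here, at most one Wedderburn step \eqref{eq:Wedderburn} is needed instead of the two used for $\rank{D}=2$.

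If $\rank{D}=0$ then $D=\vec{0}_{m\times n}$ and $\tilde C=\hat{\vec{r}}\hat{\vec{c}}^\transpose$; because $\tilde C\notin\mc{M}_{m\times n}(\Re)$ and $\vec{0}_{m\times n}\in\mc{M}_{m\times n}(\Re)$, we have $\tilde C\neq\vec{0}_{m\times n}$, so $\rank{\tilde C}=1$. Taking $\hat A=\tilde A$ and $\hat B=\gamma\tilde B$ gives $\hat A+\hat B=\tilde A+\gamma\tilde B=\tilde C$, a rank-$1$ matrix; Lemma \ref{lem:stratEqVec} applied with $\alpha_1=1$, $\beta_1=0$, $\alpha_2=1/\gamma>0$, $\beta_2=0$ (the point where $\gamma>0$ is used) yields strategic equivalence of $(m,n,\tilde A,\tilde B)$ and $(m,n,\hat A,\hat B)$.

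For $\rank{D}=1$ I would first record the key observation that the hypothesis $\tilde C\notin\mc{M}_{m\times n}(\Re)$ forces $\rank{\tilde C}=2$ in cases 2 and 3: a matrix of rank at most $1$ whose column space contains $\vec{1}_m$ must equal $\vec{1}_m\vec{w}^\transpose$ for some $\vec{w}$ and hence lie in $\mc{M}_{m\times n}(\Re)$, and symmetrically for the row space; combined with $\rank{\tilde C}<3$ this pins $\rank{\tilde C}$ at $2$. In case 2, $\vec{1}_m\in\colspan{\tilde C}$ supplies $\vec{x}_1$ with $\tilde C\vec{x}_1=\vec{1}_m$, and Lemma \ref{lem:existXY} applied to $\tilde C\neq\vec{0}_{m\times n}$ supplies $\vec{y}_1=\vec{e}_i$ with $\vec{y}_1^\transpose\tilde C\neq\vec{0}_n^\transpose$ and $\vec{1}_m^\transpose\vec{y}_1\neq 0$, so $w_1=\vec{y}_1^\transpose\tilde C\vec{x}_1=\vec{1}_m^\transpose\vec{y}_1\neq 0$ and the Wedderburn step is legitimate; with $\hat{\vec{u}}^\transpose:=w_1^{-1}\vec{y}_1^\transpose\tilde C$ it produces $\tilde C_2=\tilde C-\vec{1}_m\hat{\vec{u}}^\transpose$ with $\rank{\tilde C_2}=1$ by Theorem \ref{thm:wedderburn}. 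Then $\hat A+\hat B=(\tilde A-\vec{1}_m\hat{\vec{u}}^\transpose)+\gamma\tilde B=\tilde C-\vec{1}_m\hat{\vec{u}}^\transpose=\tilde C_2$ is rank-$1$, and Lemma \ref{lem:stratEqVec} with $\alpha_1=1$, $\beta_1=1$, $\vec{u}=\hat{\vec{u}}$, $\alpha_2=1/\gamma$, $\beta_2=0$ finishes the case. Case 3 is the transpose of case 2: $\vec{1}_n\in\colspan{\tilde C^\transpose}$ gives a $\vec{y}$ with $\vec{y}^\transpose\tilde C=\vec{1}_n^\transpose$, the choice $\vec{x}_2=\vec{e}_j$ for a nonzero-column index $j$ gives $w=\vec{1}_n^\transpose\vec{x}_2\neq 0$, one Wedderburn step yields $\tilde C_2=\tilde C-\hat{\vec{v}}\vec{1}_n^\transpose$ with $\hat{\vec{v}}:=w^{-1}\tilde C\vec{x}_2$ and $\rank{\tilde C_2}=1$, so $\hat A+\hat B=\tilde A+(\gamma\tilde B-\hat{\vec{v}}\vec{1}_n^\transpose)=\tilde C_2$ is rank-$1$, and Lemma \ref{lem:stratEqVec} with $\alpha_1=1$, $\beta_1=0$, $\alpha_2=1/\gamma$, and $\beta_2\vec{v}=\hat{\vec{v}}/\gamma$ closes the argument.

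None of the steps involves a heavy computation. The only genuinely delicate point is the reduction ``$\vec{1}_m\in\colspan{\tilde C}$ (respectively $\vec{1}_n\in\colspan{\tilde C^\transpose}$) together with $\tilde C\notin\mc{M}_{m\times n}(\Re)$ implies $\rank{\tilde C}=2$'': this is what guarantees that the single Wedderburn step lands exactly at rank $1$ rather than collapsing $\tilde C$ into $\mc{M}_{m\times n}(\Re)$. Everything else is a matter of selecting admissible Wedderburn vectors via Lemma \ref{lem:existXY} and of bookkeeping the affine-transformation constants so that Lemma \ref{lem:stratEqVec} applies, the crucial constraint being $\alpha_2=1/\gamma>0$, which is exactly why the statement requires $\gamma\in\Re_{>0}$.
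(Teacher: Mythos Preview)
Your proposal is correct and follows essentially the same approach as the paper: one Wedderburn step (or none, in case~1) followed by Lemma~\ref{lem:stratEqVec}. Your explicit verification that $\rank{\tilde C}=2$ in cases~2 and~3, together with the spelled-out affine constants, is slightly more careful than the paper's argument, which simply asserts the decomposition $\tilde C=\vec{1}_m\hat{\vec{u}}^\transpose+\hat{\vec{r}}\hat{\vec{c}}^\transpose$ and defers to ``a proof similar to Theorem~\ref{thm:rank1}.''
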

\begin{proof}
For case \ref{itm:thm:rank1Less3:rank1}, $\rank{D}=0$ and $\tilde{C}\notin\mc{M}_{m\times n}(\Re)$ implies that $\tilde{C}=\hat{\vec{r}}\hat{\vec{c}}^\transpose$. The result immediately follows from this observation.

For case \ref{itm:thm:rank1Less3:rank2OneM}, $\tilde{C}\notin\mc{M}_{m\times n}(\Re)$, $\rank{D}=1$, and $\vct{1}_m\in \colspan{\tilde{C}}$ implies that $\tilde{C}=\vct{1}_m\hat{\vct{u}}^\transpose+\hat{\vec{r}}\hat{\vec{c}}^\transpose$. Then, using a proof similar to Theorem \ref{thm:rank1}, it is straightforward to show that after applying the Wedderburn rank reduction formula we have   
\begin{align*}
    	\tilde{C}_2 &= \tilde{C}-w_1^{-1}\tilde{C}\vct{x}_1\vct{y}_1^\transpose \tilde{C}=\tilde{C}-\vct{1}_m\hat{\vct{u}}^\transpose,\\
    	&=\hat{\vec{r}}\hat{\vec{c}}^\transpose.
    \end{align*}
    
Then, with $\hat{A}$ and $\hat{B}$ as defined in case \ref{itm:thm:rank1Less3:rank2OneM}, we have that:
\begin{align*}
	\hat{A}+\hat{B}&=\tilde{A}+\gamma\tilde{B}-\vct{1}_m\hat{\vct{u}}^\transpose,\\
	&=\tilde{C}-\vct{1}_m\hat{\vct{u}}^\transpose,\\
	&=\tilde{C}_2,\\
	&=\hat{\vec{r}}\hat{\vec{c}}^\transpose.
\end{align*}
Therefore, $(m,n,\tilde{A},\tilde{B})$ is strategically equivalent to $(m,n,\hat{A},\hat{B})$ by Lemma \ref{lem:stratEqVec} and $(m,n,\hat{A},\hat{B})$ is a rank-$1$ game. The proof of case \ref{itm:thm:rank1Less3:rank2OneN} is similar to the proof of case \ref{itm:thm:rank1Less3:rank2OneM} and therefore omitted.
\end{proof}

\section{An Efficient Decomposition of \texorpdfstring{$\tilde{C}$}{} -- The Remaining Cases}\label{app:decompC}

In Theorem \ref{thm:rank1GammaUnknown},  we assumed the existence of a decomposition $\tilde{C}(\gamma)=M+D$, where  $D\in\mc{M}_{m\times n}(\Re)$ with $\rank{D}=2$. In this section, we present the decomposition for $\rank{D}<2$. This decomposition follows from Theorem \ref{thm:rank1Less3}, thus there are 3 cases to consider.

\begin{theorem}\label{thm:rank1GammaUnknownLess3}
Consider the game $(m,n,\tilde{A},\tilde{B})$.
 \begin{enumerate}
    	\item If there exists $\gamma^*\in\Re_{>0}$ such that $\rank{\tilde{A}+\gamma^*\tilde{B}}=1$ then $\rank{D}=0$, $\hat{A}= \tilde{A}$ and $\hat{B}=\gamma^*\tilde{B}$.\label{itm:thm:rank1GammaUnknownLess3:rank1}
        \item else; select any $i\in\{1\dots m\}$. Let $\bar{A}+\gamma\bar{B}=\tilde{A}+\gamma\tilde{B}-(\vec{1}_m\tilde{A}_{(i)}+\gamma\vec{1}_m\tilde{B}_{(i)})$. If there exists $\gamma^*\in\Re_{>0}$ such that $\rank{\bar{A}+\gamma^*\bar{B}}=1$ then $\rank{D}=1$, $\vct{1}_m\in \colspan{\tilde{C}}$, $\hat{A}=\tilde{A}-(\vec{1}_m\tilde{A}_{(i)}+\gamma^*\vec{1}_m\tilde{B}_{(i)})$, and $\hat{B}=\gamma^*\tilde{B}$.\label{itm:thm:rank1GammaUnknownLess3:rank2OneM}
        \item  else; select any $j\in\{1\dots n\}$. Let $\bar{A}+\gamma\bar{B}=\tilde{A}+\gamma\tilde{B}-(\tilde{A}^{(j)}\vec{1}_n^\transpose+\gamma\tilde{B}^{(j)}\vec{1}_n^\transpose)$. If there exists $\gamma^*\in\Re_{>0}$ such that $\rank{\bar{A}+\gamma^*\bar{B}}=1$ then $\rank{D}=1$, $\vct{1}_n\in \colspan{\tilde{C}^\transpose}$, $\hat{A}= \tilde{A}$, and $\hat{B}=\gamma^*\tilde{B}-(\tilde{A}^{(j)}\vec{1}_n^\transpose+\gamma^*\tilde{B}^{(j)}\vec{1}_n^\transpose)$.\label{itm:thm:rank1GammaUnknownLess3:rank2OneN}
 \end{enumerate}	
    	 If any of the three cases hold true, then the game $(m,n,\tilde{A},\tilde{B})$ is strategically equivalent to the rank-$1$ game $(m,n,\hat{A},\hat{B})$.   
\end{theorem}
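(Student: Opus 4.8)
The plan is to read Theorem~\ref{thm:rank1GammaUnknownLess3} as the $\rank{D}<2$ counterpart of Theorem~\ref{thm:rank1GammaUnknown}: in each of the three regimes I carry out exactly the ``$\vec{y}_1=\vec{e}_i$'' specialization of the Wedderburn rank reduction formula that produced \eqref{eq:C3LongForm}, but I stop after as many reduction steps as $\rank{D}$ requires (zero steps when $\rank{D}=0$, one step when $\rank{D}=1$), and then I invoke the matching item of Theorem~\ref{thm:rank1Less3} to conclude that $(m,n,\tilde{A},\tilde{B})$ is strategically equivalent to the exhibited rank-$1$ game. The key computational observation, reused in each case, is that choosing $\vec{y}_1=\vec{e}_i$ forces $w_1=\vec{e}_i^\transpose\vec{1}_m=1$, so the reduction step amounts to subtracting $\vec{1}_m$ times the $i$-th row of $\tilde{C}(\gamma)=\tilde{A}+\gamma\tilde{B}$ (and, symmetrically, $\vec{x}_2=\vec{e}_j$ subtracts a scaled column); this is purely algebraic in the entries of $\tilde{A},\tilde{B}$ and never references the unknown $\gamma$.

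I would handle the cases in the order they appear. Case~\ref{itm:thm:rank1GammaUnknownLess3:rank1}: $\rank{D}=0$ forces $D=\vec{0}_{m\times n}$, so no reduction is needed and $\tilde{C}(\gamma)=\hat{\vec{r}}\hat{\vec{c}}^\transpose$ is itself rank-$1$; hence the test is simply whether $\rank{\tilde{A}+\gamma^*\tilde{B}}=1$ for some $\gamma^*\in\Re_{>0}$, and success gives $\hat{A}=\tilde{A}$, $\hat{B}=\gamma^*\tilde{B}$ by Theorem~\ref{thm:rank1Less3}(\ref{itm:thm:rank1Less3:rank1}) (equivalently, directly by Lemma~\ref{lem:stratEqVec} with $\alpha_1=1,\alpha_2=\gamma^*,\beta_1=\beta_2=0$). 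Case~\ref{itm:thm:rank1GammaUnknownLess3:rank2OneM}: from $\rank{D}=1$ and $\vec{1}_m\in\colspan{\tilde{C}}$ there is $\vec{x}_1$ with $\tilde{C}\vec{x}_1=\vec{1}_m$, and with $\vec{y}_1=\vec{e}_i$ the single Wedderburn step yields $\tilde{C}_2=\tilde{C}-\vec{1}_m(\tilde{A}_{(i)}+\gamma\tilde{B}_{(i)})=\bar{A}+\gamma\bar{B}$ with $\bar{A},\bar{B}$ exactly as named in the statement; thus if $\rank{\bar{A}+\gamma^*\bar{B}}=1$ for some $\gamma^*\in\Re_{>0}$, Theorem~\ref{thm:rank1Less3}(\ref{itm:thm:rank1Less3:rank2OneM}) delivers $\hat{A}=\tilde{A}-(\vec{1}_m\tilde{A}_{(i)}+\gamma^*\vec{1}_m\tilde{B}_{(i)})$, $\hat{B}=\gamma^*\tilde{B}$ and the strategic equivalence. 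Case~\ref{itm:thm:rank1GammaUnknownLess3:rank2OneN} is the transpose-mirror: subtract the $j$-th column, take $\vec{x}_2=\vec{e}_j$, and invoke Theorem~\ref{thm:rank1Less3}(\ref{itm:thm:rank1Less3:rank2OneN}).

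The part I expect to take the most care is not any single computation but the soundness of the ``else'' cascade together with the matching of the explicit decomposition to the Wedderburn step: I must verify that the $\bar{A}+\gamma\bar{B}$ written in the statement really equals the once-reduced $\tilde{C}_2$ for the correct $\gamma=\gamma^*$, and that, once the case~\ref{itm:thm:rank1GammaUnknownLess3:rank1} test has been excluded, success of the case~\ref{itm:thm:rank1GammaUnknownLess3:rank2OneM} or case~\ref{itm:thm:rank1GammaUnknownLess3:rank2OneN} test genuinely corresponds to a decomposition $\tilde{C}(\gamma^*)=D+\hat{\vec{r}}\hat{\vec{c}}^\transpose$ with $\rank{D}=1$, so that Theorem~\ref{thm:rank1Less3} applies. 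For the latter I would lean on the hypotheses carried over from Theorem~\ref{thm:rank1Less3} ($\tilde{C}\notin\mc{M}_{m\times n}(\Re)$) together with Lemma~\ref{lem:Mplusrc}(\ref{itm:lem:MplusrcRank1}), which ties $\rank{D}=1$ to precisely the dichotomy $\vec{1}_m\in\colspan{\tilde{C}}$ or $\vec{1}_n\in\colspan{\tilde{C}^\transpose}$; subtracting the corresponding row (respectively column) is then exactly the reduction that removes the offending $\vec{1}_m$- (respectively $\vec{1}_n$-) component, leaving a rank-$1$ remainder whenever the game is of the claimed type. Checking this bookkeeping carefully, including that the rank-$1$ remainder is neither accidentally the zero matrix nor itself a member of $\mc{M}_{m\times n}(\Re)$ (which is where the prior exclusion of case~\ref{itm:thm:rank1GammaUnknownLess3:rank1} is used), is the crux of the argument.
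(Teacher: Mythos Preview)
Your proposal is correct and follows essentially the same approach as the paper, whose proof is the single line ``The proof follows from Theorem~\ref{thm:rank1Less3} and the discussion preceding Theorem~\ref{thm:rank1GammaUnknown}.'' You have simply unpacked this: specialize the Wedderburn step via $\vec{y}_1=\vec{e}_i$ (respectively $\vec{x}_2=\vec{e}_j$) to obtain the explicit $\bar{A}+\gamma\bar{B}$ in each case, and then invoke the matching item of Theorem~\ref{thm:rank1Less3}.

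One small remark: in your discussion of case~\ref{itm:thm:rank1GammaUnknownLess3:rank2OneM} you phrase things as ``from $\rank{D}=1$ and $\vec{1}_m\in\colspan{\tilde{C}}$ there is $\vec{x}_1$ \ldots'', treating these as hypotheses, whereas in the theorem statement they appear as conclusions and the actual hypothesis is only the existence of $\gamma^*>0$ with $\rank{\bar{A}+\gamma^*\bar{B}}=1$. For the operational conclusion (strategic equivalence to a rank-$1$ game) this is harmless, since that conclusion follows directly from Lemma~\ref{lem:stratEqVec}: $\hat{A}$ differs from $\tilde{A}$ by a matrix of the form $\vec{1}_m\vec{u}^\transpose$, $\hat{B}=\gamma^*\tilde{B}$ with $\gamma^*>0$, and $\hat{A}+\hat{B}=\bar{A}+\gamma^*\bar{B}$ is rank~$1$ by hypothesis. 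The paper is equally informal about the auxiliary assertions $\rank{D}=1$ and $\vec{1}_m\in\colspan{\tilde{C}}$, so your level of rigor matches or exceeds it.
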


\begin{proof}
 The proof follows from Theorem \ref{thm:rank1Less3} and the discussion preceding Theorem \ref{thm:rank1GammaUnknown}.
\end{proof}

\section{Extending ShortSER1 to SER1}\label{app:algAnalysis} 

Extending \textsc{ShortSER1} to \textsc{SER1} is rather straightforward. One simply has to consider each of the 3 cases from Theorem \ref{thm:rank1GammaUnknownLess3} in order, and then consider the case from Theorem \ref{thm:rank1GammaUnknown}.  If any of the 4 cases is true, then the algorithm calculates the appropriate game $(m,n,\hat{A},\hat{B})$ per the case and terminates. Each case from Theorem \ref{thm:rank1GammaUnknownLess3} is similar to the case from Theorem \ref{thm:rank1GammaUnknown}. Thus, each case can be checked in time $\bigO{(mn+M(\mc{L}))}$. Therefore, considering all 4 cases has time $\bigO{(mn+M(\mc{L}))}$. 

\bibliography{ser1_bib}

\newcommand{\noop}[1]{}
\begin{thebibliography}{10}

\bibitem{nash1951}
J.~Nash, ``Non-cooperative games,'' {\em Annals of Mathematics}, pp.~286--295,
  1951.

\bibitem{daskalakis2009complexity}
C.~Daskalakis, P.~W. Goldberg, and C.~H. Papadimitriou, ``The complexity of
  computing a {N}ash equilibrium,'' {\em SIAM Journal on Computing}, vol.~39,
  no.~1, pp.~195--259, 2009.

\bibitem{daskalakis2005three}
C.~Daskalakis and C.~H. Papadimitriou, ``Three-player games are hard,'' in {\em
  Electronic Colloquium on Computational Complexity}, vol.~139, pp.~81--87,
  2005.

\bibitem{chen2009}
X.~Chen, X.~Deng, and S.-H. Teng, ``Settling the complexity of computing
  two-player {N}ash equilibria,'' {\em Journal of the ACM (JACM)}, vol.~56,
  no.~3, p.~14, 2009.

\bibitem{von2007theory}
J.~Von~Neumann and O.~Morgenstern, {\em Theory of Games and Economic Behavior
  (Commemorative Edition)}.
\newblock Princeton University Press, 2007.

\bibitem{moulin1978strategically}
H.~Moulin and J.-P. Vial, ``Strategically zero-sum games: the class of games
  whose completely mixed equilibria cannot be improved upon,'' {\em
  International Journal of Game Theory}, vol.~7, no.~3-4, pp.~201--221, 1978.

\bibitem{adsul2011rank}
B.~Adsul, J.~Garg, R.~Mehta, and M.~Sohoni, ``Rank-1 bimatrix games: a
  homeomorphism and a polynomial time algorithm,'' in {\em Proceedings of the
  Forty-Third Annual ACM Symposium on Theory of Computing}, pp.~195--204, ACM,
  2011.

\bibitem{adsul2019}
B.~Adsul, J.~Garg, R.~Mehta, M.~Sohoni, and B.~von Stengel, ``Fast algorithms
  for rank-1 bimatrix games,'' 2018.
\newblock arXiv:1812.04611. submitted to Operations Research.

\bibitem{kontogiannis2012mutual}
S.~Kontogiannis and P.~Spirakis, ``On mutual concavity and
  strategically-zero-sum bimatrix games,'' {\em Theoretical Computer Science},
  vol.~432, pp.~64--76, 2012.

\bibitem{heyman2018SERO}
J.~L. Heyman and A.~Gupta, ``On the computation of strategically equivalent
  rank-$0$ games,'' 2019.
\newblock arXiv:1904.00450. submitted for publication.

\bibitem{ali2006}
C.~D. Aliprantis and K.~C. Border, {\em Infinite Dimensional Analysis: A
  Hitchhiker's Guide}.
\newblock Springer Science \& Business Media, 2006.

\bibitem{wedderburn1934lectures}
J.~H.~M. Wedderburn, {\em Lectures on Matrices}, vol.~17.
\newblock American Mathematical Soc., 1934.

\bibitem{chu1995rank}
M.~T. Chu, R.~E. Funderlic, and G.~H. Golub, ``A rank--one reduction formula
  and its applications to matrix factorizations,'' {\em SIAM Review}, vol.~37,
  no.~4, pp.~512--530, 1995.

\bibitem{gantmakher1959theory}
F.~R. Gantmacher, {\em The Theory of Matrices}, vol.~2.
\newblock Chelsea Publishing Company, 1959.

\bibitem{ikramov1993matrix}
K.~D. Ikramov, ``Matrix pencils: Theory, applications, and numerical methods,''
  {\em Journal of Soviet Mathematics}, vol.~64, no.~2, pp.~783--853, 1993.

\bibitem{alt1979square}
H.~Alt, ``Square rooting is as difficult as multiplication,'' {\em Computing},
  vol.~21, no.~3, pp.~221--232, 1979.

\end{thebibliography}
\bibliographystyle{ieeetr}
\end{document}